\documentclass[10pt,letter,final]{IEEEtran}
\usepackage{amsthm,amssymb,graphicx,multirow,amsmath,color,amsfonts}
\usepackage[update,prepend]{epstopdf}%
\usepackage[noadjust]{cite}
\usepackage{tikz}
\usetikzlibrary{arrows,calc}		
\usepackage{bbm} 
\usepackage{pdfpages}
\usepackage{tabulary}
\usepackage{multirow}
\usepackage{comment}
\usepackage{dsfont}
\usepackage{pgf,tikz}
\usepackage{paralist}{
\usepackage{amsfonts}
\usepackage{setspace,lipsum}
\usepackage{url}
\usepackage{bigints}
\usepackage{mathtools}

\usepackage[english]{babel}
\usepackage[autostyle]{csquotes}
\usepackage{textcomp}
\usepackage{varwidth}
\usepackage{tcolorbox}
\usepackage{lipsum}

\let\OLDthebibliography\thebibliography
\renewcommand\thebibliography[1]{
  \OLDthebibliography{#1}
  \setlength{\parskip}{0pt}
  \setlength{\itemsep}{0pt plus 0.3ex}
}

\newcommand{\cdf}{\mathtt{CDF}}
\newcommand{\ccdf}{\mathtt{CCDF}}
\newcommand{\pdf}{\mathtt{pdf}}
\newcommand{\pmf}{\mathtt{pmf}}

\newcommand\numberthis{\addtocounter{equation}{1}\tag{\theequation}}

\newcommand{\x}{\mathbf{x}}
\newcommand{\y}{\mathbf{y}}

\makeatletter
\renewcommand{\env@cases}[1][@{}l@{\quad}l@{}]{%
  \let\@ifnextchar\new@ifnextchar
  \left\lbrace
  \def\arraystretch{1.2}%
  \array{#1}%
}
\makeatother



\def\nb0{{\mathbf{0}}}
\def\nb1{{\mathbf{1}}}


\def\ncalA{{\mathcal{A}}}
\def\ncalB{{\mathcal{B}}}
\def\ncalC{{\mathcal{C}}}
\def\ncalD{{\mathcal{D}}}
\def\ncalE{{\mathcal{E}}}
\def\ncalF{{\mathcal{F}}}

\def\ncalP{{\mathcal{P}}}

\def\ncalR{{\mathcal{R}}}

\def\ncalZ{{\mathcal{Z}}}


\def\nbbE{{\mathbb{E}}}

\def\nbbP{{\mathbb{P}}}

\def\nbbR{{\mathbb{R}}}




\newtheorem{lemma}{Lemma}

\newtheorem{definition}{Definition}

\newtheorem{theorem}{Theorem}

\newtheorem{cor}{Corollary}

\newtheorem{assumption}{Assumption}


\def\argmax{\operatorname{arg~max}}
\def\E{\mathbb{E}}

\def\P{\mathbb{P}}

\def\p{p}

\def\R{\mathbb{R}}


\def\sir{\mathtt{SIR}}

\def\csr{\mathtt{CSR}}

\def\ec{\mathtt{EC}}
\def\sec{\mathtt{SEC}}










\allowdisplaybreaks 

\begin{document}
\graphicspath{{./Figures/}}
\title{Downlink Analysis of NOMA-enabled Cellular Networks with 3GPP-inspired User Ranking}
\author{
Praful D. Mankar,~\IEEEmembership{Member,~IEEE,} and Harpreet S. Dhillon,~\IEEEmembership{Senior~Member,~IEEE}
\thanks{The authors are with Wireless@VT, Bradley Department of Electrical and Computer Engineering, Virginia Tech, Blacksburg, VA. Email: \{prafuldm, hdhillon\}@vt.edu. This paper was presented in part at  IEEE Globecom 2019 \cite{mankar2019meta}. The support of the US National Science Foundation (Grant CNS-1814477) is gratefully acknowledged.
}
}

\maketitle
\begin{abstract} 
This paper provides a comprehensive downlink analysis of non-orthogonal multiple access (NOMA) enabled cellular networks using tools from stochastic geometry. As a part of this analysis, we develop a novel 3GPP-inspired user ranking technique to construct a user cluster for the non-orthogonal transmission by grouping users from the cell center (CC) and cell edge (CE) regions. This technique {allows the pairing of users} with distinct link qualities, which is imperative for harnessing  NOMA performance gains. The analysis is performed from the perspective of the {\em typical cell}, which is significantly different from the standard stochastic geometry-based approach of analyzing the performance of the {\em typical user}.
For this setting, we first derive the moments of the meta distributions for the CC and CE users under NOMA and orthogonal multiple access (OMA). Using this, we then derive the distributions of the transmission rates and mean packet delays under non-real time (NRT) and real-time (RT) service models, respectively, for both CC and CE users. Finally, we study two resource allocation (RA) techniques {to maximize the cell sum-rate} ($\csr$) under NRT service, and the sum effective capacity ($\sec$) under RT service. In addition to providing several useful design insights, our results demonstrate that  NOMA provides improved rate region and higher $\csr$ as compared to OMA. In addition, we also show that  NOMA provides better $\sec$ as compared to OMA for the higher user density.
\end{abstract}
\begin{keywords}
Non-orthogonal multiple access, user ranking, meta distribution, transmission rate, packet delay, cell sum-rate, effective capacity,  Poisson point process. 
\end{keywords}
\section{Introduction}
\label{sec:Introduction}
Owing to its superior  spectral efficiency,  {non-orthogonal multiple access (NOMA)} technique has emerged as a promising candidate for future wireless cellular networks.  Unlike traditional  {orthogonal multiple access (OMA)}, NOMA enables the  {base stations (BSs)} to concurrently serve multiple users using the same resource block (RB); see \cite{ding2017survey} and the references therein. 
In NOMA, the BS superimposes multiple layers of messages at different power levels and the user decodes its intended message using successive interference cancellation  (SIC) technique. In particular, the users with weaker channel qualities are assigned with higher powers so that their intra-cell interference is smaller.  For such power allocation, the user first  decodes and cancels interference power from the layers assigned to the users with weaker channels successively using SIC and then decodes its intended message.   {A key design aspect for NOMA is the pairing of users for non-orthogonal transmission. For example, consider  there is a user with good channel quality,  say UE$_1$, and a user with poor channel quality, say UE$_2$. In OMA, the associated BS ends up  allocating most of the time slots to UE$_2$ in order to ensure its quality of service (QoS) requirement which leads to lesser transmission opportunities for UE$_1$ and consequently smaller cell throughput. On the other hand, NOMA  allows the BS to concurrently serve both UE$_1$ and UE$_2$ using the same spectral resource, which provides uninterrupted channel access to UE$_1$ while meeting the required QoS constraints of UE$_2$. Although the UE$_1$'s {\em per-slot} throughput may suffer (compared to OMA) because of the need to decode both the messages (corresponding to UE$_1$ and UE$_2$), the overall transmission rate may still improve because of more transmission opportunities compared to OMA. As a result, NOMA can result in higher cell  ergodic capacity as compared to OMA \cite{Ding_NOMA_2014}. In this regard, it is beneficial to pair users with distinct link qualities. The readers can refer to \cite{ding2016impact} for more details on the effects of user pairing on NOMA performance.}
\subsection{Prior Art}
\label{subsec:PriorArt}
 The set of users scheduled for the non-orthogonal transmission in NOMA is often termed as the {\em user cluster}. To form a user cluster, one needs to first rank  users based on their channel gains (both large-scale path losses and small-scale fading effects) and perceived inter-cell interference, and then place users with distinct link qualities in the same cluster \cite{ding2016impact}.  However, since the  prior works on NOMA  are mostly focused on the {single-cell} NOMA analysis, the users are ranked solely based on  their distances from the BS \cite{choi2016power,liu2016cooperative} or based on their link qualities \cite{Ding_NOMA_2014,ding2015cooperative,timotheou2015fairness,zhu2017optimal}.    While such user ranking is meaningful in the context of {single-cell} analysis, it ignores  the impact of the {inter-cell} interference which is crucial for accurate performance analysis of NOMA  \cite{ali2017non}. 
{Recently, stochastic geometry has emerged as a popular choice for the analysis of large scale cellular networks \cite{haenggi2012stochastic,AndrewsGD16,blaszczyszyn_haenggi_keeler_mukherjee_2018}. However, incorporating sophisticated user ranking jointly based on the link qualities and inter-cell interference is challenging.} 
  This is because of the correlation in the corresponding desired and inter-cell interference powers received by the users within the same cell. Therefore, most of the existing works in this direction ignore these correlations and instead rank the users in the order of their mean desired signal powers (i.e., link distances) so that the $i$-th closest user becomes the $i$-th strongest user. 
Within this general direction, the  authors of \cite{ali2019downlink,ali2019downlinkLetter,salehi2018meta,salehi2018accuracy,tabassum2017modeling} analyzed $N$-ranked NOMA in cellular networks assuming that the BSs  follow a PPP. In \cite{tabassum2017modeling}, the uplink success probability is derived assuming that  the users follow a Poisson cluster process (PCP). In \cite{ali2019downlink}, the downlink success probability is derived while forming the user cluster within the indisk of the Poisson-Voronoi (PV) cell. However, this may underestimate the NOMA performance gains because users within the indisk of a PV cell will usually experience similar channel conditions and hence lack channel gain imbalance that results in the NOMA gains (see \cite{ding2016impact}).   
{By ranking the users based on their link distances, the authors of \cite{salehi2018meta,ali2019downlinkLetter} derive the moments of the {\em meta distribution}  of the downlink signal-to-interference ratio ($\sir$)  \cite{Martin2016Meta}.} However, \cite{salehi2018meta} ignores the joint decoding of the subset of layers associated with SIC.   
 {The authors of  \cite{ali2019downlinkLetter,salehi2018meta,salehi2018accuracy,tabassum2017modeling}  use the distribution of the typical link distance (in the network) to derive the order statistics of link distances of clustered users.  
As implied already, this ignores the correlation in the user locations placed in a PV cell which are a function of the BS point process.} A key unintended consequence of this approach is that it does not necessarily confine the user cluster in a PV cell, which is a significant approximation of the underlying setup  (see Fig. \ref{fig:illustration_PriorNOMAModel}, Middle and Left). 
The spectral efficiency of the $K$-tier heterogeneous cellular networks is analyzed in \cite{liu2017non} wherein the small cells serve their users using two-user NOMA with distance-based ranking. 
On similar lines, \cite{zhang2017downlink} derives the outage probability for  two-user downlink NOMA cellular networks, modeled as a PPP, by ranking the users based on the channel gains normalized by their received inter-cell  interference powers.  {The normalized gains are assumed to be independent and identically distributed (i.i.d.) and follow the distribution that is observed by a typical user in the network. This indeed ignores the correlation in the link distances as well as the inter-cell interference powers associated with the users placed in the same PV cell.}

A more reasonable way of accurately ranking the users is to form a user cluster by selecting users from distinct regions of the PV cell. One way of constructing these regions is based on the ratio of mean powers received from the serving and dominant interfering BSs. In particular, a PV cell can be divided into the cell center (CC) region, wherein the ratio is above a threshold $\tau$, and the cell edge (CE) region, wherein the ratio is below $\tau$. 
A similar approach of classifying users as CC and CE is also used in 3GPP studies to analyze the performance of schemes such as soft frequency reuse (SFR) \cite{dominique2010self}. Inspired by this, we characterize the CC and CE users based on their path-losses from the serving and dominant interfering BSs to pair them for the two-user NOMA system. While the proposed approach can be directly extended to the $N$-user NOMA using $N-1$ region partitioning thresholds, we focus on the two-user NOMA for the ease of exposition. {The proposed user pairing technique helps to construct the user cluster with distinct link qualities, which is essential for NOMA performance \cite{ding2016impact}. This is because of two key reasons: 1) the order statistic of received powers at different users in the cell is dominated by their  corresponding path-losses  \cite{wildemeersch2014successive},  and 2) the dominant interfering BS contributes most of the interference power  in the PPP setting \cite{Vishnu2017UAV}.}

Besides user ranking,  {resource allocation (RA)} is an integral part of the NOMA design.   On these lines, the prior works have investigated RA for the downlink NOMA system using a variety of performance metrics, such as  user fairness \cite{timotheou2015fairness,Choi2016Fairness,zhu2017optimal} and weighted sum-rate maximization \cite{zhu2017optimal,sun2017optimal,Parida_NOMA}.  {Further, the RA problems for maximizing the sum-rate \cite{zhu2017optimal,ali2019downlink,Wang2016PA} or the energy efficiency \cite{zhu2017optimal,Zhang2017EE,fang2016energy} subject to a minimum transmission rate constraint have also been investigated. }
 {Besides,  \cite{Chen_NOMA_OMA_RA} has demonstrated that the achievable {sum-rate} in  NOMA is always higher than that of the OMA under minimum transmission rate constraints.} {These RA formulations are meaningful for the full buffer  {non-real time (NTR)} services, such as file downloading, wherein transmission rates usually determine the QoS. However, they are not suitable for delay-sensitive applications with  {real time (RT)} traffic, such as video streaming and augmented reality, which are becoming even more critical in the context of newly emerging applications of wireless communication.} {For such applications, it is essential that the RA formulations explicitly include the delay constraints.} In this context, the effective capacity ($\ec$), defined in \cite{wu2003EffectiveCapacity} as the maximum achievable arrival rate that satisfies a delay QoS constraint, is analyzed for NOMA in \cite{yu2018link,Xiao2019LowLatency,Choi2017EffectiveCapacityDelayQoS} for the downlink case and in \cite{Qiao2012TransStrategy,Sebastian} for the uplink case. 
However, the existing works on the delay analysis of NOMA are relatively sparse. {Besides, the  works mentioned above on RA with the focus  on  $\csr$ (i.e., \cite{Chen_NOMA_OMA_RA,timotheou2015fairness,Choi2016Fairness,zhu2017optimal,Wang2016PA, Zhang2017EE,fang2016energy}) and $\ec$  (i.e., \cite{yu2018link,Xiao2019LowLatency,Choi2017EffectiveCapacityDelayQoS, Qiao2012TransStrategy,Sebastian})  are limited to the single-cell setting, and hence {they} ignore the impact of inter-cell interference.}
\subsection{Contributions}
\label{subsec:Contribution}
{The primary objective of this paper is to enable the accurate downlink NOMA analysis of cellular networks from the perspective of stochastic geometry.  {As we discussed in Section  \ref{subsec:PriorArt}, the existing stochastic geometry-based NOMA analyses do not capture correlation between signal qualities of the paired users that is induced by the fact that these users are located in the same PV cell. In addition, these analyses also do not explicitly pair users with distinct signal qualities which is crucial for harnessing performance gains using NOMA. The user pairing technique presented in this paper overcomes the above limitations while enabling tractable system-level analysis of downlink NOMA using stochastic geometry. }
Our approach focuses on the performance of the {\em typical cell}, which departs significantly from the standard approach of analyzing  the performance of the {\em typical user}   in a cellular network that is selected independently of the BS locations \cite{Praful_TypicalCell,Praful_TypicalCell_MetaDis}.  
The key contributions of our analysis are briefly summarized below. 
\begin{enumerate}
\item  {This paper presents a novel 3GPP-inspired user pairing technique for NOMA to accurately select the CC and CE users with distinct link qualities. }
\item We derive approximate yet accurate moments of the meta distributions for the CC and CE users belonging to the typical cell under both NOMA and OMA systems.  Besides, we also provide tight beta distribution approximations of the meta distributions.
\item Next, we  derive approximate distributions of the mean transmission rates and the upper bounds on the distributions of the mean packet delays of the CC and CE users for both NOMA and OMA systems under the random scheduling scheme.
\item Finally, we present RA formulations for NRT and RT services under both NOMA and OMA systems. For the NRT service, the objective is to maximize $\csr$ such that minimum transmission rates of the CC and CE users are satisfied. For the RT service, the aim is to maximize  {sum $\mathtt{EC}$ ($\sec$)} such that the minimum  $\ec$s of CC and CE services are achieved. We  present an efficient approach to obtain the near-optimal RAs for these formulations.
\item  Our numerical results demonstrate that: a) NOMA  is beneficial to both CC and CE users as compared to OMA except when the OMA schedules the CC users for most of the time, b) the proposed near-optimal RA under NRT services provide higher $\csr$ in NOMA as compared to that of the OMA, and c) the proposed near-optimal RA under RT services provide improved $\sec$ in NOMA as compared to that of the OMA at a higher user density.  
\end{enumerate} }
\section{System Model}
\label{sec:SystemModel}
\subsection{Network Modeling}
We model the locations of BSs  using the homogeneous PPP $\Phi$ with density $\lambda$.  {This paper considers the strongest mean power-based BS association policy. Hence, the coverage region of the BS at $\x\in\Phi$ becomes the PV cell $V_\x$ which is given by}  $V_\x=\{\y\in\mathbb{R}^2: \|\y-\x\|\leq \|\y-\x'\|,\forall \x'\in\Phi\}.$
Let $V_{\x c}$ and $V_{\x e}$ be the CC and CE regions, respectively, of the PV  cell $V_\x$ corresponding to the BS at $\x\in\Phi$, which are defined as 
\begin{equation}
\begin{split}
V_{\x c}&=\{\y\in V_\x : \|\y-\x\|\leq \min_{\x'\in\Phi_\x}\tau\|\y-\x'\|\}
\\\text{ and }
V_{\x e}&=\{\y\in V_\x : \|\y-\x\|> \min_{\x'\in\Phi_\x}\tau\|\y-\x'\|\},
\end{split}
\label{eq:CC_CE_Regions}
\end{equation}
where $\Phi_\x=\Phi\setminus\{\x\}$ and  $\tau\in(0,1)$ is the boundary threshold.  Fig. \ref{fig:illustration_PriorNOMAModel} (Left) depicts the CC and CE regions for $\tau=0.7$. Now, similar to \cite{Priyo2019FPR}, extending the application of the Type I user point process \cite{Haenggi2017}, we define the point processes of the locations of the CC and CE users  as 
\begin{equation}
\begin{split}
\Psi_{cc}&=\{U(V_{\x c};N_{\x c}): \x\in \Phi \}\\
\text{and}~\Psi_{ce}&=\{U(V_{\x e};N_{\x e}): \x\in \Phi \},
\end{split}
\label{eq:CC_CE_pps}
\end{equation}
respectively, where $U(A;N)$ denotes $N$ points chosen independently and uniformly at random from the set $A$.  {Here,  $N_{\x c}$ and $N_{\x e}$ are the numbers of CC users in $V_{oc}$ and CE users in $V_{oe}$, respectively, where $\mu$ represents the user density. We assume that  $N_{\x c}$ and $N_{\x e}$ follow the zero-truncated Poisson distributions with means $\nu|V_{\x c}|$ and $\nu|V_{\x e}|$, respectively.}   We refer to $N_{\x c}$ and $N_{\x e}$ as the {\em CC} and {\em CE loads} of the BS at $\x$.   {Table I summarizes the system design variables and performance metrics considered in this paper. }

Since by the Slivnyak\textquotesingle s theorem, conditioning on a point is the same as adding a point to the PPP, we consider that the nucleus of the {\em typical cell} of the point process $\Phi\cup\{o\}$ is located at the origin $o$.  Thus, the typical cell becomes $
V_o=\{\y\in\mathbb{R}^2\mid \|\y-\x\|>\|\y\|\,\forall \x\in\Phi\}$.
For this setting, the locations the {\em typical CC user} and the {\em typical CE user} of $\Psi_{cc}$ and $\Psi_{ce}$ can be modeled using the uniformly distributed points in $V_{oc}$ and $V_{oe}$, respectively. Thus, $\Phi$ becomes the point process of the interfering BSs that is observed by the typical CC and CE users.

 \begin{table*}
\centering
\caption{}
\label{table:Syatem_Variable}
\hspace{-5mm}{\small \begin{tabular}{ |c |c||c|c| }
\hline
\multicolumn{2}{|c||}{\textbf{System design variables }} & \multicolumn{2}{c|}{\textbf{Performance metrics}}\\ \hline 
 BS point process  & $\Phi$  & Meta distribution for CC users & $\bar{F}_c(\cdot,\cdot)$ \\ \hline
 BS and user densities & $\lambda$ and $\nu$ & Meta distribution for CE users  &  $\bar{F}_e(\cdot,\cdot)$ \\ \hline
 CC and CE users point processes   & $\Psi_{cc}$ and  $\Psi_{ce}$ & $b$-th moments of meta distributions & $M_b^c$ and $M_b^e$\\  \hline
 PV cell associated with BS at $\mathbf{x}$ & $V_\mathbf{x}$ & Transmission rates of CC and CE users & $R_c$ and $R_e$\\ \hline
  CC and CE regions of PV cell $V_x$   & $V_{\mathbf{x}c}$ and  $V_{\mathbf{x}e}$ & Service rates of CC and CE users & $\mu_c$ and $\mu_e$ \\  \hline
 Number of CC and CE users     & $N_{\mathbf{x}c}$ and $N_{\mathbf{x}e}$ & $\mathtt{CDF}$ of CC user's transmission rate & $\mathcal{R}_c(\cdot,\cdot)$ \\  \hline
 Service link distance   & $R_o$  &  $\mathtt{CDF}$ of CE user's transmission rate & $\mathcal{R}_e(\cdot,\cdot)$\\ \hline
 Dominant interfering link distance   & $R_d$  & $\mathtt{CDF}$ of upper bound on CC user's delay & $\mathcal{D}_c(\cdot,\cdot)$  \\ \hline
Boundary threshold for regions  & $\tau$  & $\mathtt{CDF}$ of upper bound on CE user's delay &   $\mathcal{D}_e(\cdot,\cdot)$  \\ \hline
 Transmission layers  & $\mathtt{L_c}$ and $\mathtt{L_e}$  & Cell sum rate      & $\mathtt{CSR_{NOMA}}$\\ \hline
 $\mathtt{SIR}$ thresholds & $\beta_c$ and $\beta_e$ &  Effective capacity of CC service  & $\mathtt{EC}^c_{\mathtt{NOMA}}$  \\ \hline
 Power allocated to $\mathtt{L_c}$ and $\mathtt{L_e}$ & $\theta $ and $(1-\theta)$ &  Effective capacity of  CE service  &  $\mathtt{EC}^e_{\mathtt{NOMA}}$ \\ \hline
\end{tabular}}
\end{table*} 

Let $R_o=\|\y\|$ be the {\em service link distance}, i.e., the distance between the user at $\y\in V_o$ and its serving BS at $o$. Let $R_d=\|\x_d-\y\|$ be the distance from the user at $\y\in V_o$ to its dominant interfering BS at $\x_d\in\Phi$ where $\x_d=\argmax_{\x\in\Phi}\|\x-\y\|^{-\alpha}$ and $\alpha$ is the path-loss exponent. Therefore, the definitions, given in \eqref{eq:CC_CE_Regions} and \eqref{eq:CC_CE_pps}, implicitly classify the CC and CE users  based on their distances (i.e., path-losses) from their serving and dominant interfering BSs such that the CC user at $\y\in V_{oc}$ has $R_o\leq \tau R_d$ and the CE user at $\y\in V_{oe}$ has $R_o> \tau R_d$.
\begin{figure*}
 \centering  
\hspace{-1.5cm}  \includegraphics[trim=.85cm 1cm .85cm .5cm, width=.35\textwidth]{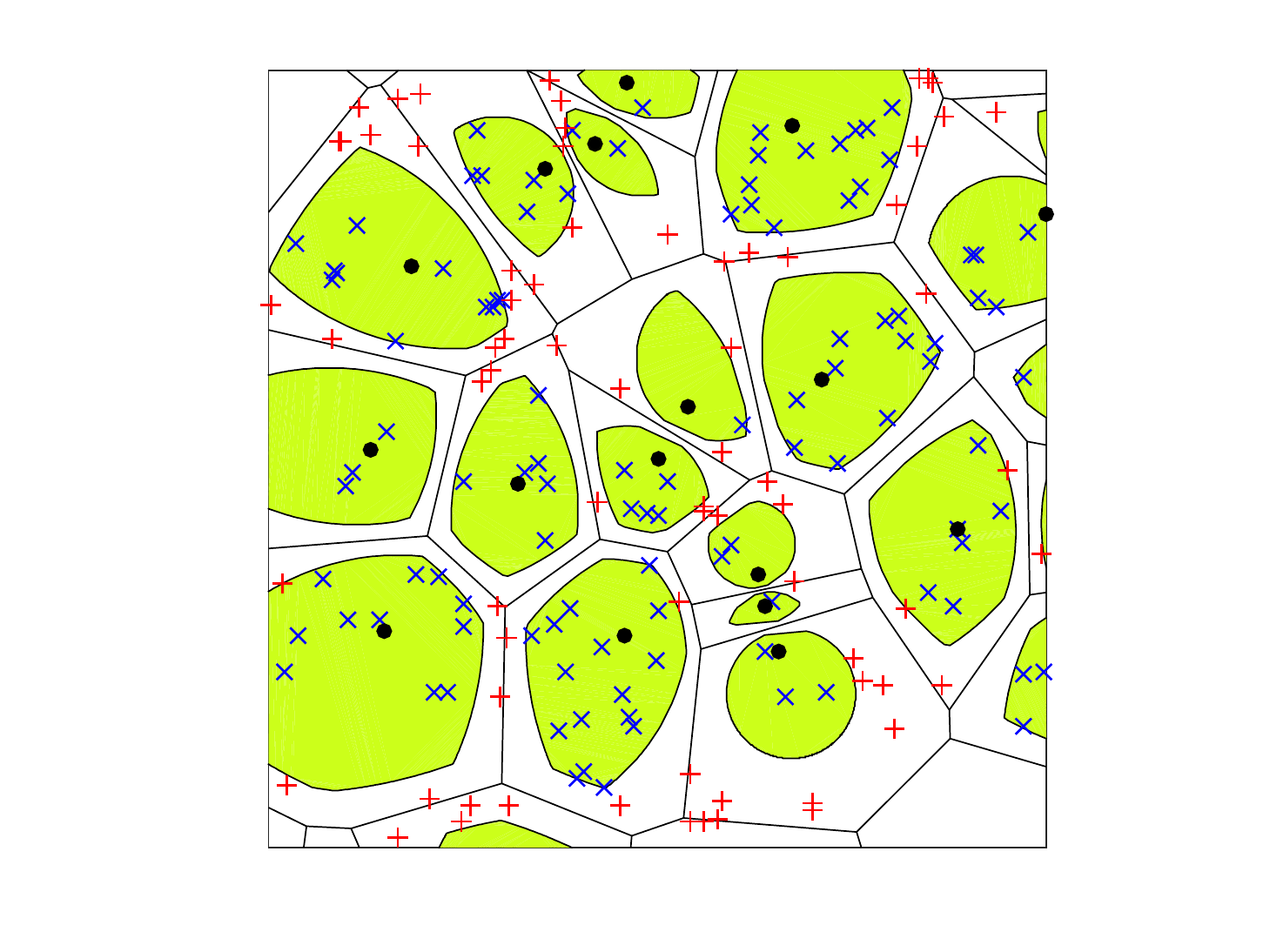}\hspace{-.5cm} 
\includegraphics[trim=.85cm 1cm .85cm .5cm, width=.35\textwidth]{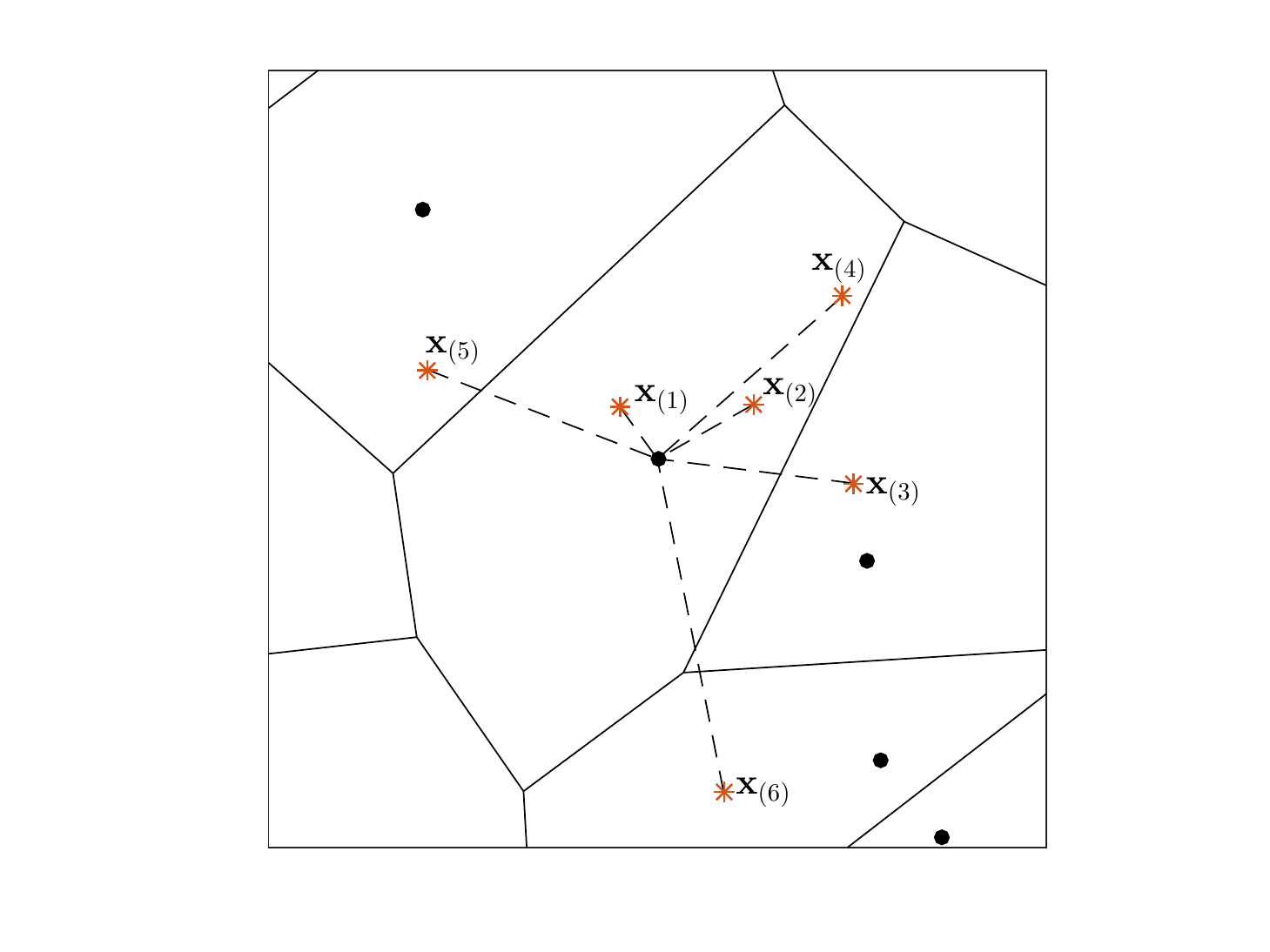}\hspace{-.1cm} 
\includegraphics[trim=.85cm 1cm .85cm .5cm, width=.35\textwidth]{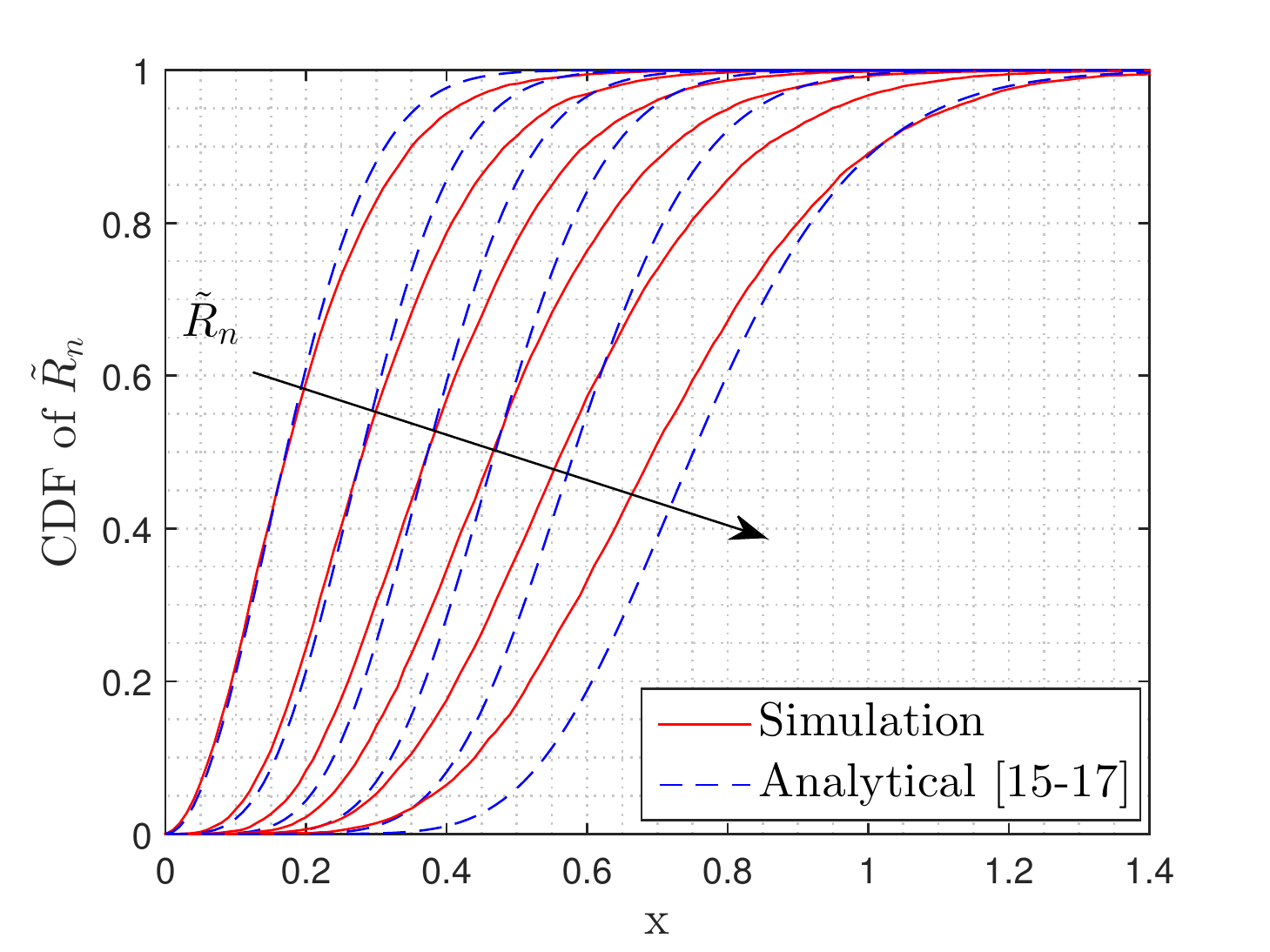}\vspace{-.2cm}
\caption{Left: typical realization of $\Psi_{cc}$ and $\Psi_{ce}$ for $\tau=0.7$, $\lambda=1$, and $\nu=20$.  Middle:  an illustration of the user cluster from  \cite{ali2019downlinkLetter,salehi2018meta,salehi2018accuracy} for $N=6$ (and the fact that the PV cell not necessarily confines the user cluster). Right: the distributions of the ordered link distances modeled in \cite{ali2019downlinkLetter,salehi2018meta,salehi2018accuracy} for $N=6$ and $\lambda=1$. The dot, cross, plus, and star markers correspond to BSs, CC users, CE users, and user cluster,  respectively. The green and white colors in the left figure show the CC and CE regions.}\vspace{-3mm}
\label{fig:illustration_PriorNOMAModel}
\end{figure*}
Fig. \ref{fig:illustration_PriorNOMAModel} (Left) illustrates a typical realization of $\Psi_{cc}$ and $\Psi_{ce}$. From this figure, it is clear that \eqref{eq:CC_CE_Regions} accurately preserves the
CC and CE regions wherein the $\sir$ is expected to be higher and lower, respectively. As a comparison, Fig. \ref{fig:illustration_PriorNOMAModel} (Middle) illustrates a realization of a user cluster that results from the distance-based ranking scheme {that is employed in}  \cite{ali2019downlinkLetter,salehi2018meta,salehi2018accuracy}. As is clearly evident from the figure, the user cluster is not confined to the PV cell, which is an unintended consequence of ignoring correlation in the user locations. 
This can also be verified by comparing the distributions of the ordered distances used in \cite{ali2019downlinkLetter,salehi2018meta,salehi2018accuracy} with those obtained from the simulations. This comparison is given in Fig. \ref{fig:illustration_PriorNOMAModel} (Right) wherein $\tilde{R}_n$ is the link distance of the $n$-th closest user from the BS. 

It should be noted here that {the user clustering described in the above is different} from the {\em geographical} clustering of users and BSs that have been recently modeled using PCPs, e.g., see \cite{Mehrnaz2018,Praful_UserClustering}. While the PCPs are useful in capturing the spatial coupling in the locations of the users and BSs, they do not necessarily partition users in different groups based on their QoS as done in the above scheme. Therefore, the proposed technique to form user clusters  can apply to more general studies that may require to partition the users based on their QoS experiences, such as soft frequency reuse \cite{dominique2010self}. Now, we discuss the downlink NOMA transmission for a randomly selected pair of the CC and CE users in the typical cell in the following subsection.
\subsection{Downlink NOMA Transmission based on the Proposed User Pairing}
Each BS is assumed to transmit signal superimposed of two layers corresponding to the messages for the CC and CE users forming a user cluster. Henceforth, the layers intended for the CC and CE users are referred to as the $\mathtt{L}_c$ and $\mathtt{L}_e$ layers, respectively.
The $\mathtt{L}_c$ and $\mathtt{L}_e$ layers are encoded at power levels of $\theta P$ and $(1-\theta)P$, respectively, where $P$ is the transmission power per RB and $\theta \in(0,1)$. Without loss of generality, we assume $P=1$ (since we ignore thermal noise). Usually, NOMA allocates more power to the weaker user so that it receives smaller intra-cell interference power compared to the desired signal power. 
Thus, the CC user first decodes the $\mathtt{L}_e$ layer while treating the power assigned to the $\mathtt{L}_c$ layer as interference. After successfully decoding the $\mathtt{L}_e$ layer, the CC user cancels its signal using SIC from the received signal and then decodes the $\mathtt{L}_c$ layer. 
The $\sir$s of the typical CC user at $\y\in V_{oc}$ for decoding the $\mathtt{L}_c$ and $\mathtt{L}_e$ layers are
\begin{align}
\sir_{e}&=\frac{h_{o}R_o^{-\alpha}(1-\theta)}{\theta h_{o}R_o^{-\alpha}+I_{\Phi}}~
\text{and~} \sir_{c}=\frac{h_{o}R_o^{-\alpha}\theta}{I_{\Phi}},
\label{eq:SIR}
\end{align}
respectively, where $I_{\Phi}=\sum_{\x\in\Phi}  h_{\x}\|\x-\y\|^{-\alpha} $ represents the aggregate inter-cell interference resulting from the set of BSs in $\Phi$ and $h_{\x}\sim \exp(1)$ are i.i.d. fading channel gains. 
Besides, the CE user decodes {the} $\mathtt{L}_e$ layer while treating the power assigned to the $\mathtt{L}_c$ layer as  interference. Thus, the effective $\sir$ of the typical CE user at $\y\in V_{oe}$ is also $\sir_{e}$ {as} given in \eqref{eq:SIR}. Note that we assumed a full-buffer system in \eqref{eq:SIR}, which is a common assumption in the stochastic geometry literature and is quite reasonable for NRT services for which the objective is to maximize the {sum-rate}. We will further argue in Subsection \ref{subsec:SchedulinThroughputDelay} that this simple setting also provides useful bounds on the delay-centric metrics. 
\subsection{Meta Distribution for the Downlink NOMA System}
\label{subsec:MetaDistribution}
The success probabilities for the CC and CE users are defined as the probabilities that the typical CC and CE users can decode their intended messages.  {These success probabilities provide the mean performance of the typical CC and CE users in the network. However, they do not give any information on the disparity in the link performance of the CC and CE users spread across the network.}  For that purpose,  the distribution of the conditional success probability (conditioned on the locations of BS  w.r.t. to the CC/CE user location) can be useful. The distribution of the conditional success probability is referred to as the {\em meta distribution} \cite{Martin2016Meta}. The meta distribution for the CC/CE user can be used to answer questions like ``what percentage of the CC/CE users can establish their links with the transmission reliability above a predefined threshold for a given $\sir$ threshold?". Building on the  definition of the meta distribution in \cite{Martin2016Meta}, we define the meta distributions for the CC and CE users  under NOMA as below.
\begin{definition}
The meta distribution of the typical CC user's success probability is defined as 
\begin{equation}
\bar{F}_{\text{c}}(\beta_c,\beta_e;x)=\P[\p_c(\beta_c,\beta_e\mid\y, \Phi)>x],
\end{equation}
and the meta distribution of the typical CE user's success probability is defined as 
\begin{equation}
\bar{F}_{\text{e}}(\beta_e;x)=\P[\p_e(\beta_e\mid\y,\Phi)>x],
\end{equation}
where $x\in[0,1]$, $\beta_c$ and $\beta_e$ are the $\sir$ thresholds corresponding to the $\mathtt{L}_c$ and $\mathtt{L}_e$ layers, respectively.
$p_c(\beta_c,\beta_e\mid\y,\Phi)=\P[\sir_{c}\geq \beta_c,\sir_{e}\geq \beta_e\mid\y, \Phi]$ and $p_e(\beta_e\mid \y,\Phi)=\P[\sir_{e}\geq \beta_e\mid \y,\Phi]$ are the success probabilities of the typical CC and CE users conditioned on their locations at $\y$ and the point process $\Phi$ of the interfering BSs, respectively. 
\end{definition}

\subsection{Traffic Modeling, Scheduling, and  Performance Metrics}
\label{subsec:SchedulinThroughputDelay}
   For the above setting, we will perform  a comprehensive load-aware performance analysis of the CC/CE users under the NOMA system. For this,  we consider {\em random scheduling} wherein each BS randomly selects a pair of CC and CE users within its PV cell for the NOMA transmission in a given time slot. For the NRT services, our objective is to maximize the sum-rate.  Since the network is assumed to be static,  the load-aware transmission rate of CC/CE user at $\y$ depends on its {\em scheduling probability} and {\em successful transmission probability}, both conditioned on $\Phi$. {As already implied above, each CC/CE user within a given cell is equally likely to be scheduled in each time slot. Besides, $\sir$s experienced by the CC/CE user at $\y$ are i.i.d.~across the time slots for given $\Phi$.}  
  Each BS transmits signal superimposed with $\mathtt{L}_c$ and $\mathtt{L}_e$ layers which are encoded at rates $\mathtt{B}\log_2(1+\beta_c)$ and $\mathtt{B}\log_2(1+\beta_e)$ bits/sec, respectively, where $\mathtt{B}$ is the channel bandwidth. Without loss of generality, here onwards we assume $\mathtt{B}=1$.
 Therefore,  according to Shannon's capacity law, the user requires $\sir_c$ and $\sir_e$  above thresholds $\beta_c$ and $\beta_e$, respectively, to successfully decode these layers.  Hence, for given $\Phi$, the achievable transmission rates of the CC and CE users located at $\y\in V_o$  under random scheduling respectively become 
 \begin{equation}
 \begin{split}
R_c(\y,\Phi)&=\frac{p_c(\beta_c,\beta_e\mid\y,\Phi)}{N_{oc}} \log_2(1+\beta_c)\\ \text{and~}R_e(\y,\Phi)&=\frac{p_e(\beta_e\mid\y,\Phi)}{N_{oe}} \log_2(1+\beta_e).
\end{split}\label{eq:RateCondPhi_CC_CE}
\end{equation}
\indent The characterization of the typical CC/CE user's success probability under the full-buffer system is also useful in obtaining an upper bound on its  delay performance for more general traffic patterns. This can subsequently be used to derive lower bounds on the $\sec$ for the given RT service.  {The exact delay analysis for the cellular networks is known to be challenging because of the coupled queues at different BSs. Generally, the performance of coupled queues is studied by employing meaningful {\em modifications} to the system \cite{Rao_1988}; see \cite{bonald2004wireless,Haenggi_Stability,Zhong_SpatoiTemporal} for a small subset of relevant works in the context of cellular networks.} {For a modified system, the general approach includes the full buffer assumption for the interfering links. As a result, the queue associated with the link of interest operates independently of the statues of queues associated with the interfering links. Note that such a modified system is consistent with our system set-up that is assumed from the very beginning.}  Since this effectively underestimates the success probability, it provides an upper bound on the packet transmission delay. 
In the same spirit, this paper presents an upper bound on the distribution of the conditional mean delay of the typical link, which is  tighter  for the higher load scenarios.  {One can, of course, determine the mean delay of the typical link by adopting more sophisticated analyses, such as the mean cell approach \cite{Bartllomiej2016}.  However, the contributions of this paper revolve around  accurate downlink NOMA analysis using a new user pairing scheme because of which these other approaches are out of the scope of the current paper. }
{We assume that the typical CC/CE user has a dedicated queue of infinite length which is placed at its serving BS. The packet arrival process of the CC/CE  service  is assumed to follow the Bernoulli distribution with {a} mean of $\varrho_c$/$\varrho_e$ packets per slot. The  packet sizes of the CC and CE services are considered  to be equal to $\mathtt{TB}\log_2(1+\beta_c)$ and $\mathtt{TB}\log(1+\beta_e)$ bits, respectively, where $\mathtt{T}$ is the slot duration. } Note that the generalized values of $\sir$ thresholds $\beta_c$ and $\beta_e$ facilitate the choice of selecting CC and CE services with different packet sizes.   
{Thus, the successful packet transmission rates (in packets per slot) for the typical CC and CE users at $\y$ given $\Phi$  respectively become }
\begin{equation}
\begin{split}
\mu_{c}(\y,\Phi) &=\frac{p_c(\beta_c,\beta_e\mid\y,\Phi)}{N_{oc}} \\
\text{ and~} \mu_{e}(\y,\Phi) &= \frac{p_e(\beta_e\mid\y,\Phi)}{N_{oe}}.
\end{split}\label{eq:DelayCondPhi_CC_CE}
\end{equation} 
\indent Besides, we also analyze the load-aware performances of the CC and CE users under above discussed services for the OMA system.  {For OMA system, we consider that each BS  schedules  one of its associated CC (CE) users that is chosen uniformly at random in a given time slot if $r\leq \eta$ ($r>\eta$) where $r\sim U([0,1])$ is generated independently across the time slots.}
The parameter $\eta$ allows to control the frequency of scheduling of the CC and CE users in order to meet their QoS requirements. The CC/CE load of the typical cell, and hence the scheduling probability of typical CC/CE user, depends on $\Phi$ (see \eqref{eq:CC_CE_pps}). 
Thus, from \eqref{eq:RateCondPhi_CC_CE}-\eqref{eq:DelayCondPhi_CC_CE}, it is quite evident that the exact analysis requires the joint statistical characterization of the success probability and scheduling probability. However, such joint characterization is challenging as the distribution (even the moments)  of the area of the PV cell $V_o$ conditioned on  $\y\in V_o$ is difficult to obtain. Hence, similar to \cite{Zhong_SpatoiTemporal,Priyo2019FPR}, we adopt the following reasonable assumption in our analysis.
\begin{assumption}
\label{assumption:Independence_PVCellArea_SuccessProb}
We assume that the  CC/CE load (or,  the scheduling probability) and the successful transmission probability observed by the typical CC/CE user are independent.  
\end{assumption}
The numerical results presented in Section \ref{sec:NumericalResults} will demonstrate the accuracy of this assumption for the analysis of the metrics discussed above. 
\section{Meta Distribution Analysis for the CC and CE users}
\label{sec:MetaDistributionsAnalysis}
The main goal of this Section is to present the downlink meta distribution analysis for the NOMA and OMA systems. As stated already in Section \ref{subsec:Contribution}, we characterize the performance of the {\em typical cell} which departs significantly from the standard stochastic geometry approach of analyzing the performance of the typical user. The key intermediate step in the meta distribution analysis is the joint characterization of the service link distance $R_o=\|\y\|$, where $\y\sim U(V_o)$, and the point process $\Phi$ of the interfering BSs. Hence, to enable the analysis of the meta distributions of the CC and CE users in the typical cell, we require the joint characterizations of $R_o$ and $\Phi$ under the conditions of $R_o\leq R_d\tau$ and $R_o>R_d\tau$. For this, we first determine the marginal and joint probability density functions ($\pdf$s) of $R_o$ and $R_d$ for the CC  and CE users.
However, given the complexity of the analysis of r.v. $R_o$ \cite{PraPriHar}, it is reasonable to assume that the exact joint characterization of $R_o$ and $R_d$ is equally, if not more, challenging.
The marginal distribution of $R_o$ is generally approximated using the contact distribution {with  adjusted density by} a correction factor (c.f.) to maintain tractability  \cite{Haenggi2017}. Thus, we approximate the joint $\pdf$ of $R_o$ and $R_d$ using the joint $\pdf$ of the distances to the two nearest points in PPP as 
\begin{align}
f_{R_o,R_d}(r_o,r_d)&=(2\pi\rho\lambda)^2r_or_d\exp(-\pi\rho\lambda r_d^2),
\label{eq:pdf_RoRd}
\end{align}
for  $r_d\geq r_o\geq 0$, where $\rho=\frac{9}{7}$ is the c.f.~(refer \cite{Praful_TypicalCell} for more details).
\begin{lemma}
\label{lemma:DistanceDistributions}
The probabilities that a user uniformly distributed in the typical cell is the CC user and the CE user are equal to $\tau^2$ and $1-\tau^2$, respectively. The cumulative density function ($\cdf$) of $R_o$ and the $\cdf$ of $R_d$ conditioned on $R_o$ for the CC user are given by 
\begin{equation}
F^{\text{c}}_{R_o}(r_o)=1-\exp\left(-\pi\rho\lambda {r_o^2}/{\tau^2}\right),\label{eq:CDFRo_CC1}
\end{equation}
for $r_o>0$, and
\begin{equation}
F^{\text{c}}_{R_d\mid R_o}\left(r_d\mid r_o\right)=1-\exp\left(-\pi\rho\lambda \left(r_d^2-{r_o^2}/{\tau^2}\right)\right),\label{eq:CDFRo_CC2}
\end{equation}
for $r_d>\frac{r_o}{\tau}$, respectively. The joint $\pdf$ of $R_o$ and $R_d$ for the CC user is given by
\begin{equation}
f^{\text{c}}_{R_o,R_d}(r_o,r_d)=\frac{(2\pi\rho\lambda)^2}{\tau^2}r_or_d\exp\left(-\pi\rho\lambda r_d^2\right),
\label{eq:pdf_RoRd_CC}
\end{equation}
for $r_d>\frac{r_o}{\tau}$ and $r_o>0$.
The $\cdf$ of $R_o$ and the $\cdf$ of $R_d$ conditioned on $R_o$ for the CE user are respectively given by 
\begin{equation}
F^{\text{e}}_{R_o}(r_o)=1-\frac{1-\tau^2\exp\left(-\pi\rho\lambda{r_o^2}(\tau^{-2}-1)\right)}{(1-\tau^2)\exp\left(\pi\rho\lambda{r_o^2}\right)},\label{eq:CDFRo_CE}
\end{equation}
for $r_o> 0$, and $F^{\text{e}}_{R_d\mid R_o}\left(r_d\mid r_o\right)$
\begin{align}
=\begin{cases}\frac{1-\exp(-\pi\rho\lambda(r_d^2-r_o^2))}{1-\exp(-\pi\rho\lambda r_o^2(\tau^{-2}-1))},~ &\text{for}~\frac{r_o}{\tau}> r_d\geq r_o,\\
1, &\text{for}~r_d\geq \frac{r_o}{\tau}.
\end{cases}
\label{eq:CDFRdCondRo_CE}
\end{align}
The joint $\pdf$ of $R_o$ and $R_d$ for the CE user is given by
\begin{equation}
f^{\text{e}}_{R_o,R_d}(r_o,r_d)=\frac{(2\pi\rho\lambda)^2}{1-\tau^2}r_or_d\exp\left(-\pi\rho\lambda r_d^2\right),
\label{eq:pdf_RoRd_CE}
\end{equation}
for $\frac{r_o}{\tau}\geq r_d>r_o$ and $r_o>0$.
\end{lemma}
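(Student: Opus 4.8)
The plan is to derive all seven claims by elementary manipulations of the approximate joint density in \eqref{eq:pdf_RoRd}, using nothing beyond the definition of conditional probability together with the geometric characterizations $R_o\le\tau R_d$ (CC user) and $R_o>\tau R_d$ (CE user) noted just before the lemma. First I would compute the CC-membership probability $\P[R_o\le\tau R_d]$ by integrating \eqref{eq:pdf_RoRd} over $\{r_d\ge r_o/\tau,\ r_o\ge 0\}$: the inner integral in $r_d$ is of the form $\int t e^{-ct^2}\,dt$ and collapses to $\exp(-\pi\rho\lambda r_o^2/\tau^2)$, and the remaining integral in $r_o$ evaluates to $\tau^2$. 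Since the CC and CE events partition the typical cell, the CE-membership probability is then $1-\tau^2$.

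For the CC user, conditioning on $\{R_o\le\tau R_d\}$ simply restricts \eqref{eq:pdf_RoRd} to $\{r_d>r_o/\tau\}$ and renormalizes by $\tau^2$, which gives \eqref{eq:pdf_RoRd_CC} immediately. Marginalizing \eqref{eq:pdf_RoRd_CC} over $r_d\in(r_o/\tau,\infty)$ gives the density of $R_o$, whose integral from $0$ to $r_o$ is \eqref{eq:CDFRo_CC1}; dividing \eqref{eq:pdf_RoRd_CC} by that marginal density yields the conditional density of $R_d$ given $R_o$, and integrating it from $r_o/\tau$ to $r_d$ produces \eqref{eq:CDFRo_CC2}. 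For the CE user, conditioning on $\{R_o>\tau R_d\}$ restricts \eqref{eq:pdf_RoRd} to the band $\{r_o\le r_d<r_o/\tau\}$ and renormalizes by $1-\tau^2$, giving \eqref{eq:pdf_RoRd_CE}. The only slightly longer computation is the marginal density of $R_o$ in this case: integrating $r_d$ over $(r_o,r_o/\tau)$ produces a \emph{difference} of two Gaussian-type exponentials, $\exp(-\pi\rho\lambda r_o^2)-\exp(-\pi\rho\lambda r_o^2/\tau^2)$, and integrating this from $0$ to $r_o$ (two more elementary integrals, one carrying a factor $\tau^2$) and rearranging gives \eqref{eq:CDFRo_CE}. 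Dividing \eqref{eq:pdf_RoRd_CE} by this marginal and integrating over $(r_o,r_d)$, truncated at $r_d=r_o/\tau$, then yields the two-branch expression \eqref{eq:CDFRdCondRo_CE}.

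There is no real obstacle here: every integral is of the form $\int t e^{-ct^2}\,dt$. The ``hard part'' is purely bookkeeping — keeping the support constraints ($r_d\ge r_o$, $r_d\ge r_o/\tau$, or $r_o\le r_d<r_o/\tau$) straight at each marginalization and division, and algebraically massaging the CE marginal CDF into exactly the printed form \eqref{eq:CDFRo_CE}. I would also flag explicitly that the whole lemma is stated ``exactly'' only relative to the approximation already made in \eqref{eq:pdf_RoRd} (replacing the true law of $(R_o,R_d)$ in the typical cell by the law of the two nearest points of a PPP with the correction factor $\rho=9/7$), so the results are conditionally exact rather than exact for the Poisson--Voronoi model.
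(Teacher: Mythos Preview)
Your proposal is correct and takes essentially the same approach as the paper: both reduce everything to elementary conditioning on the approximate joint density \eqref{eq:pdf_RoRd} and the events $\{R_o\le\tau R_d\}$, $\{R_o>\tau R_d\}$, with all integrals of the form $\int t e^{-ct^2}\,dt$. The only cosmetic difference is order---the paper first computes the marginal CDF of $R_o$ and the conditional CDF of $R_d\mid R_o$ and then invokes Bayes' theorem to assemble the joint pdf, whereas you restrict-and-renormalize to get the joint pdf first and then marginalize and divide; the content is identical.
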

\begin{proof}
Please refer to  Appendix A.
\end{proof}
\subsection{Meta Distribution for CC and CE users}
\label{sec:MetaDistributions_NOMA_OMA}
{The CC user needs to decode both the $\mathtt{L}_c$ and $\mathtt{L}_e$ layers for the successful reception its own message. Thus, the successful transmission event for the CC user becomes}
\begin{align}
\ncalE_c&=\{\sir_{c}>\beta_c\}\cap\{\sir_{e}>\beta_e\}\nonumber\\
&=\left\{h_o>R_o^\alpha I_\Phi\chi_c\right\},
\label{eq:SuccessEevent_CC}
\end{align}
where $\chi_c=\max\left\{\frac{\beta_c}{\theta},\frac{\beta_e}{1-\theta(1+\beta_e)}\right\}$. On the other hand, the CE user decodes its message while treating the signal intended for the CC user as the interference power. Thus, the successful transmission event for the CE user is given by
\begin{align}
\ncalE_e&=\{\sir_{e}>\beta_e\}=\left\{h_{o}>R_o^\alpha I_{\Phi}\chi_e\right\},
\label{eq:SuccessEevent_CE}
\end{align}
where $\chi_e=\frac{\beta_e}{1-\theta(1+\beta_e)}$. 
Since it is difficult to derive the meta distribution \cite{Martin2016Meta}, we first derive its moments in the following theorem and then use them to approximate the meta distribution.
\begin{theorem}
\label{thm:MetaDisMoment}
The $b$-th moments of the meta distributions of conditional success probability for the typical CC and CE users under NOMA respectively are
 
 \begin{equation}
\begin{split}
M_b^{\text{c}}(\chi_c)&=\frac{\rho^2}{\tau^2}\int\limits_0^{\tau^2} \frac{\left(\rho + v\ncalZ_b\left(\chi_c,v\right)\right)^{-2}}{(1+\chi_cv^{\frac{1}{\delta}})^b}{\rm d}v,\\
\text{and}~M_b^{\text{e}}(\chi_e)&=\frac{\rho^2}{1-\tau^2}\int\limits_{\tau^2}^1 \frac{\left(\rho + v\ncalZ_b\left(\chi_e,v\right)\right)^{-2}}{(1+\chi_e v^\frac{1}{\delta})^b}{\rm d}v,\label{eq:MetaDisMoment_CC_CE}
\end{split}
\end{equation}
where~ $\ncalZ_b(\chi,a)=\chi^\delta\int_{\chi^{-\delta} a^{-1}}^\infty \left[1-({1+t^{-\frac{1}{\delta}}})^{-b}\right]{\rm d}t$ and \\$\delta=\frac{2}{\alpha}$.
\end{theorem}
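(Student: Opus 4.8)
The plan is to compute the moments directly from their definition, $M_b^{\text{c}}=\E[p_c^b]$ and $M_b^{\text{e}}=\E[p_e^b]$, where the expectation is over the interfering point process $\Phi$ and the (CC- or CE-conditional) location of the typical user, via four steps: (i) average out the fading to reduce the conditional success probability to a product over $\Phi$, (ii) peel off the contribution of the dominant interferer, (iii) apply the PPP probability generating functional (PGFL) to the remaining interferers, and (iv) integrate over the link distances $R_o,R_d$ using Lemma~\ref{lemma:DistanceDistributions}. For step (i), conditioned on $\y$ and $\Phi$ the only remaining randomness in $\sir_c,\sir_e$ is the family of i.i.d.\ $\exp(1)$ gains $h_o,\{h_\x\}_{\x\in\Phi}$; using the success events \eqref{eq:SuccessEevent_CC}--\eqref{eq:SuccessEevent_CE}, I would first take $\P[h_o>t]=e^{-t}$ (conditioning on $\{h_\x\}$, so $t=R_o^\alpha\chi_c I_\Phi$ is deterministic) and then average over $\{h_\x\}$ with $\E[e^{-s h_\x}]=(1+s)^{-1}$, obtaining $p_c(\beta_c,\beta_e\mid\y,\Phi)=\prod_{\x\in\Phi}(1+\chi_c(R_o/\|\x-\y\|)^\alpha)^{-1}$ and, identically, $p_e(\beta_e\mid\y,\Phi)=\prod_{\x\in\Phi}(1+\chi_e(R_o/\|\x-\y\|)^\alpha)^{-1}$.

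For steps (ii)--(iii), I would raise the product to the $b$-th power, isolate the factor $(1+\chi_c(R_o/R_d)^\alpha)^{-b}$ of the dominant interferer $\x_d$ at distance $R_d$, and model the remaining interferers as a homogeneous PPP of intensity $\lambda$ on $\R^2\setminus B(\y,R_d)$; the PGFL then gives, for a CC user, $\E_\Phi[p_c^b\mid R_o,R_d]=(1+\chi_c(R_o/R_d)^\alpha)^{-b}\exp(-2\pi\lambda\int_{R_d}^\infty[1-(1+\chi_c(R_o/s)^\alpha)^{-b}]s\,{\rm d}s)$. The substitution $w=(R_o/s)^\alpha$ followed by $t=(\chi_c w)^{-\delta}$, together with $\alpha\delta=2$, turns the inner integral into $\tfrac12 R_o^2\,\ncalZ_b(\chi_c,(R_o/R_d)^2)$ --- this is precisely the computation that makes the auxiliary function $\ncalZ_b$ appear --- so the interference factor becomes $\exp(-\pi\lambda R_o^2\ncalZ_b(\chi_c,(R_o/R_d)^2))$.

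For step (iv), I would set $v=(R_o/R_d)^2$ (so $(R_o/R_d)^\alpha=v^{1/\delta}$ and $R_o^2=vR_d^2$) and average the product of the dominant-interferer factor and the interference factor over $(R_o,R_d)$ against the CC-conditional joint $\pdf$ \eqref{eq:pdf_RoRd_CC}; the constraint $r_d>r_o/\tau$ becomes $v\in(0,\tau^2)$, the $r_d$-integral is the elementary $\int_0^\infty r_d^3 e^{-\pi\lambda(\rho+v\ncalZ_b(\chi_c,v))r_d^2}{\rm d}r_d=(2\pi^2\lambda^2(\rho+v\ncalZ_b(\chi_c,v))^2)^{-1}$, and after this integration and the Jacobian the constants collapse to $\rho^2/\tau^2$, giving the claimed $M_b^{\text{c}}$. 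The CE moment follows verbatim with $\chi_e$ replacing $\chi_c$ (from \eqref{eq:SuccessEevent_CE}) and \eqref{eq:pdf_RoRd_CE} replacing \eqref{eq:pdf_RoRd_CC}, whose support $r_o<r_d\le r_o/\tau$ maps to $v\in(\tau^2,1)$ and whose normalizing constant produces $\rho^2/(1-\tau^2)$.

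The one genuinely delicate point is the joint characterization of $(R_o,R_d,\Phi)$: the exact conditional law of $\Phi$ given that the uniformly placed user lies in the typical cell's CC (respectively CE) region is intractable, so the argument must commit to the approximation in which $(R_o,R_d)$ obey the correction-factor distributions of Lemma~\ref{lemma:DistanceDistributions} while the non-dominant interferers form a homogeneous PPP of the true intensity $\lambda$ outside $B(\y,R_d)$. Everything after that is routine calculus, but one has to keep the support of $v$ --- the $\tau^2$ threshold separating the CC and CE cases --- aligned with Lemma~\ref{lemma:DistanceDistributions} so that the two moments emerge with the correct integration limits and prefactors.
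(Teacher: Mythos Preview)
Your proposal is correct and follows essentially the same route as the paper's proof in Appendix~B: average out fading to obtain the product form of $p_c$, isolate the dominant-interferer factor, approximate the remaining interferers by a homogeneous PPP of intensity $\lambda$ on $\R^2\setminus B(\y,R_d)$ and apply the PGFL, then integrate against the joint $\pdf$s \eqref{eq:pdf_RoRd_CC}--\eqref{eq:pdf_RoRd_CE} using the substitution $v=(r_o/r_d)^2$ and evaluating the resulting $r_d$-integral in closed form. Your identification of the approximation step (the joint law of $(R_o,R_d,\Phi)$ under CC/CE conditioning) as the only non-routine ingredient matches the paper exactly.
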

\begin{proof}
Please refer to Appendix B.
\end{proof}
In OMA, each BS serves its associated users using orthogonal RBs which means that there is no intra-cell interference.  Thus, OMA provides better success probabilities for the CC and CE users compared to NOMA. However, the orthogonal RB allocation reduces the transmission instances for the CC and CE users, which in turn affects their transmission rates negatively.  The successful transmission events for the CC and CE user under OMA respectively  given by
\begin{align}
\tilde{\ncalE}_c&=\{h_o>R_o^\alpha \beta_c I_\Phi\} \text{ and } \tilde{\ncalE}_e=\{h_o>R_o^\alpha \beta_e I_\Phi\}.
\label{eq:SuccessEevent_CC_CE_OMA}
\end{align}
The following corollary presents the  $b$-th moments of the meta distributions for the OMA case. 
\begin{cor}
\label{cor:OrthogonalTransmission}
The $b$-th moments of the meta distributions of conditional success probability for the  typical CC and CE users under OMA respectively are
 \begin{align}
 \begin{split}
\tilde{M}_b^{\text{c}}(\beta_c)&=\frac{\rho^2}{\tau^2}\int\limits_0^{\tau^2} \frac{\left(\rho + v\ncalZ_b\left(\beta_c,v\right)\right)^{-2}}{(1+\beta_cv^{\frac{1}{\delta}})^b}{\rm d}v,\\
\text{and}~\tilde{M}_b^{\text{e}}(\beta_e)&=\frac{\rho^2}{1-\tau^2}\int\limits_{\tau^2}^1 \frac{\left(\rho + v\ncalZ_b\left(\beta_e,v\right)\right)^{-2}}{(1+\chi_e v^\frac{1}{\delta})^b}{\rm d}v,
\end{split}\label{eq:MetaDisMoment_CC_CE_OMA}
\end{align}
where~ $\ncalZ_b(\beta,a)=\beta^\delta\int_{\beta^{-\delta} a^{-1}}^\infty \left[1-({1+t^{-\frac{1}{\delta}}})^{-b}\right]{\rm d}t$.
\end{cor}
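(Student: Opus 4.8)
The plan is to obtain Corollary~\ref{cor:OrthogonalTransmission} as an immediate specialization of Theorem~\ref{thm:MetaDisMoment}, without repeating the Appendix~B computation. The observation that drives everything is structural: the OMA success events in \eqref{eq:SuccessEevent_CC_CE_OMA} have exactly the same form as the NOMA success events \eqref{eq:SuccessEevent_CC} and \eqref{eq:SuccessEevent_CE}, namely $\{h_o > R_o^\alpha I_\Phi\, c\}$ for a deterministic constant $c$ that does not depend on $\y$ or $\Phi$. Passing from NOMA to OMA merely changes that constant: for the CC user $\chi_c$ becomes $\beta_c$ (in OMA the CC user is served on a dedicated RB, so it only decodes $\mathtt{L}_c$ with full power and the $\tfrac{\beta_e}{1-\theta(1+\beta_e)}$ branch of the max in $\chi_c$ disappears), and for the CE user $\chi_e$ becomes $\beta_e$. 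Crucially, the region partition \eqref{eq:CC_CE_Regions}, and hence the joint law of $(R_o,R_d,\Phi)$ summarized in Lemma~\ref{lemma:DistanceDistributions}, is purely geometric and therefore unchanged by the transmission scheme.

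Given this, I would recall the skeleton of the proof of Theorem~\ref{thm:MetaDisMoment}: using $h_o\sim\exp(1)$ and averaging over the i.i.d.\ interferer fading, the conditional success probability of an event $\{h_o > R_o^\alpha I_\Phi c\}$ equals $\prod_{\x\in\Phi}\big(1+c\,R_o^\alpha\|\x-\y\|^{-\alpha}\big)^{-1}$; raising this to the $b$-th power, splitting off the dominant interferer at distance $R_d$, applying the $\pgfl$ of the residual PPP on $\{\x:\|\x-\y\|>R_d\}$, and finally integrating against $f^{\text c}_{R_o,R_d}$ (resp.\ $f^{\text e}_{R_o,R_d}$) from Lemma~\ref{lemma:DistanceDistributions} and substituting $v=(R_o/R_d)^2$ gives the single-integral expressions \eqref{eq:MetaDisMoment_CC_CE}. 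Along this entire chain the constant $c$ travels untouched as a scalar multiplier inside $\big(1+c\,R_o^\alpha\|\cdot\|^{-\alpha}\big)^{-1}$ and surfaces only in the factors $(1+c\,v^{1/\delta})^{-b}$ and inside $\ncalZ_b(c,v)$; at no stage is the particular value $\chi_c$ or $\chi_e$ used.

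Hence the proof reduces to substituting $\chi_c\mapsto\beta_c$ in $M_b^{\text c}$ and $\chi_e\mapsto\beta_e$ in $M_b^{\text e}$ (and correspondingly in the definition of $\ncalZ_b$), which yields $\tilde{M}_b^{\text c}$ and $\tilde{M}_b^{\text e}$ as in \eqref{eq:MetaDisMoment_CC_CE_OMA}. I do not anticipate any genuine obstacle; the only point worth making explicit is the one noted above, namely that removing intra-cell interference neither alters the geometric CC/CE partition nor the distance distributions of Lemma~\ref{lemma:DistanceDistributions}, so the substitution is legitimate term by term and the corollary follows by a one-line appeal to Theorem~\ref{thm:MetaDisMoment}.
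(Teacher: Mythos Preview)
Your proposal is correct and matches the paper's own proof, which is simply: ``Using the definitions in \eqref{eq:SuccessEevent_CC_CE_OMA} and following the steps in Appendix B, we obtain \eqref{eq:MetaDisMoment_CC_CE_OMA}.'' Your only addition is to make explicit why the substitution $\chi_c\mapsto\beta_c$, $\chi_e\mapsto\beta_e$ is legitimate (the CC/CE partition and the distance distributions of Lemma~\ref{lemma:DistanceDistributions} are purely geometric and unchanged by the transmission scheme), which is a reasonable elaboration of the paper's one-line argument.
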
 
\begin{proof}
Using the definitions in \eqref{eq:SuccessEevent_CC_CE_OMA} and following the steps in Appendix B, we obtain \eqref{eq:MetaDisMoment_CC_CE_OMA}. 
\end{proof}
Fig. \ref{fig:Moments_NOMA_OMA_BetaApp} verifies that the means and variances of the meta distributions of the CC and CE users under NOMA (Left) and OMA (Middle) derived in Theorem \ref{thm:MetaDisMoment} and Corollary \ref{cor:OrthogonalTransmission}, respectively, closely match with the simulation results. 
The moments  for the CE user monotonically decrease with $\theta$ 
as the interference from the $\mathtt{L}_c$ layer increases with $\theta$.
However, the performance trend of the moments for the CC user  w.r.t. $\theta$ is different. {This is because $\theta$ affects the probabilities of decoding the $\mathtt{L}_c$ and $\mathtt{L}_e$ layers  at the CC user differently.} While increasing $\theta$ makes it difficult to decode $\mathtt{L}_e$ layer, it makes it easier to decode $\mathtt{L}_c$ layer. As a result, the impact of $\mathtt{L}_c$ layer decoding is dominant for $\theta\leq\hat\theta~(=0.5$ for $(\beta_c,\beta_e)=(0,-3)$ (in dB)) and the impact of $\mathtt{L}_e$ layer decoding is dominant for $\theta>\hat\theta$ where $\hat\theta$ will be defined in Section \ref{sec:CSEoptimization}.
\begin{figure*}
 \centering
\hspace{-.35cm} \includegraphics[width=.33\textwidth]{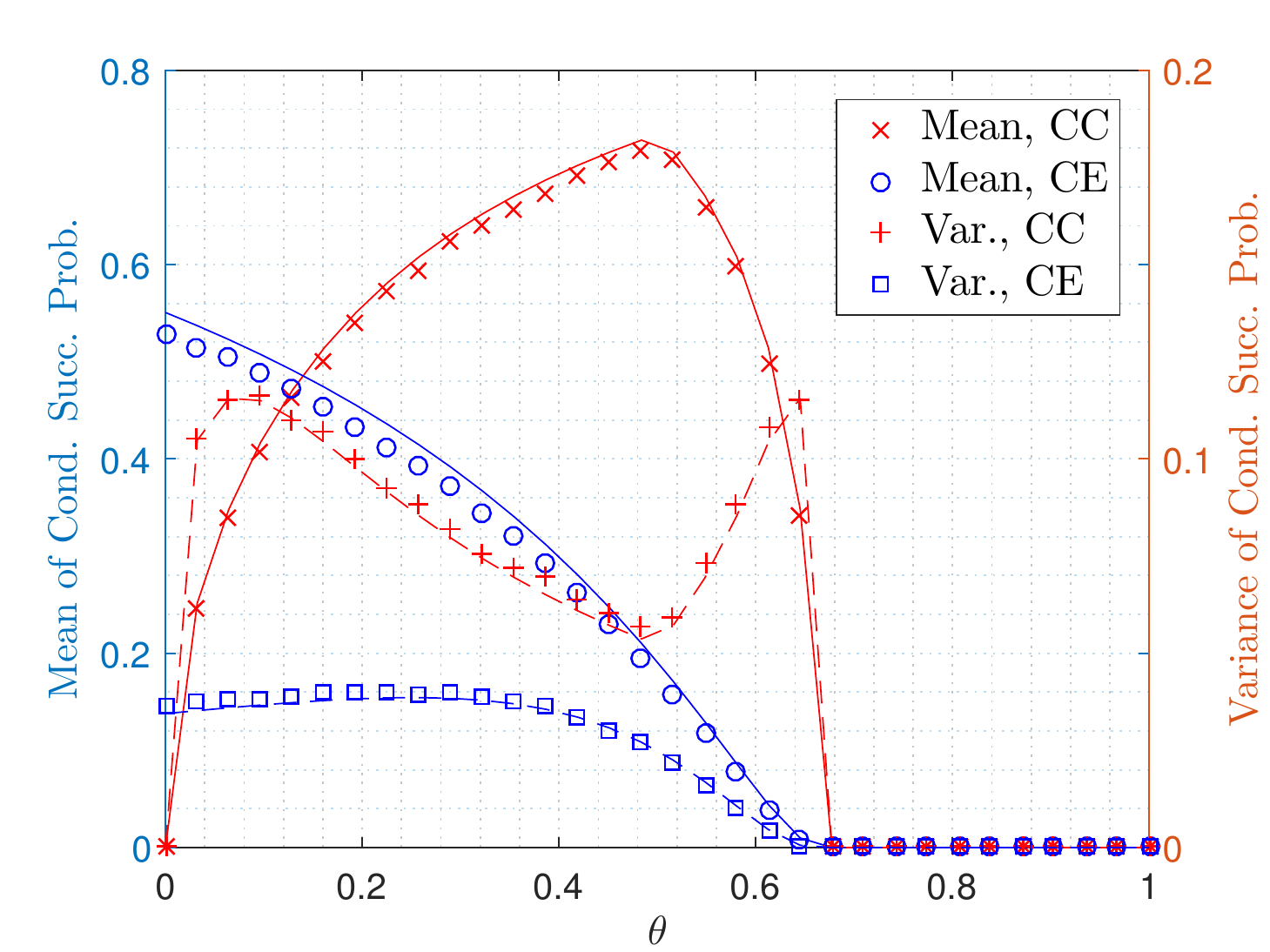}
\includegraphics[width=.33\textwidth]{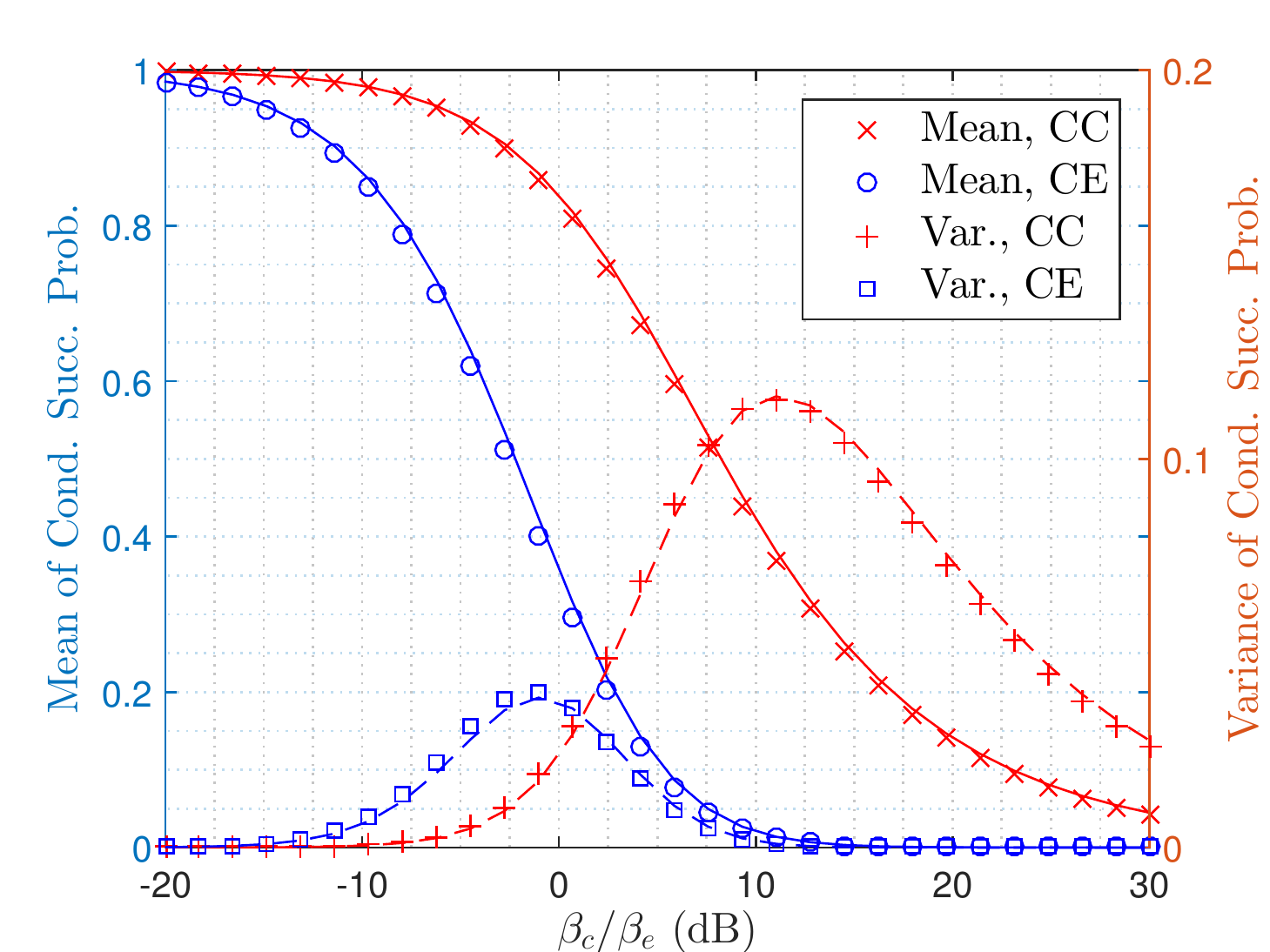}
\includegraphics[width=.33\textwidth]{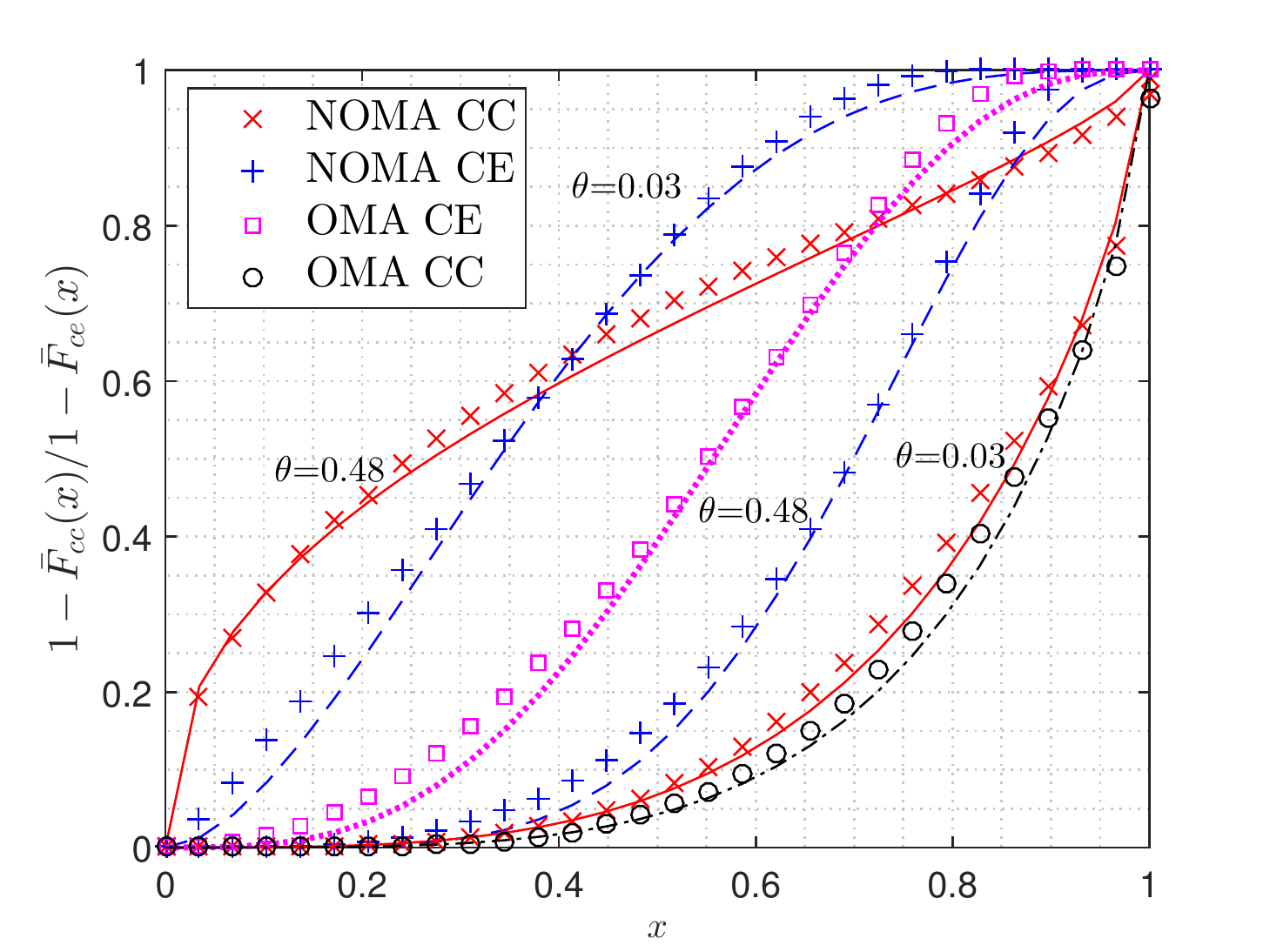} 
\caption{Moments for the CC and CE users under NOMA (Left) and  OMA (Middle), and the  beta approximations (Right) for $\tau=0.7$, $\alpha=4$, $\lambda=1$, and $(\beta_c,\beta_e)=(0,-3)$ dB. The solid and dashed curves correspond to the  analytical results and the markers correspond to the simulation results.}
\label{fig:Moments_NOMA_OMA_BetaApp}
\end{figure*} 
\subsection{Meta Distribution and its Approximation}
\label{subsec:BetaApproximation} 
Using $M_{b}^{c}$ and $M_{b}^{e}$ derived in Theorem \ref{thm:MetaDisMoment} and the Gil-Pelaez's inversion theorem \cite{Gil1951}, the meta distributions for the typical CC and CE users under NOMA can be obtained.
However, the evaluation of meta distributions using the Gil-Pelaez inversion is computationally complex. Thus, similar to \cite{Martin2016Meta},  we approximate the meta distributions of the CC and CE users with the beta distributions by matching the moments as below 
\begin{equation}
\begin{split}
\bar{F}_\text{c}(\chi_c;x)&\approx 1-I(x;\kappa_{1c},\kappa_{2c})\\ \text{ and } \bar{F}_\text{e}(\chi_e;x)&\approx 1-I(x;\kappa_{1e},\kappa_{2e}), 
\end{split}
\label{eq:BetaApp}
\end{equation}
respectively, where $I(x;a,b)$ is the regularized incomplete beta function and $$\kappa_{1s}= \frac{M_1^{s}\kappa_2^{ss}}{1-M_1^{s}} \text{~~and~~} \kappa_{2s}=\frac{(M_1^{s}-M_2^{s})(1-M_1^{s})}{M_2^{s}-(M_1^{s})^2}$$
For $s=\{c,e\}$.
Similarly, the beta approximations for the meta distributions under OMA can be obtained using the moments given in Corollary \ref{cor:OrthogonalTransmission}. We denote the parameters of the beta approximation for the CC (CE) user under OMA by $\tilde{\kappa}_{1c}$ and $\tilde{\kappa}_{1c}$ ($\tilde{\kappa}_{1e}$ and $\tilde{\kappa}_{1e}$).
Fig. \ref{fig:Moments_NOMA_OMA_BetaApp} (Right) shows that the beta distributions closely approximate the meta distributions of the CC and CE users for both NOMA and OMA. The proposed beta approximations can be used for the system-level analysis without having to perform the computationally complex Gil-Pelaez inversion.  {The figure shows that  the percentage of the CE users meeting the link reliability (i.e., conditional success probability) drops with increasing $\theta$ whereas the percentage of CC users achieving the link reliability increases with $\theta$.}
\section{Throughput and Delay Analysis for the CC and CE Users}
\label{sec:ThroughputDelayAnalysis}
In this section, we characterize  the conditional transmission rate (given in \eqref{eq:RateCondPhi_CC_CE}) and the conditional mean delay (given in \eqref{eq:DelayCondPhi_CC_CE}) of the CC/CE users under NRT and RT services, respectively. For this, we first derive the distributions of the CC and CE loads in the following subsection. 
\subsection{Distributions of the CC and CE Loads}
\label{subsec:AreaDistributions}
The CC and CE loads of the typical cell $V_o$, i.e., $N_{oc}$ and $N_{oc}$, depend on the areas of CC and CE regions, i.e., $|V_{oc}|$ and $|V_{oe}|$.  {It is challenging to derive the area distributions of a random set directly. Thus, we first drive the exact first two moments of these areas in the following lemma and then use them to approximate the area distributions of CC and CE regions.} 
\begin{lemma}
\label{lemma:Area_CC_CE_Region}
For a given $\tau$, the mean areas of the CC and CE regions are
{\small \begin{align}
\nbbE[|{V}_{oc}|]&={\tau^2}{\lambda^{-1}}\text{~and~}\nbbE[|{V}_{oe}|]={(1-\tau^2)}{\lambda^{-1}},\label{eq:Mean_CCCERegion}
\end{align} }
respectively,  and second moments of the areas of the CC and CE regions are
{\small \begin{equation}
\begin{split}
\nbbE[|V_{oc}|^2]=4\pi&\int_0^\pi\int_0^\infty\int_0^\infty\exp\left(-\lambda U_3\right)r_1{\rm d}r_1r_2{\rm d}r_2{\rm d}u\\
\text{and}~\nbbE[|V_{oe}|^2]&=4\pi\int_0^\pi\int_0^\infty \ncalF(r_2,u)r_2{\rm d}r_2{\rm d}u,
\end{split}\label{eq:SecondMoment_CCCERegion}
\end{equation}}
 respectively, where
{\small\begin{align*}
\ncalF&(r_2,u)=\int_{\mathtt{D}{(r_2,u)}}\hspace{-.5cm}\left(\exp(-\lambda U_1)\mathbbm{1}_{r_1\leq r_2}+\exp(-\lambda U_2)\mathbbm{1}_{r_2<r_1}\right)r_1{\rm d}r_1\\
&+\int_{\R\setminus\mathtt{D}{(r_2,u)}}\hspace{-.5cm}([\exp(-\lambda {U}_o)-\exp(-\lambda( {U}_1+ {U}_2- {U}_3)]+\\
&~~~~[\exp(-\lambda( {U}_2- {U}_3)-1][\exp(-\lambda {U}_1)-\exp(-\lambda {U}_3)])r_1{\rm d}r_1,
\end{align*}}
{\small $\mathtt{D}(r_2,u)=\{r_1\in\R: d\leq \tau^{-1}|r_1-r_2|\}$,\\ $d=(r_1^2+r_2^2-2r_1r_2\cos(u))^\frac{1}{2}$,  $u_o=u$,\\
 $U_o=U(r_1,r_2,u_o)$, $U_3=U(r_1\tau^{-1},r_2\tau^{-1},u_3)$,\\
$U_1=U(r_1\tau^{-1},r_2,u_1)$ if $r_1\tau^{-1}<d+r_2$ otherwise $U_1=\pi r_1^2\tau^{-2}$, \\
 $U_2=U(r_1,r_2\tau^{-1},u_2)$ if $r_2\tau^{-1}<d+r_1$ otherwise $U_2=\pi r_2^2\tau^{-2}$,\\
  $u_1=\arccos\left((\tau^{-1}-\tau)\frac{r_1}{2r_2}+\tau\cos(u)\right)$,\\
   $u_2=\arccos\left((\tau^{-1}-\tau)\frac{r_2}{2r_1}+\tau\cos(u)\right)$, \\
$u_3=\arccos\left((1-\tau^2)\frac{r_1^2+r_2^2}{2r_1r_2}+\tau^2\cos(u)\right)$,\\ $w(r_1,r_2,u)=\arccos\left(d^{-1}{(r_1-r_2\cos(u))}\right)$, and \\
 $U(r_1,r_2,u)=r_1^2\left(\pi-w(r_1,r_2,u) + \frac{\sin\left(2w(r_1,r_2,u)\right)}{2}\right)$\\
 \hspace*{17mm}$+~r_2^2\left(\pi-w(r_2,r_1,u) + \frac{\sin\left(2w(r_2,r_1,u)\right)}{2}\right)$.}
\end{lemma}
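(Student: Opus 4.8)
The plan is to express each region area as the integral of a membership indicator, push the expectation inside by Tonelli, and reduce everything to void probabilities of $\Phi$, which are exponentials of Lebesgue measures of explicit unions of disks. The starting point is a reformulation of \eqref{eq:CC_CE_Regions}: for a fixed $\y$ with $\|\y\|=r$, one has $\y\in V_{oc}$ iff $\Phi$ has no point in the disk $\ncalB(\y,r/\tau)$ (here $\ncalB(\y,r)$ denotes the disk of radius $r$ centred at $\y$), which, since $\tau<1$, contains $\ncalB(\y,r)$ and hence already forces $\y\in V_o$; and $\y\in V_{oe}$ iff $\Phi$ avoids $\ncalB(\y,r)$ but charges the annulus $\ncalB(\y,r/\tau)\setminus\ncalB(\y,r)$. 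The stated probabilities $\tau^2$ and $1-\tau^2$ are exactly the content of Lemma \ref{lemma:DistanceDistributions}, so only the area moments need work. For the means, Tonelli and the void-probability formula give $\nbbE[|V_{oc}|]=\int_{\R^2}e^{-\lambda\pi\|\y\|^2/\tau^2}\,{\rm d}\y=\tau^2/\lambda$, while $\P(\y\in V_{oe})=e^{-\lambda\pi\|\y\|^2}-e^{-\lambda\pi\|\y\|^2/\tau^2}$ integrates to $(1-\tau^2)/\lambda$; these add to $\nbbE[|V_o|]=1/\lambda$, as they must.

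For the second moment of $|V_{oc}|$, write $r_i=\|\y_i\|$ and let $u$ be the angle between $\y_1$ and $\y_2$. Tonelli gives $\nbbE[|V_{oc}|^2]=\int_{\R^2}\!\int_{\R^2}\P(\y_1\in V_{oc},\y_2\in V_{oc})\,{\rm d}\y_1{\rm d}\y_2$, and the joint event occurs iff $\Phi$ avoids $\ncalB(\y_1,r_1/\tau)\cup\ncalB(\y_2,r_2/\tau)$, so the integrand is $\exp(-\lambda\,|\ncalB(\y_1,r_1/\tau)\cup\ncalB(\y_2,r_2/\tau)|)$. Since both disks contain the origin they always overlap, so by inclusion--exclusion the union area is $\pi(r_1^2+r_2^2)/\tau^2$ minus the area of the intersecting lens, the latter being a sum of two circular segments whose half-angles are given by the classical chord formula. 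The only nontrivial ingredient is algebraic: the true centre distance $d=(r_1^2+r_2^2-2r_1r_2\cos u)^{1/2}$ coincides with the distance between two points at ranges $r_1/\tau$ and $r_2/\tau$ separated by the ``effective angle'' $u_3$, which is precisely how $u_3$ in the statement is defined; after this substitution the union area becomes $U_3=U(r_1\tau^{-1},r_2\tau^{-1},u_3)$ (with the obvious reading in the nested sub-case $d\le|r_1-r_2|/\tau$, where the union area is just $\pi\max\{r_1,r_2\}^2/\tau^2$). Passing to polar coordinates, integrating out the global rotation (factor $2\pi$) and invoking the reflection symmetry $u\leftrightarrow 2\pi-u$ (factor $2$) produces the prefactor $4\pi$ and the claimed formula.

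For the second moment of $|V_{oe}|$, the same template applies after a short inclusion--exclusion up front: using $\mathbbm{1}\{\y\in V_{oe}\}=\mathbbm{1}\{\y\in V_o\}-\mathbbm{1}\{\y\in V_{oc}\}$ and expanding the product of two such differences,
\begin{equation*}
\P(\y_1,\y_2\in V_{oe})=e^{-\lambda U_o}-e^{-\lambda U_1}-e^{-\lambda U_2}+e^{-\lambda U_3},
\end{equation*}
where $U_o,U_1,U_2,U_3$ are the areas of $\ncalB(\y_1,\rho_1)\cup\ncalB(\y_2,\rho_2)$ for $(\rho_1,\rho_2)$ equal to $(r_1,r_2)$, $(r_1/\tau,r_2)$, $(r_1,r_2/\tau)$, $(r_1/\tau,r_2/\tau)$ respectively --- these are the void regions of the events $\{\y_1,\y_2\in V_o\}$, $\{\y_1\in V_{oc},\y_2\in V_o\}$, $\{\y_1\in V_o,\y_2\in V_{oc}\}$, $\{\y_1,\y_2\in V_{oc}\}$. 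Each of these four areas is evaluated by the same two-disk/lens computation, which forces the effective angles $u_1,u_2,u_3$ to preserve $d$ (hence the stated $\arccos$ formulas) and forces the side conditions $r_i\tau^{-1}<d+r_{3-i}$ versus $r_i\tau^{-1}\ge d+r_{3-i}$ that detect when one enlarged disk swallows the other. On $\R\setminus\mathtt{D}(r_2,u)$ all four unions are genuine and one obtains the four-term kernel above, which can be re-grouped into the equivalent form appearing inside $\ncalF$; on $\mathtt{D}(r_2,u)$ --- the set where the two CC-forbidden disks are nested --- two of the four unions coincide, so two terms of the inclusion--exclusion cancel and the kernel collapses to the reduced form recorded in the statement, selected by the sign of $r_1-r_2$. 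A last polar reduction (global rotation $\to2\pi$, reflection $\to2$) yields $\nbbE[|V_{oe}|^2]=4\pi\int_0^\pi\!\int_0^\infty\ncalF(r_2,u)\,r_2\,{\rm d}r_2\,{\rm d}u$.

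The routine parts are the two means, the $|V_{oc}|^2$ computation, and the polar/symmetry bookkeeping; the real work --- and the step I expect to be the main obstacle --- is the exhaustive planar geometry behind $|V_{oe}|^2$: for every $(r_1,r_2,u)$ one must decide which of the (up to four) disk-unions is proper, nested, or tangent, verify the effective angles $u_1,u_2,u_3$ and the swallowing conditions, and check carefully that the cancellations on $\mathtt{D}(r_2,u)$ produce exactly the kernel $\ncalF$.
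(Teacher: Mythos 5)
Your route is the paper's route: Robbins' moment formula, PPP void probabilities, and the reduction of every joint membership probability to the area of a union of two disks, with the same polar bookkeeping (global rotation gives $2\pi$, reflection of the relative angle gives $2$, hence the $4\pi$); the means, the $|V_{oc}|^2$ computation, and the identification of the effective angles $u_1,u_2,u_3$ as distance-preserving reparametrizations are exactly as in the paper. The one methodological difference is how $\P[\x_1,\x_2\in V_{oe}]$ is obtained: you use $\mathbbm{1}\{\y\in V_{oe}\}=\mathbbm{1}\{\y\in V_o\}-\mathbbm{1}\{\y\in V_{oc}\}$ to get a single four-term kernel $e^{-\lambda U_o}-e^{-\lambda U_1}-e^{-\lambda U_2}+e^{-\lambda U_3}$ valid in all configurations, whereas the paper dissects the event case by case (nested versus intersecting enlarged disks) and evaluates conditional void probabilities. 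Your unified kernel expands term by term into the paper's Case-2 expression, so on $\R\setminus\mathtt{D}(r_2,u)$ the two derivations coincide, and your shortcut is cleaner.

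However, your final claim that on $\mathtt{D}(r_2,u)$ the cancellation "produces exactly the kernel $\ncalF$" is not right and you should flag the discrepancy instead of asserting agreement. For $r_1\le r_2$ the nesting gives $U_2=U_3$, so your kernel collapses to $e^{-\lambda U_o}-e^{-\lambda U_1}$ (and symmetrically to $e^{-\lambda U_o}-e^{-\lambda U_2}$ for $r_2<r_1$), not to the bare $e^{-\lambda U_1}$ (resp.\ $e^{-\lambda U_2}$) printed inside $\ncalF$. Your version is the correct one: the printed kernel cannot be a joint probability on $\mathtt{D}$, since $e^{-\lambda U_1}\to 1$ as $\lambda\to 0$ while $\P[\x_1,\x_2\in V_{oe}]\le\P[\x_1\in V_{oe}]=e^{-\lambda\pi r_1^2}-e^{-\lambda\pi r_1^2/\tau^2}\to 0$; moreover, the paper's own Case-1 event description ($\Phi(\ncalC_o)=0$ together with $\Phi(\ncalC_1\setminus\ncalC_o)\neq 0$ for $r_1\le r_2$) yields the conditional probability $1-\exp\left(-\lambda(|\ncalC_1|-|\ncalC_o|)\right)$, whereas the corresponding equation in the paper drops the "$1-$", and this omission propagates into the stated $\ncalF$. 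So your proof is sound and, where it disagrees with the lemma as printed, it is actually correcting a typo; the honest conclusion of your argument is that the kernel on $\mathtt{D}(r_2,u)$ should read $\left(e^{-\lambda U_o}-e^{-\lambda U_1}\right)\mathbbm{1}_{r_1\le r_2}+\left(e^{-\lambda U_o}-e^{-\lambda U_2}\right)\mathbbm{1}_{r_2<r_1}$, not that it matches the printed formula.
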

\begin{proof}
Please refer to Appendix C.
\end{proof}
Now,  we approximate the distributions of the CC and CE areas.  {The area of the PV cell follows the gamma distribution \cite{tanemura2003statistical} and the sum of two independent gamma distributed random variables also follows the gamma distribution. Therefore, the gamma distribution is the natural choice for these approximations as the correlation between the CC and CE areas is not too high.} Thus, $\pdf$s of the  CC and CE areas are, respectively, approximated as
\begin{equation}
\begin{split}
f_{|V_{oc}|}(a)&=\frac{\gamma_{1c}^{\gamma_{2c}}}{\Gamma(\gamma_{2c})}a^{\gamma_{2c}-1}\exp(-\gamma_{1c} a)\\
\text{and } f_{|V_{oe}|}(a)&=\frac{\gamma_{1e}^{\gamma_{2e}}}{\Gamma(\gamma_{2e})}a^{\gamma_{2e}-1}\exp(-\gamma_{1e} a),
\end{split}
\label{eq:PDF_CCCEarea}
\end{equation}
 {where for~} $s\in\{c,e\}$ {~we have~}$$\gamma_{2s}=\gamma_{1s}\E[|V_{os}|] \text{~~and~~}\gamma_{1s}=\frac{\E[|V_{os}|]}{\E[|V_{os}|^2]-\E[|V_{os}|]^2}.$$
Fig. \ref{fig:AreaDistribution} provides the visual verification of the gamma approximations given in \eqref{eq:PDF_CCCEarea}. Now, using \eqref{eq:PDF_CCCEarea}, we obtain the distributions of the CC and CE loads in the following lemma.
\begin{figure}[h]
 \centering\vspace{-7mm}
 \includegraphics[width=.48\textwidth]{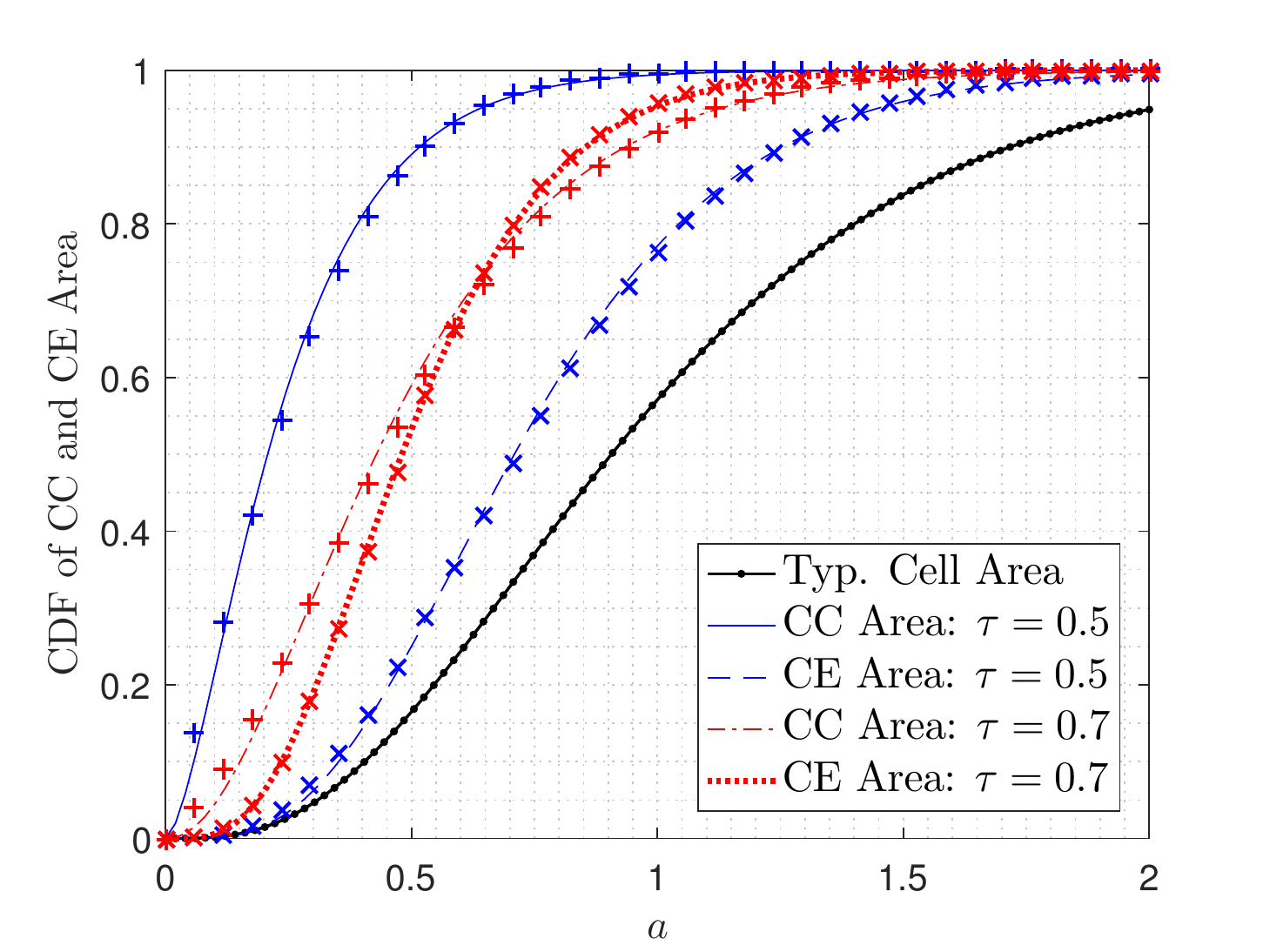}\vspace{-3mm}
\caption{Gamma approximation of the area distributions of the CC and CE regions. The solid and dashed curves correspond to the gamma approximations and the markers correspond to the simulation results.}\vspace{-4mm}
\label{fig:AreaDistribution}
\end{figure}
\begin{lemma}
\label{lemma:CC_CE_Loads}
The probability mass function ($\pmf$) of the number of CC users, i.e., $N_{oc}$, is  $\P[N_{oc}=n]=$
\begin{equation}
\frac{\nu^n\gamma_{1c}^{\gamma_{2c}}}{n!\Gamma(\gamma_{2c})}\int_0^\infty a^{n+\gamma_{2c}-1}\frac{\exp(-(\nu+\gamma_{1c}) a)}{1-\exp(-\nu a)}{\rm d}a,
\label{eq:PMF_Noc}
\end{equation}
where $\gamma_{1c}$ and $\gamma_{2c}$ are given in \eqref{eq:PDF_CCCEarea}. 
The $\pmf$ of the number of CE, i.e., $N_{oe}$, is  $\P[N_{oe}=n]=$
\begin{equation}
\frac{\nu^n\gamma_{1e}^{\gamma_{2e}}}{n!\Gamma(\gamma_{2e})}\int_0^\infty a^{n+\gamma_{2e}-1}\frac{\exp(-(\nu+\gamma_{1e}) a)}{1-\exp(-\nu a)}{\rm d}a,
\label{eq:PMF_Noe}
\end{equation}
where $\gamma_{1e}$ and $\gamma_{2e}$ are given in  \eqref{eq:PDF_CCCEarea}.
\end{lemma}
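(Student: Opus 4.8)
The plan is to obtain each $\pmf$ by a single deconditioning step on the corresponding region area. First I would write, by the law of total probability and using that $N_{oc}\ge 1$ almost surely by the construction of $\Psi_{cc}$ in \eqref{eq:CC_CE_pps},
\[
\P[N_{oc}=n]=\int_0^\infty \P\big[N_{oc}=n\mid |V_{oc}|=a\big]\,f_{|V_{oc}|}(a)\,{\rm d}a,\qquad n\ge 1 .
\]
The interchange here is legitimate by nonnegativity (Tonelli), so no integrability hypothesis is needed.

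Next I would invoke the modeling assumption stated right after \eqref{eq:CC_CE_pps}: conditioned on $|V_{oc}|=a$, the CC load $N_{oc}$ is zero-truncated Poisson with mean $\nu a$, hence
\[
\P\big[N_{oc}=n\mid |V_{oc}|=a\big]=\frac{(\nu a)^n e^{-\nu a}}{n!\,(1-e^{-\nu a})},\qquad n\ge 1 ,
\]
which is indeed a valid $\pmf$ on $\{1,2,\dots\}$ since $\sum_{n\ge1}(\nu a)^n/n!=e^{\nu a}-1$. I would then substitute the gamma approximation of $f_{|V_{oc}|}$ from \eqref{eq:PDF_CCCEarea}, namely $f_{|V_{oc}|}(a)=\frac{\gamma_{1c}^{\gamma_{2c}}}{\Gamma(\gamma_{2c})}a^{\gamma_{2c}-1}e^{-\gamma_{1c}a}$, combine the two exponential factors and the two powers of $a$ in the integrand, and pull the $a$-independent constant $\frac{\nu^n\gamma_{1c}^{\gamma_{2c}}}{n!\,\Gamma(\gamma_{2c})}$ outside the integral. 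This produces exactly \eqref{eq:PMF_Noc}. The derivation of \eqref{eq:PMF_Noe} for $N_{oe}$ is verbatim the same argument with $c$ replaced by $e$ and $V_{oc}$ replaced by $V_{oe}$, using the corresponding parameters $\gamma_{1e},\gamma_{2e}$.

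As for difficulty: there is essentially no analytic obstacle, since once Lemma \ref{lemma:Area_CC_CE_Region} and the gamma approximation \eqref{eq:PDF_CCCEarea} are in hand the result is a one-line computation. The only points worth flagging are (i) that the conditional law of the load given the area is the zero-truncated Poisson law \emph{by assumption} of the model, not something to be derived, and (ii) that $|V_{oc}|$ and $|V_{oe}|$ are being treated as exactly gamma-distributed, which is the approximation introduced in Subsection \ref{subsec:AreaDistributions} and supported numerically in Fig. \ref{fig:AreaDistribution}; the resulting one-dimensional integrals have no elementary closed form but are straightforward to evaluate numerically.
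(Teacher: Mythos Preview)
Your proposal is correct and follows essentially the same approach as the paper: condition on the region area, use the zero-truncated Poisson assumption for the conditional load, and decondition using the gamma approximation of the area density from \eqref{eq:PDF_CCCEarea}. The paper's proof is terser, but the substance is identical.
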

\begin{proof}
For given $|V_{oc}|$, $N_{oc}$ follows zero-truncated Poisson with mean $\nu|V_{oc}|$. Thus, we have 
\begin{align*}
\P[N_{oc}=n]&=\E_{|V_{oc}|}\left[\P\left[N_{oc}=n\mid|V_{oc}|\right]\right]~\text{for}~n>0.
\end{align*}
 {Now, taking expectation over $\pdf$ of $|V_{oc}|$ given in \eqref{eq:PDF_CCCEarea} will provide the $\pmf$ of $N_{oc}$ as given in \eqref{eq:PMF_Noc}.} Similarly, the $\pmf$ of $N_{oe}$ given in \eqref{eq:PMF_Noe} follows using the $\pdf$ of $|V_{oe}|$ given in \eqref{eq:PDF_CCCEarea}.
\end{proof}
\subsection{Transmission Rates of the CC and CE Users}
\label{subsec:TransmissionRate}
In this subsection, we derive the distributions of the conditional transmission rates of the CC and CE users under random scheduling which are, respectively, defined as 
\begin{equation}
\begin{split}
\ncalR_c(\mathtt{r_{c}};\chi_c)&=\P[R_c(\y,\Phi)\leq \mathtt{r_{c}}]\\
\text{~and~}\ncalR_e(\mathtt{r_{e}};\chi_e)&=\P[R_e(\y,\Phi)\leq \mathtt{r_{e}}],
\end{split}\label{eq:RateOutage_CC_CE}
\end{equation}
where $R_c(\y,\Phi)$ and $R_e(\y,\Phi)$ are given in \eqref{eq:RateCondPhi_CC_CE}.  {Now, using the meta distributions  and the $\pmf$s of the cell loads, the means and the distributions of the conditional transmission rates of the CC and CE users are derived in the following theorem.}
 \begin{theorem}
 \label{thm:TransmissionRate_NOMA}
 The mean transmission rates of the typical CC and CE users under NOMA are
{ \begin{align}
\begin{split}
\bar{R}_{c}(\chi_c) &=\xi_{c} \log_2\left(1+\beta_c\right)M_{1}^{c}(\chi_c)\\
\text{and~} \bar{R}_{e}(\chi_e) &= \xi_{e}\log_2\left(1+\beta_e\right)M_{1}^{e}(\chi_e),
\end{split}
\label{eq:MeanTransmissionRate_CC_CE}
\end{align}} 
respectively, where $M_1^{c}(\chi_c)$ and  $M_1^{e}(\chi_e)$ are given in \eqref{eq:MetaDisMoment_CC_CE},  $s\in\{c,e\}$, $\xi_s=\sum_{n=1}^{\infty}\frac{1}{n}\P[N_{os}=n]$, and  $\P[N_{os}=n]$ is given in Lemma \ref{lemma:CC_CE_Loads}.
The $\cdf$ of the conditional transmission rates of the typical CC and CE users under NOMA are, respectively,
{ \begin{align}
&\ncalR_c(\mathtt{r_{c}};\chi_c)=\nonumber\\
&\E_{N_{oc}}\left[I\left(\min\left(\mathtt{r_{c}} N_{oc}\log_2(1+\beta_c)^{-1},1\right);\kappa_{1c},\kappa_{2c}\right)\right],\label{eq:RateCDF_UpperBound_CC}\\
&\text{and }\ncalR_e(\mathtt{r_{e}};\chi_e)=\nonumber\\
&\E_{N_{oe}}\left[I\left(\min\left(\mathtt{r_{e}} N_{oe}\log_2(1+\beta_e)^{-1},1\right);\kappa_{1e},\kappa_{2e}\right)\right],
\label{eq:RateCDF_UpperBound_CE}
\end{align}}
where  $s\in\{c,e\}$, $\kappa_{1s}$ and $\kappa_{2s}$ are given in \eqref{eq:BetaApp}, and  $\P[N_{os}=n]$ is given in Lemma \ref{lemma:CC_CE_Loads}.
 \end{theorem}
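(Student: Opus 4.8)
The plan is to handle the mean first and then the full CDF, in both cases conditioning on the cell load and invoking Assumption~\ref{assumption:Independence_PVCellArea_SuccessProb} to decouple the load from the conditional success probability.

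\textbf{Means.} Start from $\bar{R}_{c}(\chi_c)=\E[R_c(\y,\Phi)]$ with $R_c(\y,\Phi)$ as in \eqref{eq:RateCondPhi_CC_CE}, so that $\bar{R}_{c}(\chi_c)=\log_2(1+\beta_c)\,\E\!\left[N_{oc}^{-1}\,p_c(\beta_c,\beta_e\mid\y,\Phi)\right]$. By Assumption~\ref{assumption:Independence_PVCellArea_SuccessProb} the load $N_{oc}$ and the conditional success probability $p_c(\beta_c,\beta_e\mid\y,\Phi)$ are independent, so the expectation factors. The first factor is $\E[N_{oc}^{-1}]=\sum_{n\geq 1}n^{-1}\P[N_{oc}=n]=\xi_c$, with $\P[N_{oc}=n]$ taken from Lemma~\ref{lemma:CC_CE_Loads}. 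The second factor is the mean of the meta distribution, i.e.\ its first moment $M_1^{c}(\chi_c)$ from Theorem~\ref{thm:MetaDisMoment} (recall that the meta distribution is precisely the law of $p_c(\beta_c,\beta_e\mid\y,\Phi)$). Multiplying the two gives the claimed expression; the CE case is verbatim with $(N_{oe},p_e,\beta_e,M_1^{e})$ replacing the CC quantities.

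\textbf{CDF.} Condition on the load and use independence again: from \eqref{eq:RateOutage_CC_CE} and \eqref{eq:RateCondPhi_CC_CE}, $\ncalR_c(\mathtt{r_c};\chi_c)=\P[R_c(\y,\Phi)\le \mathtt{r_c}]=\E_{N_{oc}}\big[\P[p_c(\beta_c,\beta_e\mid\y,\Phi)\le \mathtt{r_c}N_{oc}/\log_2(1+\beta_c)\mid N_{oc}]\big]$, and by Assumption~\ref{assumption:Independence_PVCellArea_SuccessProb} the inner probability is the CDF of $p_c(\beta_c,\beta_e\mid\y,\Phi)$ evaluated at $\mathtt{r_c}N_{oc}/\log_2(1+\beta_c)$. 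Since $p_c(\beta_c,\beta_e\mid\y,\Phi)\in[0,1]$, the argument must be clipped at $1$ (using that the CDF equals $1$ there), which produces the $\min(\cdot,1)$. Finally replace the exact CDF $1-\bar{F}_{\text{c}}(\chi_c;\cdot)$ by the beta approximation $I(\cdot;\kappa_{1c},\kappa_{2c})$ from \eqref{eq:BetaApp}, and take the expectation over $N_{oc}$ using the $\pmf$ in Lemma~\ref{lemma:CC_CE_Loads}, to obtain \eqref{eq:RateCDF_UpperBound_CC}; \eqref{eq:RateCDF_UpperBound_CE} follows identically.

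\textbf{Main obstacle.} There is no hard computation here; the only genuine subtlety — and the reason the theorem yields approximate rather than exact distributions — is the twofold appeal to Assumption~\ref{assumption:Independence_PVCellArea_SuccessProb} to split the load from the success probability, combined with the beta surrogate for the meta distribution. It is worth flagging explicitly that $N_{oc}$ is a function of $|V_{oc}|$ (hence of $\Phi$ and of the conditioning event $\y\in V_o$), while $p_c(\beta_c,\beta_e\mid\y,\Phi)$ depends on $\Phi$ through the interference field and on $R_o=\|\y\|$; the independence is therefore an assumption and not a theorem, whose accuracy is left to the numerical validation in Section~\ref{sec:NumericalResults}.
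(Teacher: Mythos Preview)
Your proposal is correct and follows essentially the same approach as the paper's proof in Appendix~D: factor the expectation via Assumption~\ref{assumption:Independence_PVCellArea_SuccessProb} into $\E[N_{oc}^{-1}]=\xi_c$ times $\E[p_c]=M_1^c(\chi_c)$ for the mean, and for the CDF condition on $N_{oc}$, rewrite the event in terms of $p_c$, clip at $1$, and substitute the beta approximation from \eqref{eq:BetaApp}. Your explicit remark that the $\min(\cdot,1)$ arises because $p_c\in[0,1]$ and your flagging of the twofold approximation (independence plus beta surrogate) are clearer than the paper's presentation, but the argument is the same.
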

 \begin{proof}
 Please refer to Appendix D.
 \end{proof}
For  OMA, we consider that each BS serves its associated CC and CE users for $\eta$ and $1-\eta$ fractions of time, respectively. Note that for $\eta=\frac{|V_{oc}|}{|V_o|}$, the above OMA scheduling scheme will be almost equivalent to the random scheduling wherein the typical BS randomly schedules one of its associated users in a given time slot. 
\begin{cor}
\label{cor:TransmissionRate_OMA}
 The mean transmission rates of the typical CC and CE users under OMA are
{ \begin{equation}
\begin{split}
\tilde{{R}}_{c}(\beta_c)& =\eta\xi_{c} \log_2\left(1+\beta_c\right)\tilde{M}_{1}^{c}(\beta_c)\\
\text{and~} \tilde{{R}}_{e}(\beta_e)& = (1-\eta)\xi_{e}\log_2\left(1+\beta_e\right)\tilde{M}_{1}^{e}(\beta_e),
\end{split}
\label{eq:MeanTransmissionRate_CC_CE_OMA}
\end{equation}} 
respectively, where $\tilde{M}_1^{c}(\chi_c)$ and $\tilde{M}_1^{e}(\chi_e)$ are given in \eqref{eq:MetaDisMoment_CC_CE_OMA} and, $s\in\{c,e\}$, $\xi_s=\sum_{n=1}^{\infty}\frac{1}{n}\P[N_{os}=n]$, and  $\P[N_{os}=n]$ is given in Lemma \ref{lemma:CC_CE_Loads}.
The $\cdf$ of the conditional transmission rates of the typical CC and CE users under OMA respectively are
{ \begin{align}
&\tilde{\ncalR}_c(\mathtt{r_{c}};\beta_c)=\nonumber\\
&\E_{N_{oc}}\left[I\left(\min\left( \frac{\mathtt{r_{c}}N_{oc}\eta ^{-1}}{\log_2(1+\beta_c)},1\right);\tilde{\kappa}_{1c},\tilde{\kappa}_{2c}\right)\right],\\
&\text{and }\tilde{\ncalR}_e(\mathtt{r_{e}};\beta_e)=\nonumber\\
&\E_{N_{oe}}\left[I\left(\min\left( \frac{\mathtt{r_{e}}N_{oe}(1-\eta)^{-1}}{\log_2(1+\beta_e)},1\right);\tilde{\kappa}_{1e},\tilde{\kappa}_{2e}\right)\right],
\end{align}}
where $s\in\{c,e\}$, $\tilde{\kappa}_{1s}$ and $\tilde{\kappa}_{2s}$ are given in Section \ref{subsec:BetaApproximation}, and $\P[N_{os}=n]$ is given in Lemma \ref{lemma:CC_CE_Loads}.
\end{cor}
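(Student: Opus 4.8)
The plan is to obtain Corollary~\ref{cor:TransmissionRate_OMA} as a direct adaptation of the proof of Theorem~\ref{thm:TransmissionRate_NOMA} (Appendix D), changing only two ingredients: the conditional success probability and the scheduling model. Under OMA there is no intra-cell interference, so I would work with the success events $\tilde{\ncalE}_c$ and $\tilde{\ncalE}_e$ of \eqref{eq:SuccessEevent_CC_CE_OMA} and the corresponding conditional success probabilities $\tilde{p}_c(\beta_c\mid\y,\Phi)$, $\tilde{p}_e(\beta_e\mid\y,\Phi)$, whose $b$-th moments are already supplied by Corollary~\ref{cor:OrthogonalTransmission}. Moreover, the random scheduling of Theorem~\ref{thm:TransmissionRate_NOMA} is replaced by the time-sharing rule in which a CC (resp.\ CE) user is served in a fraction $\eta$ (resp.\ $1-\eta$) of the slots, so the conditional transmission rates become $R_c(\y,\Phi)=\frac{\eta\,\tilde{p}_c(\beta_c\mid\y,\Phi)}{N_{oc}}\log_2(1+\beta_c)$ and $R_e(\y,\Phi)=\frac{(1-\eta)\,\tilde{p}_e(\beta_e\mid\y,\Phi)}{N_{oe}}\log_2(1+\beta_e)$, the OMA analogues of \eqref{eq:RateCondPhi_CC_CE}.

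For the mean rates I would take expectations of these expressions and invoke Assumption~\ref{assumption:Independence_PVCellArea_SuccessProb} to factor the load $N_{os}$ from the conditional success probability, giving $\bar{R}_c(\beta_c)=\eta\log_2(1+\beta_c)\,\E[1/N_{oc}]\,\E[\tilde{p}_c(\beta_c\mid\y,\Phi)]$ and similarly for the CE user. I would then identify $\E[\tilde{p}_c(\beta_c\mid\y,\Phi)]=\tilde{M}_1^c(\beta_c)$ from Corollary~\ref{cor:OrthogonalTransmission} and $\E[1/N_{os}]=\sum_{n\ge1}\frac1n\P[N_{os}=n]=\xi_s$ with $\P[N_{os}=n]$ from Lemma~\ref{lemma:CC_CE_Loads}, which yields \eqref{eq:MeanTransmissionRate_CC_CE_OMA}.

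For the CDFs I would condition on the load and again use Assumption~\ref{assumption:Independence_PVCellArea_SuccessProb}, writing $\tilde{\ncalR}_c(\mathtt{r_{c}};\beta_c)=\E_{N_{oc}}\big[\P[\tilde{p}_c(\beta_c\mid\y,\Phi)\le \mathtt{r_{c}}N_{oc}\eta^{-1}/\log_2(1+\beta_c)\mid N_{oc}]\big]$, and similarly for the CE user. Because $\tilde{p}_c(\beta_c\mid\y,\Phi)\in[0,1]$, the inner probability equals $1$ as soon as the threshold exceeds $1$, so I would replace the threshold by its minimum with $1$; for arguments in $[0,1]$ I would use $\P[\tilde{p}_c(\beta_c\mid\y,\Phi)\le x]=1-\bar{F}_c(\beta_c;x)$ together with the beta approximation of Section~\ref{subsec:BetaApproximation} applied to the OMA moments (parameters $\tilde{\kappa}_{1c},\tilde{\kappa}_{2c}$), which gives the stated expression; the CE case follows verbatim with $\eta\to 1-\eta$, $\beta_c\to\beta_e$, $N_{oc}\to N_{oe}$, and $\tilde{\kappa}_{\cdot c}\to\tilde{\kappa}_{\cdot e}$.

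Since the computation is essentially a transcription of Appendix D, I do not expect a genuine obstacle; the only points requiring care — exactly as in the NOMA proof — are the truncation at $1$ forced by $\tilde{p}\in[0,1]$ when inverting the rate event into a success-probability event, and the (already accepted) use of Assumption~\ref{assumption:Independence_PVCellArea_SuccessProb} to decouple the load from the link reliability; the replacement of the random scheduling by the deterministic fractions $\eta$ and $1-\eta$ only introduces these as multiplicative constants and does not otherwise affect the argument.
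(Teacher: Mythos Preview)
Your proposal is correct and follows essentially the same approach as the paper: the paper's proof simply writes the OMA conditional rates as $\frac{\eta}{N_{oc}}\log_2(1+\beta_c)\E[\mathbbm{1}_{\tilde{\ncalE}_c}\mid\y,\Phi]$ and $\frac{1-\eta}{N_{oe}}\log_2(1+\beta_e)\E[\mathbbm{1}_{\tilde{\ncalE}_e}\mid\y,\Phi]$ and then defers to Appendix~D, which is exactly the path you spell out in detail (Assumption~\ref{assumption:Independence_PVCellArea_SuccessProb} to decouple the load, Corollary~\ref{cor:OrthogonalTransmission} for the first moment, Lemma~\ref{lemma:CC_CE_Loads} for the load distribution, and the beta approximation with truncation at $1$ for the CDFs).
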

{\begin{proof}
In OMA case, the transmission rates of the CC and CE users, for given $\y$ and $\Phi$, are $\frac{\eta}{N_{oc}}\log_2(1+\beta_c)\E\left[\mathbbm{1}_{\tilde{\ncalE_c}}(\sir_{c})\mid\y,\Phi\right]$ and $\frac{1-\eta}{N_{oe}}\log_2(1+\beta_c)\E\left[\mathbbm{1}_{\tilde{\ncalE_e}}(\sir_{e})\mid\y,\Phi\right]$, respectively. Hence, further following the steps in Appendix D, we complete the proof. 
\end{proof}}
\subsection{Delay Analysis of the CC and CE Users}
\label{subsec:PacketDelay}
This subsection analyzes the delay performance of the typical CC/CE user for the given RT service under the NOMA setup described in Section \ref{subsec:SchedulinThroughputDelay}. Because of the assumption of saturated queues at the interfering BSs, the meta distributions derived in Section \ref{sec:MetaDistributions_NOMA_OMA} can be directly used to analyze the upper bound of the delay performance of the typical CC and CE user (see Section \ref{subsec:SchedulinThroughputDelay}). That said, the conditional packet transmission rate of the typical CC/CE user is the product of its scheduling probability and success probability as stated in \eqref{eq:DelayCondPhi_CC_CE}.  {It may be noted that the successful transmission events  across the  time slots are independent for given $\y$ and $\Phi$.  Hence, the service times of packets of the typical CC/CE user at $\y$ given $\Phi$ are i.i.d. and follow a geometric  distribution with parameter $\mu_c(\y,\Phi)$/$\mu_e(\y,\Phi)$.}  Besides, the packet arrives in each time slot as per the Bernoulli process with mean $\varrho_c$/$\varrho_e$. Thus, the queue of the typical CC/CE user can be modeled as the Geo/Geo/1 queue. {The upper bounds of the conditional mean delays of the typical CC and CE users becomes \cite{atencia2004discrete}}
{ \begin{align}
\begin{split}
D_c(\y,\Phi)&=\frac{1-\varrho_c}{\mu_c(\y,\Phi)-\varrho_c}\mathbbm{1}_{\mu_c(\y,\Phi)>\varrho_c}\\
\text{ and } D_e(\y,\Phi)&=\frac{1-\varrho_e}{\mu_e(\y,\Phi)-\varrho_e}\mathbbm{1}_{\mu_e(\y,\Phi)>\varrho_e},
\end{split}\label{eq:MeanConditionalDelay_CC_CE}
\end{align}}
respectively. Let $\mathtt{t_{c}}$ and $\mathtt{t_e}$ are mean delay thresholds of the CC and CE users, respectively.
 \begin{theorem}
 \label{thm:CDF_CondMeanDelay}
 The complementary $\cdf$ ($\ccdf$) of the conditional mean delays of the typical CC and  CE users under NOMA with random scheduling are upper bounded respectively by
  \begin{align}
&\ncalD_c(\mathtt{t_{c}};\chi_c)=\nonumber\\
& \E_{N_{oc}}\left[I\left(\min\left(N_{oc}\left(\frac{1-\varrho_c}{\mathtt{t_{c}}}+\varrho_c\right),1\right);\kappa_{1c},\kappa_{2c}\right)\right],\label{eq:DelayCDF_UpperBound_CC}\\
&\text{and }\ncalD_e(\mathtt{t_{e}};\chi_e)=\nonumber\\
&\E_{N_{oe}}\left[I\left(\min\left(N_{oe}\left(\frac{1-\varrho_e}{\mathtt{t_{e}}}+\varrho_e\right),1\right);\kappa_{1e},\kappa_{2e}\right)\right]
\label{eq:DelayCDF_UpperBound_CE}
\end{align}
where $s\in\{c,e\}$, $\kappa_{1s}$ and $\kappa_{2s}$ are given in \eqref{eq:BetaApp}, and  $\P[N_{oc}=n]$ is given by Lemma \ref{lemma:CC_CE_Loads}.
 \end{theorem}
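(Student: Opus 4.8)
The plan is to start from the Geo/Geo/1 mean-delay expressions in \eqref{eq:MeanConditionalDelay_CC_CE} and rewrite the tail event $\{D_c(\y,\Phi)>\mathtt{t_c}\}$ as an event on the conditional success probability, which is exactly what the beta-approximated meta distribution of \eqref{eq:BetaApp} describes. Fix $\varrho_c\in(0,1)$ and $\mathtt{t_c}>0$. On the stable regime $\{\mu_c(\y,\Phi)>\varrho_c\}$ one has $D_c(\y,\Phi)=\frac{1-\varrho_c}{\mu_c(\y,\Phi)-\varrho_c}$, so, writing $x_c:=\frac{1-\varrho_c}{\mathtt{t_c}}+\varrho_c$, the event $\{D_c(\y,\Phi)>\mathtt{t_c}\}$ restricted to that regime coincides with $\{\mu_c(\y,\Phi)<x_c\}$. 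Since $x_c>\varrho_c$, the event $\{\varrho_c<\mu_c(\y,\Phi)<x_c\}$ is contained in $\{\mu_c(\y,\Phi)<x_c\}$; equivalently, in the unstable regime $\mu_c(\y,\Phi)\le\varrho_c$ the backlog diverges and the delay exceeds every finite $\mathtt{t_c}$, so that regime may simply be absorbed into the tail event. Either reading gives $\P[D_c(\y,\Phi)>\mathtt{t_c}]\le\P\left[\mu_c(\y,\Phi)<x_c\right]$, and the identical argument with $(\varrho_e,\mathtt{t_e},\mu_e,x_e)$ handles the CE user.

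Next I would substitute $\mu_c(\y,\Phi)=p_c(\beta_c,\beta_e\mid\y,\Phi)/N_{oc}$ from \eqref{eq:DelayCondPhi_CC_CE}, so that the bounding event reads $\{p_c(\beta_c,\beta_e\mid\y,\Phi)<x_c N_{oc}\}$, and condition on the CC load $N_{oc}$. Invoking Assumption~\ref{assumption:Independence_PVCellArea_SuccessProb} to decouple $N_{oc}$ from the conditional success probability, the tower rule gives $\P[\mu_c(\y,\Phi)<x_c]=\E_{N_{oc}}\left[\P[\,p_c(\beta_c,\beta_e\mid\y,\Phi)<x_c N_{oc}\mid N_{oc}]\right]=\E_{N_{oc}}\left[1-\bar{F}_{\text{c}}(\chi_c;\min(x_c N_{oc},1))\right]$, where the $\min(\cdot,1)$ accounts for $p_c\le1$ almost surely, so that $\P[p_c<x_c N_{oc}\mid N_{oc}]$ equals $1-\bar{F}_{\text{c}}(\chi_c;\min(x_c N_{oc},1))$ in all cases (the approximating beta c.d.f.\ being continuous, the value at the boundary $x_c N_{oc}=1$ is immaterial). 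The law of $N_{oc}$ is the zero-truncated-Poisson mixture of Lemma~\ref{lemma:CC_CE_Loads}, which turns $\E_{N_{oc}}[\cdot]$ into the explicit series.

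Finally I would insert the beta approximation $\bar{F}_{\text{c}}(\chi_c;x)\approx1-I(x;\kappa_{1c},\kappa_{2c})$ from \eqref{eq:BetaApp}, so that $1-\bar{F}_{\text{c}}(\chi_c;\min(x_c N_{oc},1))\approx I(\min(x_c N_{oc},1);\kappa_{1c},\kappa_{2c})$, and then take $\E_{N_{oc}}[\cdot]$ to recover \eqref{eq:DelayCDF_UpperBound_CC}; repeating the three steps verbatim for the CE user (with $\chi_e$, $\kappa_{1e},\kappa_{2e}$, $N_{oe}$, and $x_e=\frac{1-\varrho_e}{\mathtt{t_e}}+\varrho_e$) yields \eqref{eq:DelayCDF_UpperBound_CE}. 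Two sources of slack should be flagged: the delay expression \eqref{eq:MeanConditionalDelay_CC_CE} is itself an upper bound on the true conditional mean delay because the saturated-queue assumption at the interfering BSs underestimates $p_c$ and $p_e$ (Section~\ref{subsec:SchedulinThroughputDelay}), and the beta substitution is an approximation rather than a strict inequality. I expect the only delicate point to be the set bookkeeping of the first paragraph --- cleanly relating the ``stable-but-slow'' and ``unstable'' regimes to the single half-line $\{\mu_c(\y,\Phi)<x_c\}$ --- together with a careful use of Assumption~\ref{assumption:Independence_PVCellArea_SuccessProb}; the remaining manipulations are routine.
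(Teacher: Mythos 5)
Your proposal is correct and follows essentially the same route as the paper: rewrite the delay-tail event as $\{\mu_c(\y,\Phi)\lessgtr x_c\}$ with $x_c=\frac{1-\varrho_c}{\mathtt{t_c}}+\varrho_c$ (the paper works with the complementary event $\{D_c<\mathtt{t_c}\}=\{\mu_c>x_c\}$, noting $x_c>\varrho_c$ absorbs the stability condition), condition on $N_{oc}$ via Assumption~\ref{assumption:Independence_PVCellArea_SuccessProb}, and evaluate the resulting probability on $p_c$ through the beta-approximated meta distribution with the $\min(\cdot,1)$ truncation. Your explicit flagging of the two sources of slack (saturated interferers and the beta approximation) matches the paper's framing of the result as an upper bound.
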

 \begin{proof}  
Using Assumption 1 along with \eqref{eq:DelayCondPhi_CC_CE} and  \eqref{eq:MeanConditionalDelay_CC_CE}, the $\cdf$ of $D_c(\y,\Phi)$ becomes $ \P[D_c(\y,\Phi)< \mathtt{t_{c}}]$
 { \begin{align*}
&=\P\left[\mu_c(\y,\Phi)>\frac{1-\varrho_c}{\mathtt{t_c}}+\varrho_c,\mu_c(\y,\Phi)>\varrho_c\right],\\
&= \E_{N_{oc}}\left[\P\left(p_c(\beta_c,\beta_e\mid\y,\Phi)> N_{oc}\left(\frac{1-\varrho_c}{\mathtt{t_{c}}}+\varrho_c\right)\mid N_{oc}\right)\right]
\end{align*}}  
{Further, using the meta distribution for the system model discussed in Section \ref{subsec:SchedulinThroughputDelay}, the $\ccdf$ of the upper bounded mean delay of the CC users $D_c(\y,\Phi)$ is given by \eqref{eq:DelayCDF_UpperBound_CC}. Similarly,  the $\ccdf$ of the upper bounded mean delay of CE users $D_e(\y,\Phi)$  is obtained as given in \eqref{eq:DelayCDF_UpperBound_CE}.} 
 \end{proof}
For OMA, the mean delay of the typical CC and CE users at $\y$ conditioned on $\Phi$ becomes
{ \begin{align}
\begin{split}
\tilde{\mu}_{c}(\y,\Phi) &=\frac{\eta\E\left[\mathbbm{1}_{\tilde{\ncalE}_c}(\sir_{c})\mid\y,\Phi\right]}{N_{oc}}\\
\text{ and } \tilde{\mu}_{e}(\y,\Phi) &= \frac{(1-\eta)\E\left[\mathbbm{1}_{\tilde{\ncalE}_e}(\sir_{e})\mid\y,\Phi\right]}{N_{oe}},
\end{split}
\label{eq:DelayCondPhi_CC_CE_OMA}
\end{align}}
The following corollary presents the upper bounded $\ccdf$s of mean delays for the OMA case.
 \begin{cor}
 \label{cor:CDF_CondMeanDelay}
  The $\ccdf$ of the mean delays of the typical CC and  CE users under OMA with random scheduling are upper bounded respectively by
 \begin{align}
&\tilde{\ncalD}_c(\mathtt{t_{c}};\beta_c)=\nonumber\\
& \E_{N_{oc}}\left[I\left(\min\left( \frac{N_{oc}}{\eta}\left(\frac{1-\varrho_c}{\mathtt{t_{c}}}+\varrho_c\right),1\right);\tilde{\kappa}_{1c},\tilde{\kappa}_{2c}\right)\right],\label{eq:DelayCDF_UpperBound_CC_OMA}\\
&\text{and }\tilde{\ncalD}_e(\mathtt{t_{e}};\beta_e)=\nonumber\\
& \E_{N_{oe}}\left[I\left(\min\left(\frac{N_{oe}}{1-\eta}\left(\frac{1-\varrho_e}{\mathtt{t_{e}}}+\varrho_e\right),1\right);\tilde{\kappa}_{1e},\tilde{\kappa}_{2e}\right)\right],\label{eq:DelayCDF_UpperBound_CE_OMA}
\end{align}
where $s\in\{c,e\}$, $\tilde{\kappa}_{1s}$ and $\tilde{\kappa}_{2s}$ are given in Section \ref{subsec:BetaApproximation}, and  $\P[N_{os}=n]$ is given in Lemma \ref{lemma:CC_CE_Loads}.
 \end{cor}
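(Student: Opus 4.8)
The plan is to reproduce the argument used for Theorem~\ref{thm:CDF_CondMeanDelay}, with the NOMA ingredients replaced by their OMA counterparts from \eqref{eq:DelayCondPhi_CC_CE_OMA}, Corollary~\ref{cor:OrthogonalTransmission}, and Section~\ref{subsec:BetaApproximation}. First I would model the queue of the typical CC (resp. CE) user under OMA as a Geo/Geo/1 queue whose service completions occur at the conditional successful packet transmission rate $\tilde{\mu}_c(\y,\Phi)$ (resp. $\tilde{\mu}_e(\y,\Phi)$) given in \eqref{eq:DelayCondPhi_CC_CE_OMA}; by \cite{atencia2004discrete} the corresponding upper bound on the conditional mean delay is $\tilde{D}_c(\y,\Phi)=\frac{1-\varrho_c}{\tilde{\mu}_c(\y,\Phi)-\varrho_c}\mathbbm{1}_{\tilde{\mu}_c(\y,\Phi)>\varrho_c}$, and analogously for the CE user with $\varrho_e$. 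Writing $\tilde{p}_c(\beta_c\mid\y,\Phi)\nbydef\E[\mathbbm{1}_{\tilde{\ncalE}_c}(\sir_c)\mid\y,\Phi]$ for the OMA conditional success probability of the CC user (whose moments are supplied by Corollary~\ref{cor:OrthogonalTransmission}), \eqref{eq:DelayCondPhi_CC_CE_OMA} gives $\tilde{\mu}_c(\y,\Phi)=\frac{\eta}{N_{oc}}\tilde{p}_c(\beta_c\mid\y,\Phi)$, and similarly $\tilde{\mu}_e(\y,\Phi)=\frac{1-\eta}{N_{oe}}\tilde{p}_e(\beta_e\mid\y,\Phi)$.

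Next I would rearrange the delay-outage event. Since $\frac{1-\varrho_c}{\mathtt{t_{c}}}+\varrho_c>\varrho_c$ for every $\mathtt{t_{c}}>0$, the stability indicator is automatically subsumed, so $\{\tilde{D}_c(\y,\Phi)<\mathtt{t_{c}}\}=\{\tilde{\mu}_c(\y,\Phi)>\frac{1-\varrho_c}{\mathtt{t_{c}}}+\varrho_c\}=\{\tilde{p}_c(\beta_c\mid\y,\Phi)>\frac{N_{oc}}{\eta}(\frac{1-\varrho_c}{\mathtt{t_{c}}}+\varrho_c)\}$. Invoking Assumption~\ref{assumption:Independence_PVCellArea_SuccessProb} to decouple the CC load $N_{oc}$ from the conditional success probability, and conditioning on $N_{oc}$, this yields $\P[\tilde{D}_c(\y,\Phi)<\mathtt{t_{c}}]=\E_{N_{oc}}[\P(\tilde{p}_c(\beta_c\mid\y,\Phi)>\frac{N_{oc}}{\eta}(\frac{1-\varrho_c}{\mathtt{t_{c}}}+\varrho_c)\mid N_{oc})]$.

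Then I would identify the inner conditional probability, for each realization of $N_{oc}$, with the OMA meta distribution of the CC user evaluated at the (random) reliability threshold $\frac{N_{oc}}{\eta}(\frac{1-\varrho_c}{\mathtt{t_{c}}}+\varrho_c)$; because $\tilde{p}_c\in[0,1]$, this equals $1-I(\min(\cdot,1);\tilde{\kappa}_{1c},\tilde{\kappa}_{2c})$ under the beta approximation of Section~\ref{subsec:BetaApproximation}, with $\tilde{\kappa}_{1c},\tilde{\kappa}_{2c}$ built from the moments $\tilde{M}_1^c,\tilde{M}_2^c$ of \eqref{eq:MetaDisMoment_CC_CE_OMA}. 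Taking $\tilde{\ncalD}_c(\mathtt{t_{c}};\beta_c)=1-\P[\tilde{D}_c(\y,\Phi)<\mathtt{t_{c}}]$ (the $\ccdf$) gives \eqref{eq:DelayCDF_UpperBound_CC_OMA}; repeating the whole argument with the substitutions $\eta\mapsto 1-\eta$, $N_{oc}\mapsto N_{oe}$, $\varrho_c\mapsto\varrho_e$, $\mathtt{t_{c}}\mapsto\mathtt{t_{e}}$, $\beta_c\mapsto\beta_e$, and $(\tilde{\kappa}_{1c},\tilde{\kappa}_{2c})\mapsto(\tilde{\kappa}_{1e},\tilde{\kappa}_{2e})$ produces \eqref{eq:DelayCDF_UpperBound_CE_OMA}.

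I do not expect a genuinely hard step here: the corollary is a bookkeeping variant of Theorem~\ref{thm:CDF_CondMeanDelay}. The only points needing care are (i) correctly propagating the OMA time-sharing fractions $\eta$ and $1-\eta$, which scale the reliability threshold inside the meta distribution by $\eta^{-1}$ and $(1-\eta)^{-1}$ rather than by $1$ as in random scheduling, and (ii) using the OMA moments $\tilde{M}_b^c,\tilde{M}_b^e$ and hence the OMA beta parameters $\tilde{\kappa}_{1s},\tilde{\kappa}_{2s}$ in place of their NOMA versions, reflecting that OMA removes intra-cell interference and therefore changes the distribution of the conditional success probability.
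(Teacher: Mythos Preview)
Your proposal is correct and follows essentially the same approach as the paper: the paper's own proof simply states that one uses the OMA successful packet transmission rates $\tilde{\mu}_c(\y,\Phi)$ and $\tilde{\mu}_e(\y,\Phi)$ from \eqref{eq:DelayCondPhi_CC_CE_OMA} and then follows the proof of Theorem~\ref{thm:CDF_CondMeanDelay}, which is exactly what you spell out in detail. Your careful remarks about propagating the $\eta^{-1}$ and $(1-\eta)^{-1}$ scalings and swapping in the OMA beta parameters $\tilde{\kappa}_{1s},\tilde{\kappa}_{2s}$ are precisely the bookkeeping changes the paper leaves implicit.
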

{ \begin{proof}
Using the successful packet transmission rates $\tilde{\mu}_{c}(\y,\Phi)$ and $\tilde{\mu}_{e}(\y,\Phi)$ given in \eqref{eq:DelayCondPhi_CC_CE_OMA} and further following the proof of Theorem \ref{thm:CDF_CondMeanDelay}, we obtained the upper bounds on $\ccdf$s of mean delays of the CC and CE users under OMA as given in \eqref{eq:DelayCDF_UpperBound_CC_OMA} and \eqref{eq:DelayCDF_UpperBound_CE_OMA}, respectively.
 \end{proof}}
\section{Resource Allocation and Performance Comparison}
\label{sec:CSEoptimization}
{In this section, we focus on the RA for maximizing the network performance under  both NRT and RT services while meeting the QoS constraints of the CC and CE users. 
For NRT services, we focus on the maximization of $\csr$ such that the minimum transmission rates of the CC and CE users are ensured. However, for RT services, we consider the maximization of $\sec$ such that  CC and CE services with minimum arrival rates can be supported, and their corresponding packet transmission delays are also bounded. First, we develop an efficient method to obtain near-optimal RA for the NRT services in the following subsection. }
{\subsection{RA under NRT services}
\label{subsec:RA_NRT}
Using the success probabilities of the CC and CE users,  $\csr$s  for the NOMA and OMA systems can be, respectively, obtained as 
\begin{align}
\csr_\mathtt{NOMA}=&\mathtt{B}\log_2(1+\beta_c)M_1^c(\chi_c)\nonumber\\
&+\mathtt{B}\log_2(1+\beta_e)M_1^e(\chi_e),\\
\text{and}~ \csr_\mathtt{OMA}=&\eta\mathtt{B}\log_2(1+\beta_c)\tilde{M}_1^c(\beta_c)\nonumber\\
&+(1-\eta)\mathtt{B}\log_2(1+\beta_e)\tilde{M}_1^e(\beta_e).
\end{align}
The RA formulation for the NRT services is as follows.}

{$\bullet$ $\mathcal{P}_1$- $\csr$ maximization subject to the minimum mean rates of the CC and CE users.
\begin{align*}
\text{NOMA:}\  
\hspace{-0mm}\begin{split}
\max~&\csr_\mathtt{NOMA}\\
\text{s.t.}~&0<\theta<1\\
&\bar R_c(\chi_c)\geq \mathtt{R_c}\\
&\bar R_e(\chi_e)\geq \mathtt{R_e},
\end{split}
\text{~and~OMA:}\  
\hspace{-3mm}\begin{split}
\max~&\csr_\mathtt{OMA}\\
\text{s.t.}~&0<\eta<1\\
&\tilde{R}_c(\beta_c)\geq \mathtt{R_c}\\
&\tilde{R}_e(\beta_e)\geq \mathtt{R_e}.
\end{split}
\end{align*}}
{\subsubsection{ Near-optimal RA for $\ncalP_1$-NOMA}  {It is difficult to obtain an exact optimal RA to $\ncalP_1$-NOMA as it does not fall in the standard convex-optimization framework. Therefore, we present an efficient method to obtain a near-optimal solution based on the insights obtained from the NOMA analysis.} It is natural to allocate the remaining power to the CC user after achieving the minimum transmission rate of the CE user in order to maximize the $\csr$. Therefore, we consider maximizing the mean transmission rate of the CC user under the constraints of the minimum transmission rates of the CC and CE users. 
One can easily see that $\theta\leq\theta_{\mathtt{NC}}=(1+\beta_e)^{-1}$ is the necessary condition for $\sir_e\geq\beta_e$. 
From \eqref{eq:SuccessEevent_CC}, we note that the success probability of the CC user increases with the decrease of $\chi_c$. The success probability (thus, the mean transmission rate) of the CC user is an increasing and decreasing functions of $\theta$ for $0<\theta\leq\hat\theta$ and $\hat\theta<\theta\leq \theta_{\mathtt{NC}}$, respectively, where 
\begin{equation}
\hat\theta=\underset{{0<\theta\leq \theta_{\mathtt{NC}}}}{\mathrm{argmin}}~\chi_c=\left\{\theta:\frac{\beta_c}{\theta}=\frac{\beta_e}{1-\theta(1+\beta_e)}\right\}.
\label{eq:theta_critical}
\end{equation} 
From the above, we know that there are two solutions (if any exists), denoted by $\theta_{lc}$ and $\theta_{uc}$, to $\bar{R}_c(\chi_c)=\mathtt{R_c}$  such that $\theta_{lc}\leq \hat\theta$ and $\theta_{uc}\geq\hat\theta$. Besides,  the transmission rate of the CE user is a  non-increasing function of $\theta$. 
Let $\theta_e$ be the solution of $\bar{R}_e(\chi_e)=\mathtt{R_e}$. Hence, from the above discussion, the optimal allocation becomes $\theta^*=\min\{\theta_e,\hat\theta\}$ if $\theta_{lc}$ and $\theta_{e}$ exist such that $\theta_e\geq\theta_{lc}$. }
{\subsubsection{ Optimal RA for $\ncalP_1$-OMA} For $\beta_c>\beta_e$ and $\tilde{M}_1^c(\beta_c)>\tilde{M}_1^e(\beta_e)$, one can easily infer that the $\mathtt{CSE_{OMA}}$ and mean transmission rate of the CC user are monotonically increasing functions of $\eta$. In contrast, the  mean transmission rate of the CE user is a monotonically decreasing function of $\eta$. 
Therefore, it is straightforward to choose the optimal solution $\eta^*=\eta_e$ for OMA if $\eta_c\leq\eta_e$ where $\eta_c$ and $\eta_e$ are the solutions of $\tilde{R}_c(\chi_c)=\mathtt{R_c}$ and $\tilde{R}_e(\chi_e)=\mathtt{R_e}$, respectively.
Note that there is no feasible solution if $\eta_c>\eta_e$.} 
{\subsection{RA under RT services}
\label{subsec:RA_RT}
To evaluate the $\ec$, we define the delay QoS constraint for the CC/CE users as the probability that the delay outage is below $\mathtt{O}_c$/$\mathtt{O}_e$ for the given mean delay threshold $\mathtt{t_c}$/$\mathtt{t_e}$.
Therefore, using the upper bound of the mean delay outage and the fact that the mean delay outage is a monotonically increasing function of the arrival rate, the lower bounds of $\ec$s  for the CC and CE services under NOMA and OMA become
 \begin{align}
 \ec_\mathtt{NOMA}^s&=\{\varrho_s\in\R_+:\ncalD_s(\mathtt{t_s},\chi_s)=\mathtt{O}_s\}\\
 \text{ and } \ec_\mathtt{OMA}^s&=\{\varrho_s\in\R_+:\tilde{\ncalD}_s(\mathtt{t_s},\beta_s)=\mathtt{O}_s\},
\end{align}
respectively, where $s=\{c,e\}$. The RAs  for RT services are formulated  as below.}

 $\bullet$ $\mathcal{P}_2$- $\sec$ maximization subject to the minimum $\ec$s for the CC and CE users. \\
\begin{align*}
\text{NOMA:}
\begin{split}
\max~&\ec_\mathtt{NOMA}^c+\ec_\mathtt{NOMA}^e\\
\text{s.t.}~&0<\theta<1\\
&\ec_\mathtt{NOMA}^c\geq \bar{\varrho}_c\\
&\ec_\mathtt{NOMA}^e\geq \bar{\varrho}_e,
\end{split}
\text{~and~OMA:}
\hspace{-3mm}\begin{split}
\max~&\ec_\mathtt{OMA}^c+\ec_\mathtt{OMA}^e\\
\text{s.t.}~&0<\eta<1\\
&\ec_\mathtt{OMA}^c\geq \bar{\varrho}_c\\
&\ec_\mathtt{OMA}^e\geq \bar{\varrho}_e.
\end{split}
\end{align*}
The $\ec$ constraints of RA formulation $\ncalP_2$ can facilitate  RT services for the CC and CE users requiring minimum packet arrival rates $\tilde{\varrho}_c$ and $\tilde{\varrho}_e$, respectively. At the same time, this formulation also ensures that the  outage probability of the mean delays  for the CC (CE) user for the given delay threshold $\mathtt{t_c}$ ($\mathtt{t_e}$) is below $\mathtt{O}_c$  ($\mathtt{O}_e$).   The local mean packet delay (i.e., the mean number of slots required for the successful delivery of packet) is equal to the first inverse moment of the conditional success probability \cite{Haenggi2013local}. It is therefore natural that the $\ec$ is higher for the lower inverse moment of the conditional success probability. In fact, it will be evident in section \ref{sec:NumericalResults} that the outages of delay and transmission rate follow similar trends w.r.t. the power allocation $\theta$ under NOMA or the time allocation $\eta$ under OMA. Hence, we can infer that the $\ec$s of CC and CE users also behave similar to their transmission rates. Hence, the maximization of $\sec$ is similar to the maximization of $\csr$. 
{\subsubsection{Near-optimal RA for $P_2$-NOMA} 
Since $\ncalP_2$-NOMA is an non-convex-optimization problem, we adopt the similar approach developed in subsection \ref{subsec:RA_NRT} to obtain its near-optimal solution. 
As already implied above, the $\ec$ of CC users is an increasing and decreasing functions of $\theta$ for $0<\theta\leq\hat\theta$ and $\hat\theta<\theta\leq\theta_{\mathtt{NC}}$, respectively. However, the $\ec$ for the CE users is a decreasing function of $\theta$. Let $\tilde\theta_{lc}$ and $\tilde{\theta}_{uc}$ be the two solutions of $\mathtt{EC}_\mathtt{NOMA}^c=\bar\varrho_c$  such that $\tilde\theta_{lc}\leq \hat\theta$ and $\tilde\theta_{uc}\geq\hat\theta$, and $\tilde\theta_e$ is the solution of $\mathtt{EC}_\mathtt{NOMA}^e=\bar\varrho_e$. Hence, the optimal power allocation becomes $\tilde{\theta}^*=\min(\tilde\theta_e,\hat{\theta})$ if $\tilde\theta_{lc}$ and $\tilde\theta_e$ exist such that $\tilde\theta_e\geq\tilde\theta_{lc}$.}
\subsubsection{Optimal RA for $\ncalP_2$-OMA} Note that $\mathtt{EC}_\mathtt{OMA}^c$ increases with $\eta$ as the scheduling probability for the CC users increases with $\eta$. However, $\mathtt{EC}_\mathtt{OMA}^e$ follows exactly the opposite trend. Besides, allocating more transmission time to the CC users obviously results in higher $\sec$ because of $\tilde{M}_1^c(\beta_c)>\tilde{M}_1^e(\beta_e)$. Therefore, we can obtain the optimal solution $\tilde{\eta}^*=\tilde{\eta}_e$ for OMA if $\tilde\eta_c\leq \tilde\eta_e$ where $\tilde\eta_c$ and $\tilde\eta_e$ are the solutions of $\mathtt{EC}_\mathtt{OMA}^c=\bar{\varrho}_c$ and $\mathtt{EC}_\mathtt{OMA}^e=\bar{\varrho}_e$, respectively.
\begin{figure*}
 \centering
\includegraphics[width=.45\textwidth]{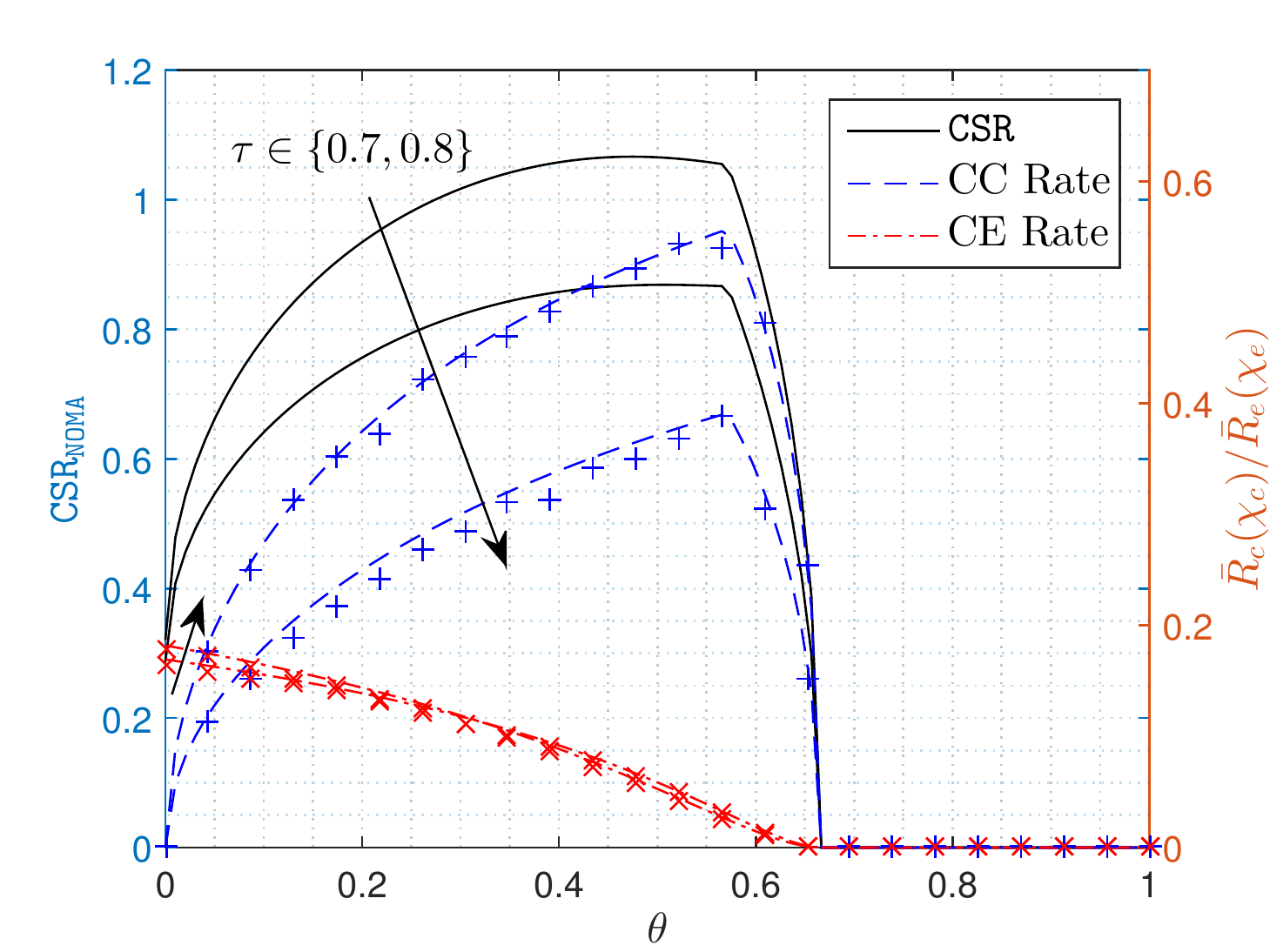}
  \includegraphics[width=.45\textwidth]{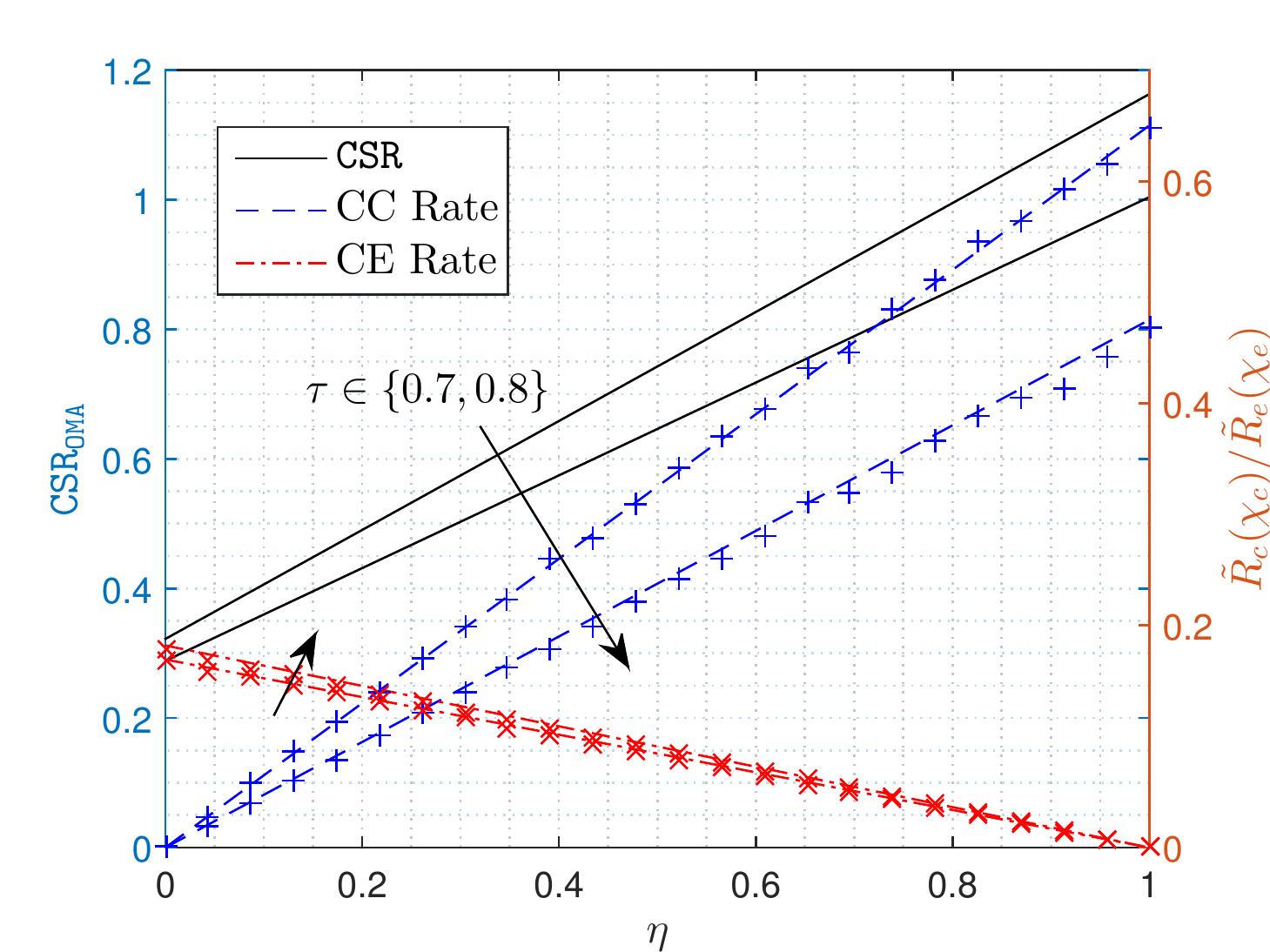}  
\caption{$\csr$ and mean transmission rates of CC and CE users under NOMA (Left) and OMA (Right).  The solid and dashed curves correspond to the analytical results, and the markers correspond to the simulation results.}
\label{fig:MeanThrouput_CSE}
\end{figure*}
{\subsection{Performance gain of NOMA}
  {Due to the intra-cell interference,  the successful transmission probabilities of both the CC and CE users drop in the NOMA system compared to those in OMA system.} However, concurrent transmissions increase their scheduling probabilities in NOMA as compared to those of the OMA system. Since $\tilde{M}_1^c(\beta_c)={M}_1^c(\beta_c)$ and $\tilde{M}_1^e(\beta_e)={M}_1^c(\beta_e)$, the performance gain in the transmission rates of CC and CE users are, respectively, given by
\begin{equation}
\mathtt{g}_c=\frac{M_1^c(\chi_c)}{\eta M_1^c(\beta_c)} ~\text{and}~ \mathtt{g}_e=\frac{M_1^e(\chi_e)}{(1-\eta) M_1^e(\beta_e)}.
\label{eq:NOMA_Gain}
\end{equation}
Thus, the transmission gain is an increasing function of $\theta$ in the interval $(0,\hat\theta]$ for the CC users, whereas  it is a decreasing function of $\theta$ for the CE users.
This implies that there is a performance trade-off between the transmission gains for the CC and CE users.
For a given $\eta$, we have $\mathtt{g}_c>1$ for a set $\Theta_c(\eta)=\{\theta\in[0,\hat\theta]: \frac{M_1^c(\chi_c)}{ M_1^c(\beta_c)} >\eta\}$ and $\mathtt{g}_e>1$ for a set $\Theta_e(\eta)=\{\theta\in[0,\hat\theta]: \frac{M_1^e(\chi_e)}{ M_1^c(\beta_e)} >1-\eta\}$.  Therefore, NOMA is beneficial for both CC and CE users compared to OMA only if $\Theta_c(\eta)\cap\Theta_e(\eta)\neq\emptyset$. Otherwise, at least one of these two types of users will underperform in NOMA  compared to OMA.  It is difficult to analytically show that the intersection of $\Theta_c(\eta)$ and $\Theta(\eta)$ is non-empty for a given $\eta$. That said,  it will be evident in Section \ref{sec:NumericalResults} that this condition does not hold only for higher values of $\eta$.}
\section{Numerical Results and Discussion}
\label{sec:NumericalResults}
In this section, we first verify the accuracy of analytical results by comparing them with the simulation results obtained through Monte Carlo simulations. 
Next, we discuss the performance trends of the achievable $\csr$, and the transmission rates and mean delays of the CC and CE users. Further, we also compare the performances of these  metrics for the NOMA and OMA systems. 
For this, we consider $\lambda=1$, $\nu=5$, $\alpha=4$, $\tau=0.7$, $(\beta_c,\beta_e)=(3,-3)$ in dB, $(\mathtt{r_c},\mathtt{r_e})=(0.1,0.05)$, $(\varrho_c,\varrho_e)=(0.05,0.05)$, and $(\mathtt{t_c},\mathtt{t_e})=(20,30)$,  unless mentioned otherwise. 

Fig. \ref{fig:MeanThrouput_CSE} depicts that the mean transmission rates of the CC and CE users closely match with the simulation results for both the NOMA and OMA systems. The curves correspond to the analytical results, whereas the  markers correspond to the simulation results. The figure gives the visual verifications of the  trends of transmission rate functions discussed in Section \ref{sec:CSEoptimization}. 
It can be seen that the $\csr$ and the transmission rates are zero for $\theta>\theta_{\mathtt{NC}}\approx0.66$ which  verifies the necessary condition  for NOMA operation. In addition, we can see that the transmission rate of the CE user monotonically decreases with $\theta\leq\theta_{\mathtt{NC}}$, whereas the transmission rate of the CC user is a monotonically increasing and decreasing function of $\theta$ for $0<\theta\leq \hat\theta$ and $\hat\theta<\theta\leq\theta_{\mathtt{NC}}$, respectively, where $\hat\theta=0.5$ (see \eqref{eq:theta_critical}).

\begin{figure}[h]
 \centering\vspace{-3mm}
  \includegraphics[width=.45\textwidth]{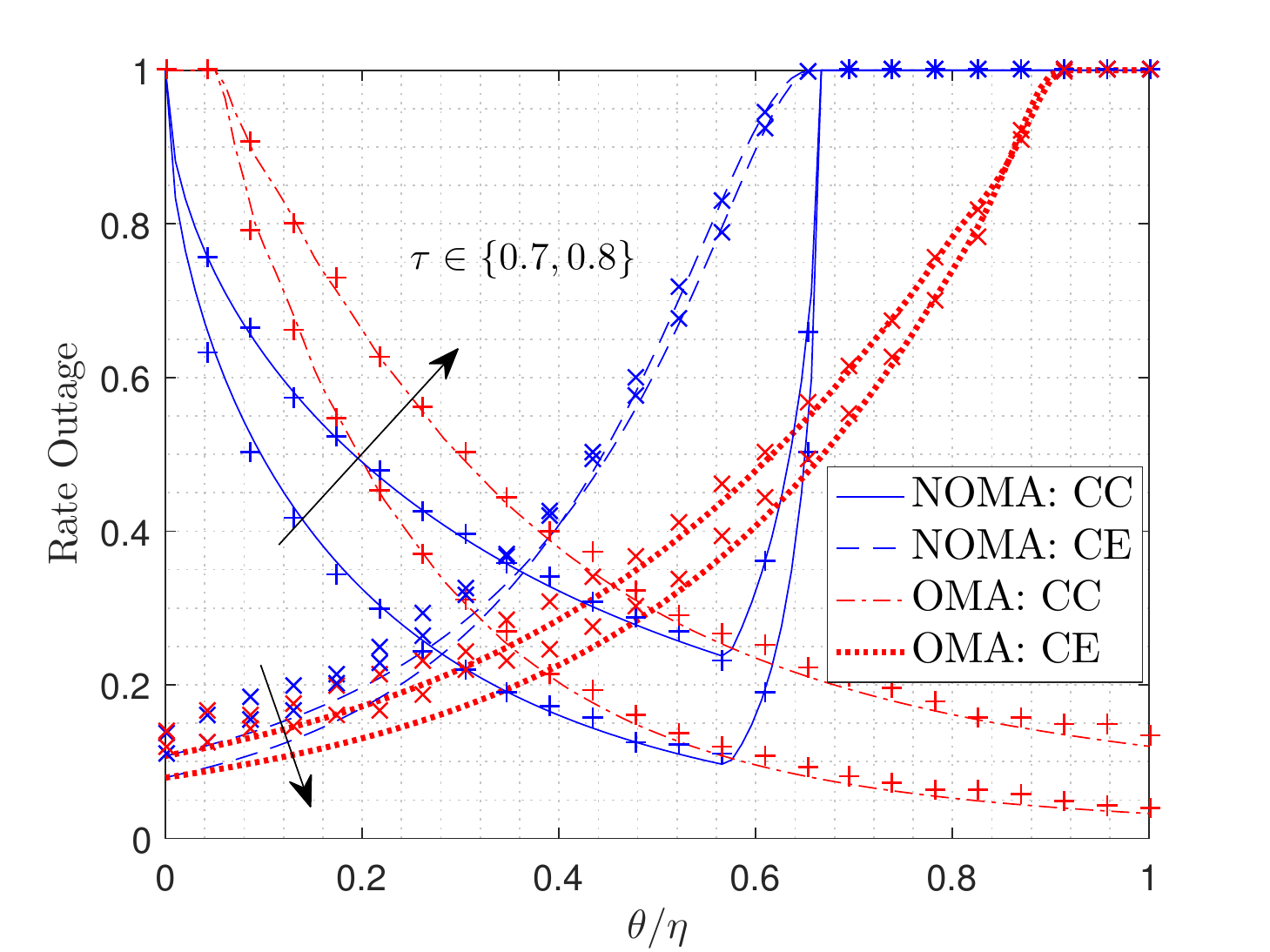}
    \includegraphics[width=.45\textwidth]{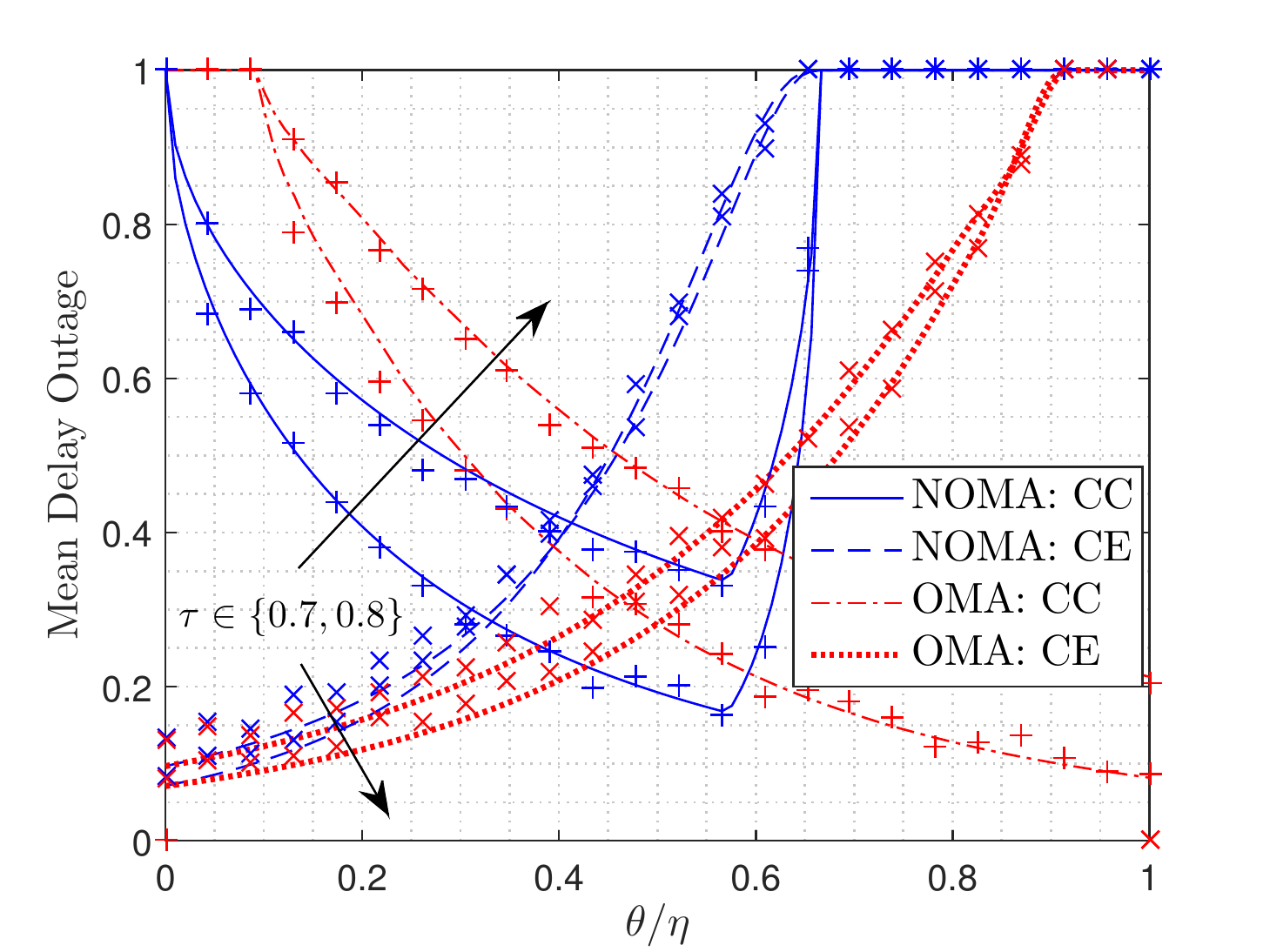}\vspace{-.2cm}
\caption{The outage probabilities of transmission rates (Left) and mean delays (Right) of the CC and CE users. The solid and dashed curves correspond to the analytical results and the markers correspond to the simulation results.}\vspace{-2mm}\label{fig:Rate_Delay_Outage}
\end{figure}
\begin{figure*}
 \centering
 \hspace{-.3cm}\includegraphics[width=.35\textwidth]{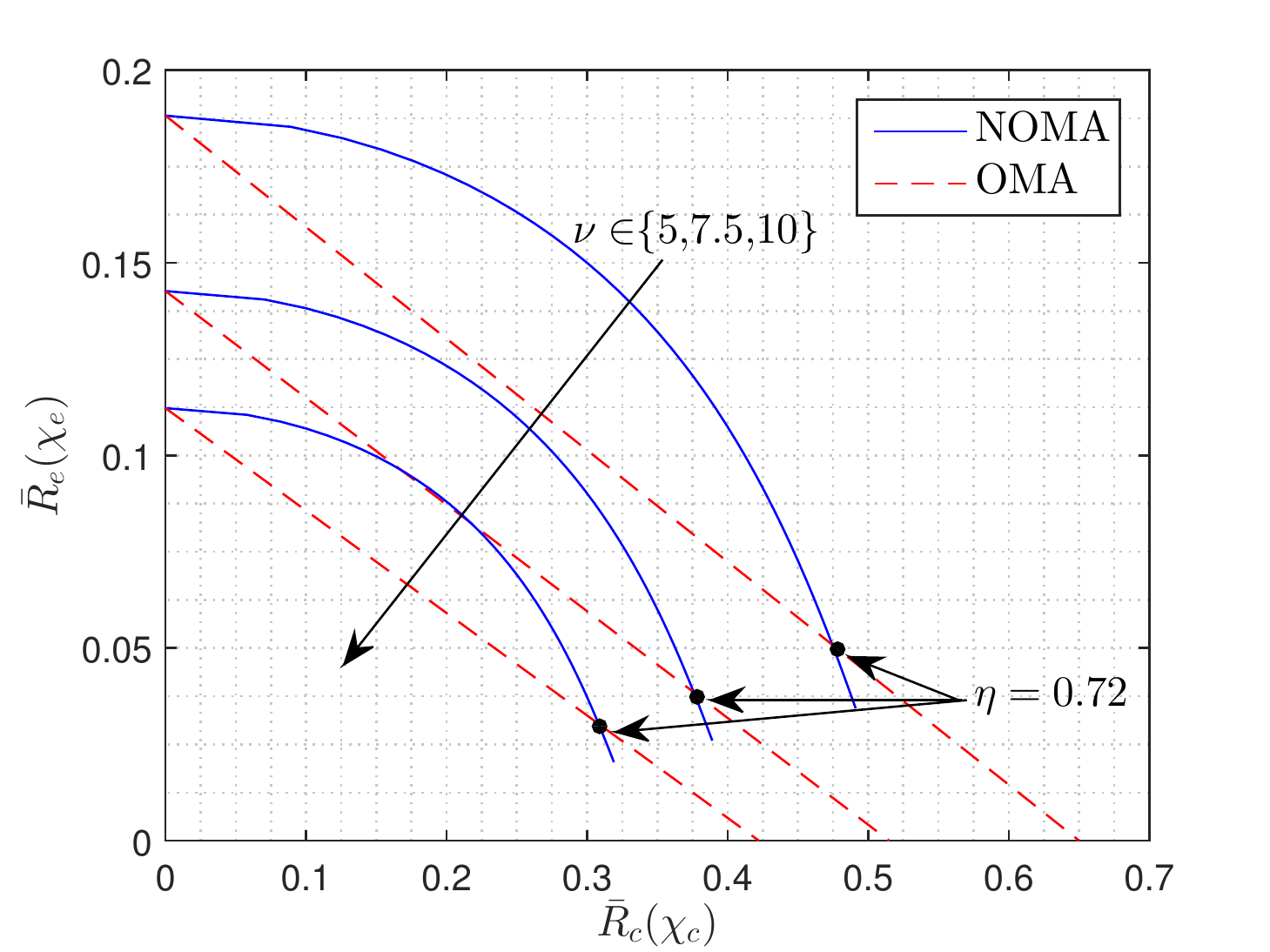} \hspace{-.7cm} 
 \includegraphics[width=.35\textwidth]{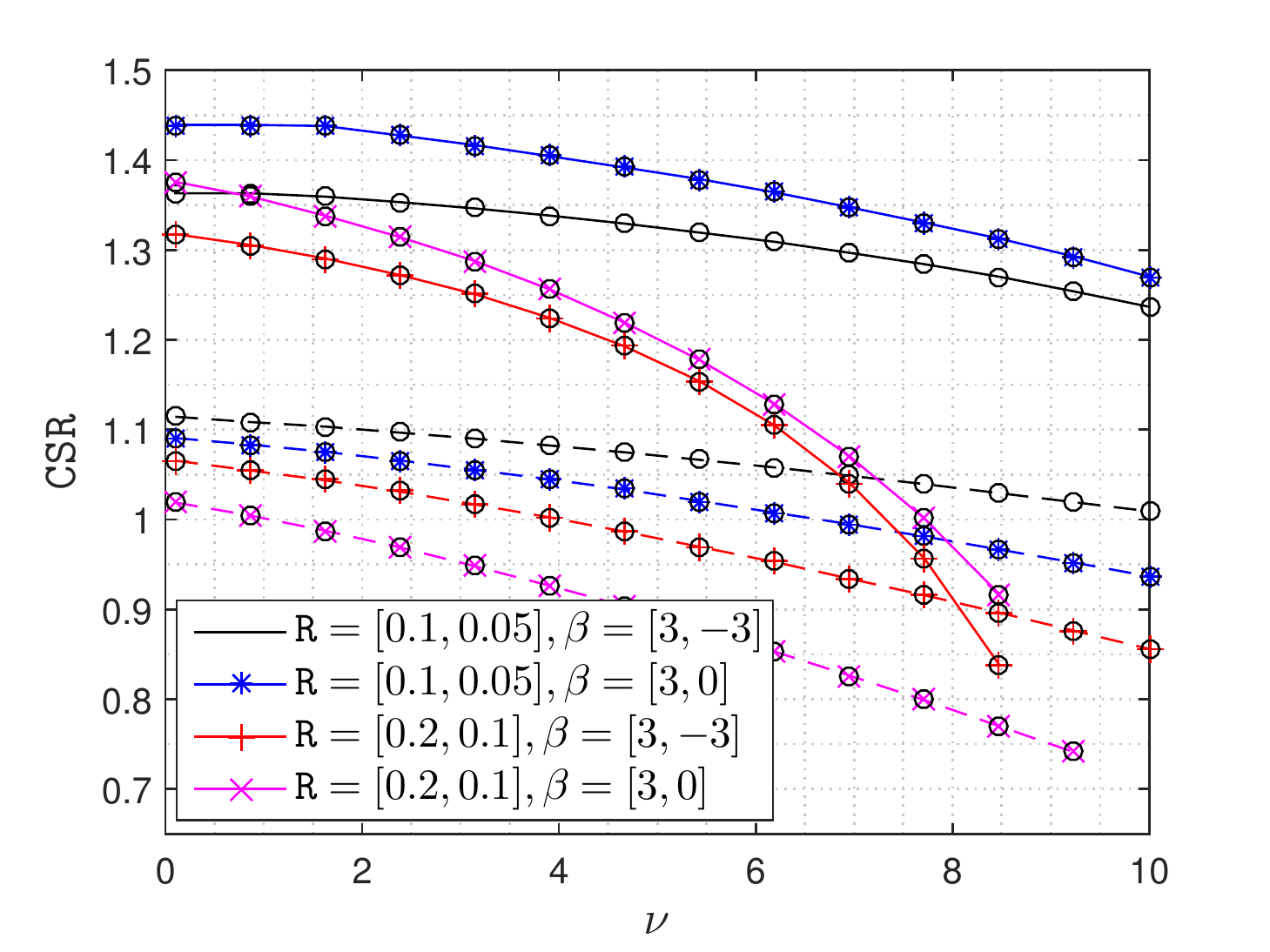} \hspace{-.7cm}
  \includegraphics[width=.35\textwidth]{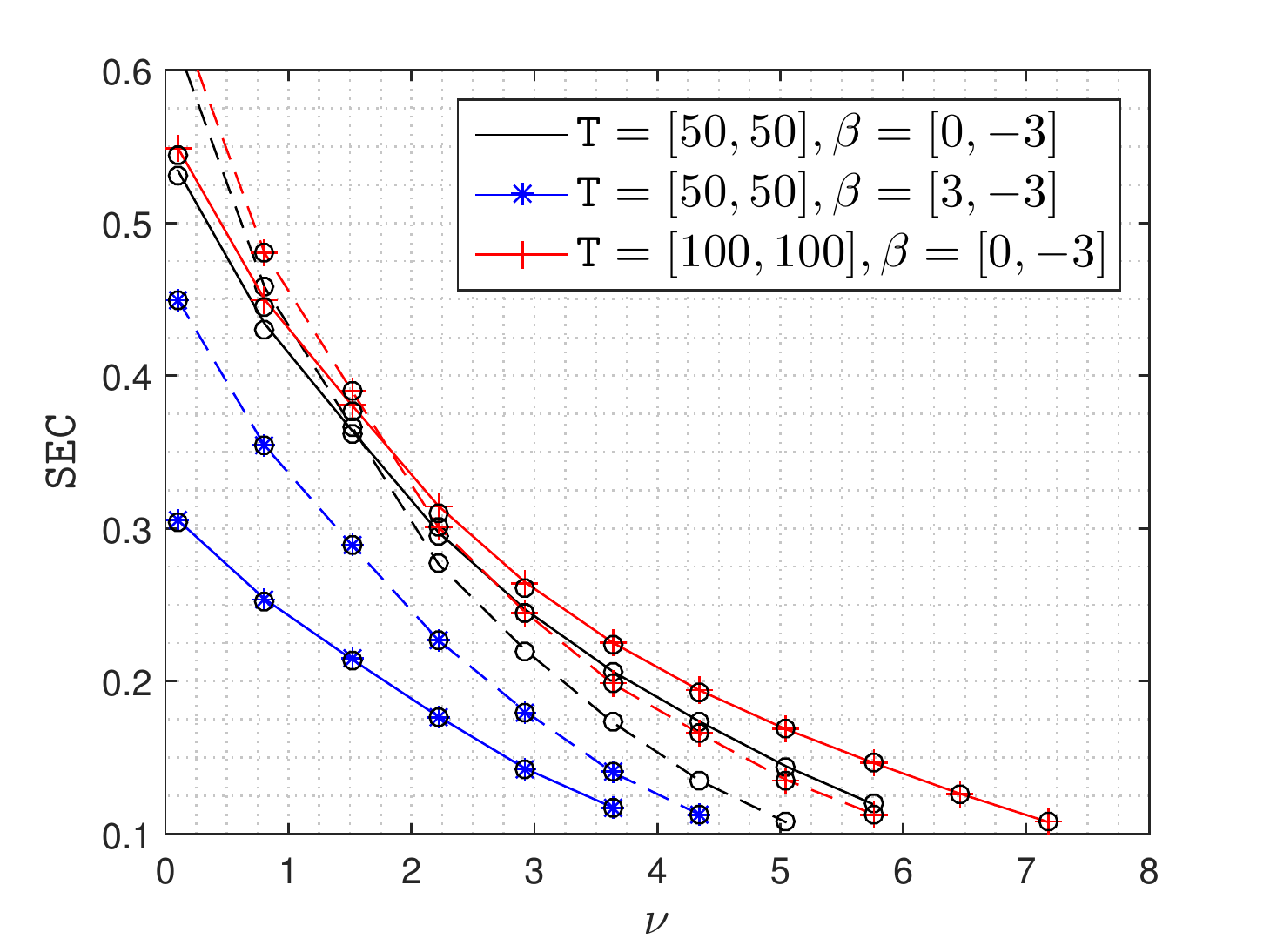}\vspace{-.25cm}
\caption{Rate region of the CC and CE users in NOMA and OMA for $\beta=[\beta_c,\beta_e]=[3, 0]$ in dB (Left).  Maximum $\csr$ under rate  constraints where $\mathtt{R}=[\mathtt{R_c}, \mathtt{R_e}]$ (Middle) and maximum $\sec$ under minimum $\ec$ constraints (Right) for $(\mathtt{O_c},\mathtt{O_e})=(0.2,0.2)$ and $(\bar{\varrho}_c,\bar{\varrho}_e)=(0.05,0.05)$ where $\mathtt{T}=[\mathtt{t_c}, \mathtt{t_e}]$. The solid and dashed lines correspond to the NOMA and OMA, respectively. }
\label{fig:RateRegion_CSEoptimization}\vspace{-3mm}
\end{figure*}
Fig. \ref{fig:Rate_Delay_Outage} verifies that the transmission rate outage probabilities  and upper bounds of the mean delay outage probabilities of the CC and CE users closely match with the simulation results. The outage probabilities  follow the trends opposite  to the  mean transmission rates. 
The outage probabilities of the CC user degrade with {the} increase of $\tau$ because of two reasons: 1) increase in the mean number of CC  users  (i.e., degraded scheduling probability) and 2) decrease in the success probability of CC users. However, a similar direct trend is not visible for the CE user with respect to $\tau$.  {This is because increasing $\tau$ results in the decrease of both the mean number of CE users (i.e., improved scheduling instances) and  success probability for the CE users. }

{ Fig. \ref{fig:RateRegion_CSEoptimization} (Left) illustrates the transmission rate region for the CC and CE users under the NOMA and OMA systems. For OMA, the rate region is the linear combination of maximum transmission rates of the CC and CE users where $\eta=1$ and $\eta=0$ correspond to their maximum transmission rates, respectively. Figure depicts that the performance gains of both the CC and CE users given in \eqref{eq:NOMA_Gain} are above unity for $\eta\leq 0.72$ which implies that the intersection of $\Theta_c(\eta)$ and $\Theta_e(\eta)$ are non-empty for $\eta\in[0,72]$. Further, it can be clearly observed that the NOMA performance gains are independent of $\nu$, whereas the transmission rate regions scale down with the increase of $\nu$. This is because increasing $\nu$ lowers the scheduling probabilities of these users, which as a result, affects their transmission rates but not their relative performance gains.  Besides, the optimal power allocation with respect to the CC users is $\hat\theta$ for the NOMA system. At this point, the CE users receive a non-zero transmission rate as they are allocated with the remaining power of $1-\hat\theta$. Hence, unlike the OMA, the CE users can receive a non-zero transmission rate when the CC users' transmission rate is maximum under the NOMA system. }

{Fig. \ref{fig:RateRegion_CSEoptimization} (Middle) shows the achievable $\csr$ vs. $\nu$ under the minimum transmission rate constraints and Fig \ref{fig:RateRegion_CSEoptimization} (Right) shows the achievable $\sec$ vs. $\nu$ under the minimum $\ec$ constraints. The circular markers correspond to the proposed solutions and the curves correspond to the exact optimal solutions which are obtained through the brute-force search.  We observe that the maximum achievable $\csr$ and $\sec$ drop with the increase of $\nu$ for both NOMA and OMA systems.}  This is because the mean transmission rates of both CC and CE users drop with the increase of $\nu$ (i.e., the increase of the number of users sharing the same RB).  This restricts the feasible range of $\theta$ and thus that of both $\csr$ and $\sec$. The middle figure depicts that the NOMA provides better $\csr$ as compared to OMA. The $\csr$ drops at a higher rate when the minimum required transmission rates are higher as it causes  the power consumption {to increase with the increase in $\nu$ aggressively}.  {Fig \ref{fig:RateRegion_CSEoptimization} (Right)}  depicts that the NOMA provides better $\sec$ for lower $\sir$ threshold at higher values of $\nu$ as compared to OMA. This is because {their scheduling probabilities predominantly determine the packet service rates of CC and CE users (thus the $\sec$)  for the lower values of $\sir$ thresholds.} 
\section{Conclusion}
This paper has provided a comprehensive analysis of downlink two-user NOMA enabled cellular networks for both RT and NRT services. In particular, a new 3GPP-inspired user ranking technique has been proposed wherein the CC and CE users are paired for the non-orthogonal transmission.  
To the best of our knowledge, this is the first stochastic geometry-based approach to analyze downlink NOMA using  a 3GPP-inspired user ranking scheme
that depends upon both the link qualities from the serving and dominant interfering BSs. Unlike the ranking techniques used in the literature, the proposed technique ranks users accurately with distinct link qualities{,} which is vital to obtain  performance gains in NOMA. For the proposed user ranking, we first derive the moments and approximate distributions of the meta distributions for the CC and CE users. Next, we obtain the distributions of their transmission rates and mean delays under random scheduling, which are then used to characterize $\csr$ for NRT service and $\sec$ for RT service, respectively. Using these results, we investigate two RA techniques with objectives to maximize $\csr$ and $\sec$.  
The numerical results demonstrated that NOMA{,} along with the proposed user ranking technique{,} results in a significantly higher $\csr$ and improved transmission rate region for  the CC and CE users as compared to OMA. Besides, our results showed  that NOMA provides improved $\sec$ as compared to OMA for the higher user density. 
The natural extension of this work is to analyze the $N$-user downlink NOMA in cellular networks. Besides, one could also employ the proposed way of user partitioning to analyze the transmission schemes relying on the partition of users based on their perceived link qualities, such as SFR.
\section*{Appendix}
\subsection{Proof of Lemma \ref{lemma:DistanceDistributions}}
\label{app:DistanceDistributions}
Using \eqref{eq:pdf_RoRd} and the definitions of $\Psi_{cc}$ and $\Psi_{ce}$ given in \eqref{eq:CC_CE_pps}, it is easy to see that a uniformly distributed user  in the typical cell $V_o$ is the typical CC user if $R_o\leq \tau R_d$ and the typical CE user if $R_o>\tau R_d$. Therefore, the probability that the typical user is the CC user becomes $\P\left[R_o\leq R_d\tau\right]=$
\begin{align*}
&(2\pi\rho\lambda)^2\int\limits_0^\infty\int\limits_0^{ r_d\tau}r_or_d\exp(-\pi\lambda\rho r_d^2){\rm d}r_o{\rm d}r_d=\tau^2.
\end{align*} 
and thus the probability that the typical user is the CE user becomes $1-\tau^2$.
Using Eq. \eqref{eq:pdf_RoRd} and $\P[R_o\leq R_d\tau]=\tau^2$, we obtain $\cdf$ of $R_o$ for the CC user as $F^{\text{c}}_{R_o}(r_o)=$
\begin{align}
\P[R_o\leq r_o|R_o<R_d\tau]&=\frac{(2\pi\rho\lambda)^2}{\tau^2}\int\limits_0^{r_o}\int\limits_{\frac{u}{\tau}}^\infty u v\exp(-\pi\rho\lambda v^2){\rm d}v{\rm d}u\nonumber\\
&=1-\exp\left(-\pi\rho\lambda{r_o^2}/{\tau^2}\right), 
\label{eq:CDF_Ro_proof}
\end{align} 
for $r_o\geq 0$. Using $R_o\leq R_d\tau$, we obtain the $\cdf$ of $R_d$ of CC user as
\begin{align}
F^{\text{c}}_{R_d\mid R_o}\left(r_d\mid r_o\right)&=\P[R_d\leq r_d\mid R_d>r_o/\tau]\nonumber\\
&=1-\exp\left(-\pi\rho\lambda\left(r_d^2-{r_o^2}/{\tau^2}\right)\right),
\label{eq:CDF_Rd_given_Ro_proof}
\end{align}
$\text{for}~r_d\geq \frac{r_o}{\tau}$. Therefore, the joint $\pdf$ of $R_o$ and $R_d$ for the CC user given in \eqref{eq:pdf_RoRd_CC} directly follows using the Bayes' theorem along with \eqref{eq:CDF_Ro_proof} and  \eqref{eq:CDF_Rd_given_Ro_proof}.
 Similarly, using \eqref{eq:pdf_RoRd} and $\P[R_o> R_d\tau]=1-\tau^2$, we obtain  the $\cdf$ of $R_o$ for the CE user as 
\begin{align}
F^{\text{e}}_{R_o}(r_o)&=\P[R_o\leq r_o\mid R_o>R_d\tau]\nonumber\\
&=\frac{(2\pi\rho\lambda)^2}{1-\tau^2}\int_0^{r_o}\int_{u}^{\frac{u}{\tau}} uv\exp(-\pi\rho\lambda v^2){\rm d}v{\rm d}u\nonumber\\
&=1-\frac{1-\tau^2\exp\left(-\pi\rho\lambda{r_o^2}(\tau^{-2}-1)\right)}{(1-\tau^2)\exp\left(\pi\rho\lambda{r_o^2}\right)}
\label{eq:CDF_Ro_proof_1}
\end{align}
for $r_o>0$. 
Now, the conditional $\cdf$ of $R_d$ given $R_o$ for the CE user can be determined as
\begin{align}
F^{\text{e}}_{R_d\mid R_o}(r_d\mid r_o)&=\P\left[R_d\leq r_d\mid R_d<\frac{r_o}{\tau}\right]\nonumber\\
&=\frac{\P[R_d\leq r_d]}{\P[R_d<\frac{r_o}{\tau}]}\nonumber\\
&=\frac{1-\exp\left(-\pi\rho\lambda(r_d^2-r_o^2)\right)}{1-\exp\left(-\pi\rho\lambda r_o^2(\tau^{-2}-1)\right)},
\label{eq:CDF_Rd_given_Ro_proof_1}
\end{align}
for $r_o<r_d<\frac{r_o}{\tau}$ and $F^{\text{e}}_{R_d\mid R_o}(r_d\mid r_o)=1$ for $r_d\geq\frac{r_o}{\tau}$. 
Finally, the joint $\pdf$ of $R_o$ and $R_d$ for the CE given in Eq. \eqref{eq:pdf_RoRd_CE} directly follows using the Bayes' theorem and Equations \eqref{eq:CDF_Ro_proof_1} and \eqref{eq:CDF_Rd_given_Ro_proof_1}. 
\subsection{Proof of Theorem \ref{thm:MetaDisMoment}}
\label{app:MetaDisMoment}
The success probability of the CC user at $\y\in V_o$ conditioned on $\y=R_o$ and $\Phi$ is 
\begin{align*}
\p_c(\chi_c\mid\y, \Phi)&=\P\left(\ncalE_c\mid\y,\Phi\right)\stackrel{(a)}{=}\prod\limits_{\x\in\Phi}\frac{1}{1+R_o^\alpha \chi_c \|\x-\y\|^{-\alpha}},
\end{align*}
 where step (a) follows from the independence of the fading gains. 
Let $\x_d=\arg\min_{\x\in\Phi}\|\x-\y\|$ and $\tilde{\Phi}=\Phi\setminus\{\x_d\}$. Recall $R_d=\|\x_d\|$. The $b$-th moment of $p_c(\chi_c|\y,\Phi)$ becomes $M^{c}_b(\chi_c)$
\begin{align*}
&=\nbbE_{R_o,\Phi}\left[\prod\limits_{\x\in{\Phi}}\frac{1}{(1+R_o^\alpha \chi_c \|\x-\y\|^{-\alpha})^b}\right],\\
&=\nbbE_{R_o,R_d}\left[\E_{\tilde{\Phi}}\left[\prod\limits_{\x\in\tilde{\Phi}}\frac{(1+\chi_c({R_o}/{R_d})^{\alpha})^{-b}}{(1+R_o^\alpha \chi_c \|\x-\y\|^{-\alpha})^b}\mid \x_d\right]\right],\\
&\stackrel{(a)}{=}\E_{R_o,R_d}\left[\frac{\exp\left(-\lambda\int_{\ncalB_{\y}(R_d)}\left[1-(1+R_o^\alpha \chi_c \|\x-\y\|^{-\alpha})^{-b}\right]{\rm d}\x\right)}{(1+\chi_c({R_o}/{R_d})^{\alpha})^b}\right],\\
&=\E_{R_o,R_d}\left[\frac{\exp\left(-\pi\lambda R_o^2\tilde{\ncalZ}_b\left(\chi_c,{R_o}/{R_d}\right)\right)}{(1+\chi_c({R_o}/{R_d})^{\alpha})^b}\right],
\end{align*}
 where $\ncalB_{\y}^c(r)=\nbbR^2\setminus\ncalB_{\y}(r)$, $\tilde{\ncalZ}_b\left(\chi_c,a\right)=\chi_c^\delta\int_{\chi_c^{-\delta} a^{-2}}^\infty [1-(1+t^{-\frac{1}{\delta}})^{-b}]{\rm d}t,$ and
 step (a) follows by approximating $\tilde{\Phi}$  with the homogeneous PPP with density $\lambda$ outside of the disk $\ncalB_\y\left(R_d\right)$ and the probability generating functional of PPP. Now, using the joint $\pdf$ of $R_o$ and $R_d$ for the CC user given in \eqref{eq:pdf_RoRd_CC}, we get 
\begin{align*}
M_b^{c}(\chi_c)=\frac{(2\pi\rho\lambda)^2}{\tau^2}&\int_0^\infty r_d\exp(-\pi\rho\lambda r_d^2)\times\\
\int_0^{\tau r_d}&\frac{\exp\left(-\pi\lambda r_o^2\tilde\ncalZ_b\left(\chi_c,{r_o}/{r_d}\right)\right)}{(1+\chi_c({r_o}/{r_d})^{\alpha})^b}r_o{\rm d}r_o{\rm d}r_d,\\
\stackrel{(a)}{=}\frac{2(\pi\rho\lambda)^2}{\tau^2}&\int_0^{\tau^2}\frac{1}{(1+\chi_cv^{\frac{1}{\delta}})^b}\times\\
\int_0^\infty  r_d^3&\exp\left(-\pi\lambda r_d^2\left[\rho + \tilde\ncalZ_b\left(\chi_c,v^\frac{1}{2}\right)\right]\right){\rm d}r_d{\rm d}v,
\end{align*} 
where step (a) follows using the substitution $(r_o/r_d)^\alpha=v^\frac{1}{\delta}$ and the exchange of the integral orders.
Further, solving the inner integral gives $M_b^c(\chi_c)$ as in \eqref{eq:MetaDisMoment_CC_CE} such that $\ncalZ(\chi_c,v)=\tilde{\ncalZ}(\chi_c,v^{\frac{1}{2}})$.
Following similar steps and using the joint $\pdf$ of $R_o$ and $R_d$ given in \eqref{eq:pdf_RoRd_CE}, the $b$-th moment of the conditional success probability for the CE user can be obtained as 
\begin{align*}
M_b^{e}(\chi_e)=\frac{(2\pi\rho\lambda)^2}{1-\tau^2}&\int_0^\infty r_d\exp(-\pi\rho\lambda r_d^2)\times\\
\int_{\tau r_d}^{r_d} &\frac{\exp\left(-\pi\lambda r_o^2\tilde\ncalZ_b\left(\chi_e,{r_o}/{r_d}\right)\right)}{(1+\chi_e({r_o}/{r_d})^{\alpha})^b}r_o{\rm d}r_o{\rm d}r_d,\\
=\frac{2(\pi\rho\lambda)^2}{1-\tau^2}&\int_{\tau^2}^1 \frac{1}{(1+\chi_ev^{\frac{1}{\delta}})^b}\times\\
\int_0^\infty r_d^3 &\exp\left(-\pi\lambda r_d^2\left[\rho + v\tilde\ncalZ_b\left(\chi_e,v^\frac{1}{2}\right)\right]\right){\rm d}r_d{\rm d}v,
\end{align*}
Finally, solving the inner integral gives $M_b^e(\chi_e)$  as in \eqref{eq:MetaDisMoment_CC_CE} where $\ncalZ(\chi_e,v)=\tilde{\ncalZ}(\chi_e,v^{\frac{1}{2}})$.
\subsection{Proof of Lemma \ref{lemma:Area_CC_CE_Region}}
\label{app:CCCERegion}
The $n$-th moment of the area of a random set $A\subset\R^2$ can be obtained as \cite{robbins1944}
\begin{figure*}
 \centering
\hspace{-7mm}\includegraphics[width=.27\textwidth]{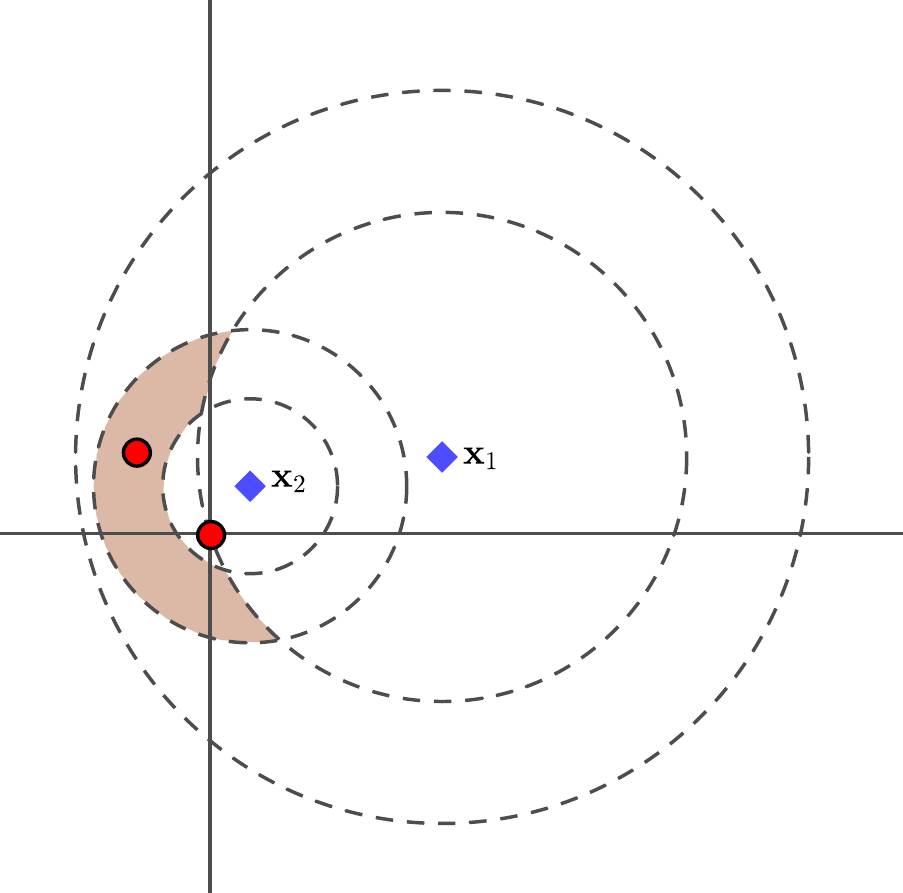} 
\includegraphics[width=.33\textwidth]{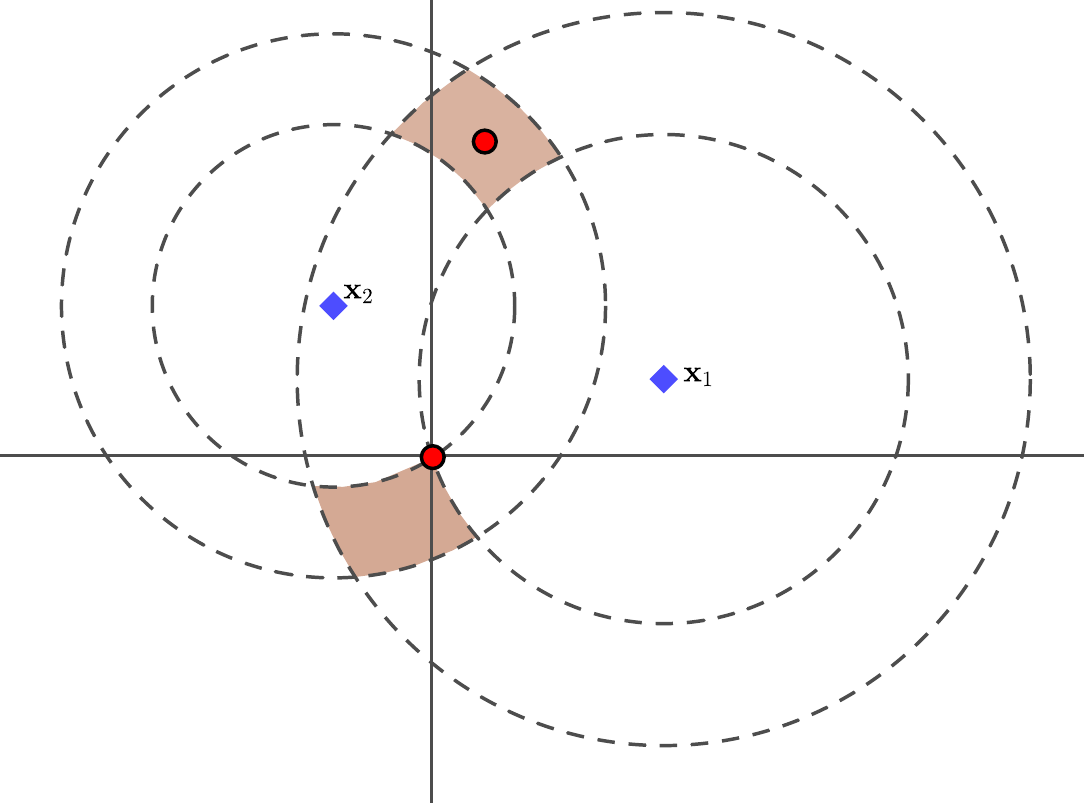} \includegraphics[width=.33\textwidth]{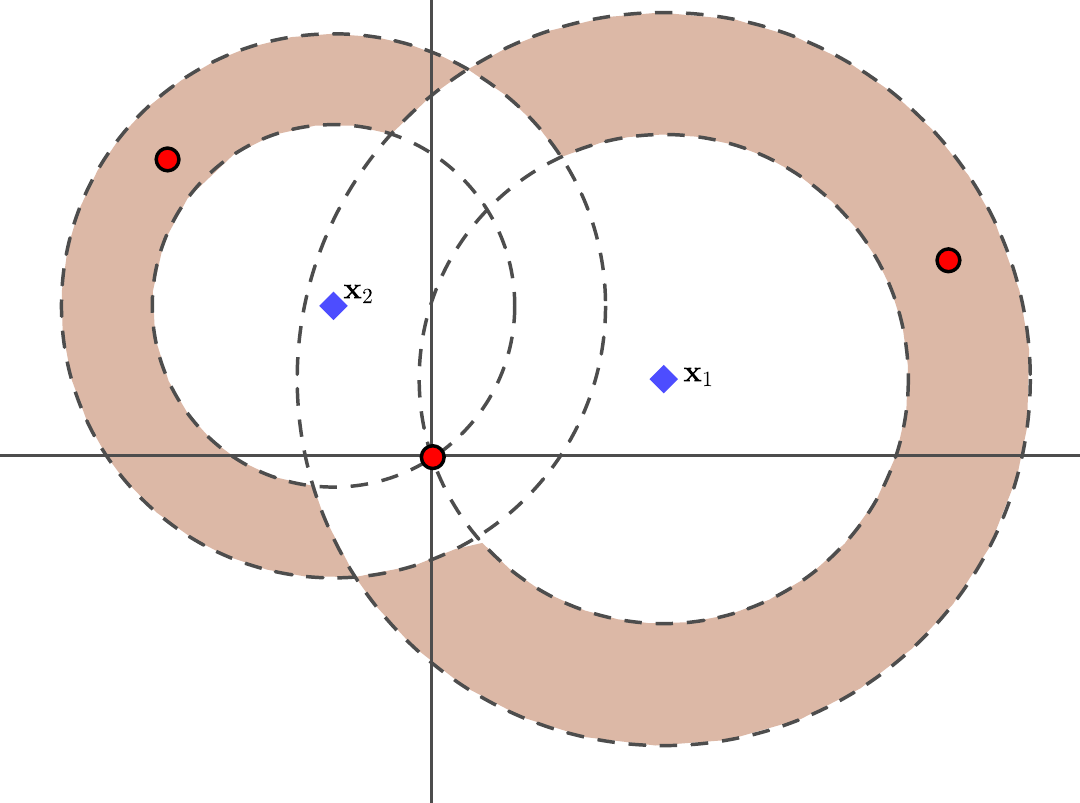}\vspace{-.25cm}\\
~~~~Case 1~~~~~~~~~~~~~~~~~~~~~~~~~~Case 2a~~~~~~~~~~~~~~~~~~~~~~~~~~Case 2b~~~~~~~~~
\caption{Illustration of the cases when $\{\x_1,\x_2\}\in V_{oe}$ given $\{\x_1,\x_2\}\in V_{o}$ (i.e. $\Phi(\ncalC_o)=0$). The blue diamonds represent the locations $\{\x_1,\x_2\}$, whereas the red dots represent the locations of serving and dominant BSs.} 
\label{fig:Cond_X1X2inVoe}
\end{figure*}
\begin{equation}
\mathbb{E}[|A|^n]=\int_{\mathbb{R}^d}\dots\int_{\mathbb{R}^d}\mathbb{P}[\x_1,\dots, \x_n\in A]{\rm d}\x_1\dots{\rm d}\x_n.
\label{eq:Moment_Random_Set}
\end{equation} 
Let $\ncalB_{\x}$ and $\tilde{\ncalB}_{\x}$ be the disks of radii $r$ and ${r}{\tau^{-1}}$ both centered at $\x\equiv(r,\theta)$ and let $\ncalA_{\x}$ be the annulus formed by the two disks $\tilde{\ncalB}_{\x}$ and $\ncalB_{\x}$.
By definition, the  point $\x$ belongs to $V_{oc}$ only if $\Phi({\ncalB}_{\x})=0$ (i.e., $\x\in V_o$) and $\Phi(\ncalA_{\x})=0$. Thus, we have 
\begin{align*}
\nbbP[\x\in V_{oc}]&=\nbbP[\x\in V_{oc}\mid \x\in V_{o}]\nbbP[\x\in V_{o}]\\&\stackrel{(a)}{=}\exp\left(-\lambda|\ncalA_\x|\right)\exp(-\lambda|\ncalB_\x|)\\
&=\exp(-\lambda|\tilde{\ncalB}_\x|),
\label{eq:Prob_YinVoc}\numberthis
\end{align*}
where step (a) follows from the independence property and the {\em void probability} of the PPP.
However, the point $\x$ belongs to $V_{oe}$ only if $\Phi\left(\ncalB_{\x}\right)=0$ and $\Phi\left(\ncalA_{\x}\right)\neq 0$.  Thus, we have
\begin{align*}
\nbbP[\x\in V_{oe}]&=\nbbP[\x\in V_{oe}\mid \x\in V_{o}]\nbbP[\x\in V_{o}]\\
&\stackrel{(a)}{=}\left[1-\exp\left(-\lambda|\ncalA_\x\right)|\right]\exp(-\lambda|\ncalB_\x|)\\
&=\exp(-\lambda|\ncalB_\x|)-\exp(-\lambda|\tilde{\ncalB}_\x|),
\label{eq:Prob_YinVoe}\numberthis
\end{align*}
where step (a) follows from the independence property and the {\em void probability} of the PPP.
Thus, using \eqref{eq:Moment_Random_Set},  \eqref{eq:Prob_YinVoc} and \eqref{eq:Prob_YinVoe},  we get the mean areas of the CC and CE regions as in \eqref{eq:Mean_CCCERegion}. 

Similarly, the probability of $\{\x_1,\x_2\}\in V_{oc}$ can be directly determined as $
\nbbP[\x_1,\x_2\in V_{oc}] = \exp(-\lambda|\ncalC_3|)$, 
where $\ncalC_3=\tilde{\ncalB}_{\x_1}\cup \tilde{\ncalB}_{\x_2}$.
Thus, using this and \eqref{eq:Moment_Random_Set}, we can easily obtain the second moment of the area of the CC region as in \eqref{eq:SecondMoment_CCCERegion}.
Now, we require the probability of $\{\x_1,\x_2\}\in V_{oe}$ for the evaluation of the second moment of area of the CE region. This requires the careful consideration of the intersection of  various sets of two disks. 
Let $d=\|\x_1-\x_2\|=(r_1^2+r_2^2-2r_1r_2\cos(\theta_1-\theta_2))^{\frac{1}{2}}$. Fig. \ref{fig:Cond_X1X2inVoe} shows two cases wherein Case 1 occurs if $d\leq {\tau^{-1}}|r_1-r_2|$, otherwise Case 2 occurs.  Now, we derive $\nbbP[\x_1,\x_2\in V_{oe}]$ for these cases in the following.
\newline{\em Case 1:} In this case, $\{\x_1,\x_2\}\in V_{oe}$ if $\Phi\left(\ncalC_o\right)=0$ (i.e. $\{\x_1,\x_2\}\in V_o$) and if either $\Phi\left(\ncalC_2\setminus\ncalC_o\right)\neq 0\text{~for~}r_2\leq r_1\text{~or~}\Phi\left(\ncalC_1\setminus\ncalC_o\right)\neq 0 \text{~for~}r_1<r_2,$ 
where $\ncalC_o=\ncalB_{\x_1}\cup\ncalB_{\x_2}$, $\ncalC_1=\tilde{\ncalB}_{\x_1}\cup\ncalB_{\x_2}$ and $\ncalC_2=\ncalB_{\x_1}\cup\tilde{\ncalB}_{\x_2}$.
Fig. \ref{fig:Cond_X1X2inVoe} (Left) depicts the second condition of this case for $r_2\leq r_1$. Thus, we get 
\begin{align}
\nbbP[\x_1,\x_2\in V_{o}] &= \exp(-\lambda|\ncalC_o|),\label{eq:Prob_Y1Y2inVoe1}\\
\nbbP[\x_1,\x_2\in V_{oe}\mid \x_1,\x_2\in V_{o}]&= \exp\left(-\lambda(|\ncalC_1|-|\ncalC_o|)\right)\mathbbm{1}_{r_1\leq r_2}\nonumber\\
+ \exp(-\lambda &(|\ncalC_2|-|\ncalC_o|))\mathbbm{1}_{r_2< r_1}.
\label{eq:Prob_Y1Y2inVoe2_Case1}
\end{align}
Therefore, using \eqref{eq:Prob_Y1Y2inVoe1} and \eqref{eq:Prob_Y1Y2inVoe2_Case1} and the independence property of the PPP, we obtain $\nbbP[\x_1,\x_2\in V_{oe}]=$
\begin{align}
 &\exp\left(-\lambda|\ncalC_1|\right)\mathbbm{1}_{r_1\leq r_2} + \exp\left(-\lambda|\ncalC_2|\right)\mathbbm{1}_{r_2< r_1}.\label{eq:Prob_Y1Y2inVoe_Case1}
\end{align}
{\em Case 2:} 
In this case, $\{\x_1,\x_2\}\in V_{oe}$ if $\Phi\left(\ncalC_o\right)=0$ and if one of the following conditions is met: 
2a) $\Phi\left(\ncalA_o\right)\neq 0$, and 2b) $\Phi\left(\ncalC_3\setminus\ncalC_1\right)\neq 0$, and $\Phi\left(\ncalC_3\setminus\ncalC_2\right)\neq 0$,
 where $\ncalA_o=\ncalA_{\x_1}\cap\ncalA_{\x_2}$. The above two cases are depicted in Fig. \ref{fig:Cond_X1X2inVoe} (Middle and Right).
 Thus, $\nbbP[\x_1,\x_2\in V_{oe}\mid \x_1,\x_2\in V_{o}]=$\vspace{-.2cm}
 \begin{align}
&[1-\exp(-\lambda|\ncalA_{o}|)] +\exp(-\lambda|\ncalA_{o}|)\times\nonumber\\
&[1-\exp(-\lambda|\ncalC_3\setminus\ncalC_1|)][1-\exp(-\lambda|\ncalC_3\setminus\ncalC_2|)].\label{eq:Prob_Y1Y2inVoe2_Case2}
\end{align}
We have $|\ncalA_o|+|\ncalC_o|=|\ncalC_1|+|\ncalC_2|-|\ncalC_3|$. Therefore, using \eqref{eq:Prob_Y1Y2inVoe1} and \eqref{eq:Prob_Y1Y2inVoe2_Case2}, we obtain $\nbbP[\x_1,\x_2\in V_{oe}]=$ 
\begin{align}
&[\exp(-\lambda(|\ncalC_2|-|\ncalC_3|))-1][\exp(-\lambda|\ncalC_1|)-\exp(-\lambda|\ncalC_3|)]\nonumber\\
+&[\exp(-\lambda|\ncalC_o|)-\exp(-\lambda(|\ncalC_1|+|\ncalC_2|-|\ncalC_3|))].\label{eq:Prob_Y1Y2inVoe_Case2}
\end{align}
Now, we derive the areas of $\ncalC_o$, $\ncalC_1$, $\ncalC_2$, and $\ncalC_3$. 
Let $U(z_1,z_2,u)$ be the area of the union of two circles of radii $z_1$ and $z_2$ with the angular separation of $u$ between their centers w.r.t.~their intersection points. Thus, $U(z_1,z_2,u)$ can be easily obtained as  given in Lemma \ref{lemma:Area_CC_CE_Region}. 
Without loss of generality, we set $\theta_2=0$ and $0\leq \theta_1<\pi$.
The two disks in $\ncalC_o$, $\ncalC_1$, $\ncalC_2$ and $\ncalC_3$ intersect (if they interest at all) at angles {\small $u_o=\theta_1$, 
\begin{align*}
u_1&=\arccos\left((\tau^{-1}-\tau)\frac{r_1}{2r_2}+\tau\cos(\theta_1)\right),\\ u_2&=\arccos\left((\tau^{-1}-\tau)\frac{r_2}{2r_1}+\tau\cos(\theta_1)\right) \\\text{and~} u_3&=\arccos\left((1-\tau^2)\frac{r_1^2+r_2^2}{2r_1r_2}+\tau^2\cos(\theta_1)\right),
\end{align*}}
respectively.
 The evaluation of $|\ncalC_3|$ is required only for Case 2 wherein $\tilde{\ncalB}_{\x_1}$ and $\tilde{\ncalB}_{\x_2}$ alway intersect. Thus, by definition, we have 
$|\ncalC_o|=U(r_1,r_2,u_o)$  and $|\ncalC_3|=U\left({r_1}{\tau^{-1}},{r_2}{\tau^{-1}},u_3\right)$.
However, $\tilde{\ncalB}_{\x_1}$ and ${\ncalB}_{\x_2}$ (${\ncalB}_{\x_1}$ and $\tilde{\ncalB}_{\x_2}$) intersect only if $\frac{r_1}{\tau}<d+r_2$ ($\frac{r_2}{\tau}<d+r_1$), otherwise $\ncalB_{\x_2}\subset\tilde{\ncalB}_{\x_1}$ ($\ncalB_{\x_1}\subset\tilde{\ncalB}_{\x_2}$).  Thus, we get
{
\begin{align*}
|\ncalC_1| =  
\begin{dcases}
U\left({r_1}{\tau^{-1}},r_2,u_1\right),\hspace{-.2cm} &\text{if~} {r_1}{\tau^{-1}}<d+r_2,\\
\pi{r_1^2}{\tau^{-2}}, &\text{otherwise},
\end{dcases}
\end{align*}
\text{and}
\begin{align*}
|\ncalC_2| =  
\begin{dcases}
U\left(r_1,{r_2}{\tau^{-1}},u_2\right),\hspace{-.2cm}&\text{if~} {r_2}{\tau^{-1}}<d+r_1,\\
\pi{r_2^2}{\tau^{-2}}, &\text{otherwise}.
\end{dcases}
\end{align*}}
Finally, substituting above expressions in \eqref{eq:Prob_Y1Y2inVoe_Case1} and \eqref{eq:Prob_Y1Y2inVoe_Case2}, and then integrating \eqref{eq:Prob_Y1Y2inVoe_Case1} over domain $\{d\leq \tau^{-1}|r_1-r_2|\}$  (i.e., Case 1) and \eqref{eq:Prob_Y1Y2inVoe_Case2} over domain $\{d>\tau^{-1}|r_1-r_2|\}$ (i.e., Case 2), we get the second moment of the area of the CE region as in \eqref{eq:SecondMoment_CCCERegion}.
\subsection{Proof of Theorem \ref{thm:TransmissionRate_NOMA}}
\label{app:TransmissionRate_NOMA}
We first derive the mean transmission rate of the typical CC user at $\y\sim U(V_o)$ as $\bar{R}_c(\chi_c)$
\begin{align*}
&=\E_{\y,\Phi}\left[\frac{1}{N_{oc}}\log_2(1+\beta_c)\E\left[\mathbbm{1}_{\ncalE_c}(\sir_{c},\sir_{e})\mid\y,\Phi\right]\right],\\
&\stackrel{(a)}{=}\log_2(1+\beta_c)\E_{|V_{oc}|}\left[\sum_{n=1}^{\infty}\frac{1}{n}\P(N_{oc}=n\mid |V_{oc}|)\right]\times\\
&~~~~~~~~~~~~~~~~~~~~\E_{\y,\Phi}\left[p_c(\beta_c,\beta_e \mid \y,\Phi)\right],\\
&=\log_2(1+\beta_c)\sum_{n=1}^\infty\frac{1}{n}\P(N_{oc}=n)\E_{\y,\Phi}\left[p_c(\beta_c,\beta_e \mid \y,\Phi)\right],
\end{align*}
where step (a) follows using Assumption \ref{assumption:Independence_PVCellArea_SuccessProb}. From the definition of the meta distribution, we have $\E_{\y,\Phi}[p_c(\beta_c,\beta_e \mid \Phi)]=M_{1}^{c}(\chi_c,1)$ for $q_c=q_e=1$. 
Hence, using  the distribution of CC load given in \eqref{eq:PMF_Noc}, we obtained $\bar{R}_c(\chi_c)$ as in \eqref{eq:MeanTransmissionRate_CC_CE}. Similarly, using the distribution of CE load given in \eqref{eq:PMF_Noe}, we obtained the mean transmission rate $\bar{R}_e(\chi_e)$ of the typical CE user as in \eqref{eq:MeanTransmissionRate_CC_CE}.
Now, we obtain the distribution of the conditional transmission rate of the typical CC user as 
\begin{align*}
&\ncalR_c(\mathtt{r_{c}};\chi_c)=\P\left[p_c(\beta_c,\beta_e\mid\y,\Phi)\leq \mathtt{r_{c}} N_{oc}\log_2(1+\beta_c)^{-1}\right],\\
&\stackrel{(a)}{=}\E_{N_{oc}}\left[\P\left[p_c(\beta_c,\beta_e\mid\y,\Phi)\leq \mathtt{r_{c}} N_{oc}\log_2(1+\beta_c)^{-1}\mid N_{oc}\right]\right],\\
&\stackrel{(b)}{=}\E_{N_{oc}}\left[I\left(\min\left(\mathtt{r_{c}} N_{oc}\log_2(1+\beta_c)^{-1},1\right);\kappa_{1c},\kappa_{2c}\right)\right],
\end{align*}
where step (a) follows using Assumption \ref{assumption:Independence_PVCellArea_SuccessProb}, and step (b) follows using the beta approximation of the meta distribution of the success probability (see \eqref{eq:BetaApp}). Similarly, we obtain the distribution of the conditional transmission rate of the typical CE user as in \eqref{eq:RateCDF_UpperBound_CE}. 

\begin{IEEEbiography} [{\includegraphics[width=1in,height=1.25in,clip,keepaspectratio]{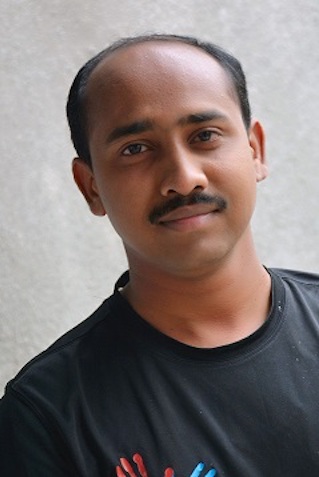}}]{Praful D. Mankar}	
(S'14--M'19) Praful received his B.E. degree in Electrical and Communication Engineering from Amravati University, MH, India, in 2006, and the M.Tech degree in Telecommunications Systems and Ph.D. degree in wireless networks from IIT Kharagpur, WB, India, in 2009 and 2016, respectively. 

He worked as a senior research fellow from Aug. 2009 to March 2015 and a research assistant from Jan. 2016 to Aug. 2017 at the G. S. Sanyal School of Telecommunications (G.S.S.S.T), IIT Kharagpur. 
From Sept. 2017 to June 2019, he worked as a postdoctoral research associate in Prof. Harpreet Dhillon's research group at the Department of Electrical and Computer Engineering at Virginia Tech, USA. His research interests primarily focus on the modeling and analysis of wireless networks using the tools of stochastic geometry.
\end{IEEEbiography}

\begin{IEEEbiography} [{\includegraphics[width=1in,height=1.25in,clip,keepaspectratio]{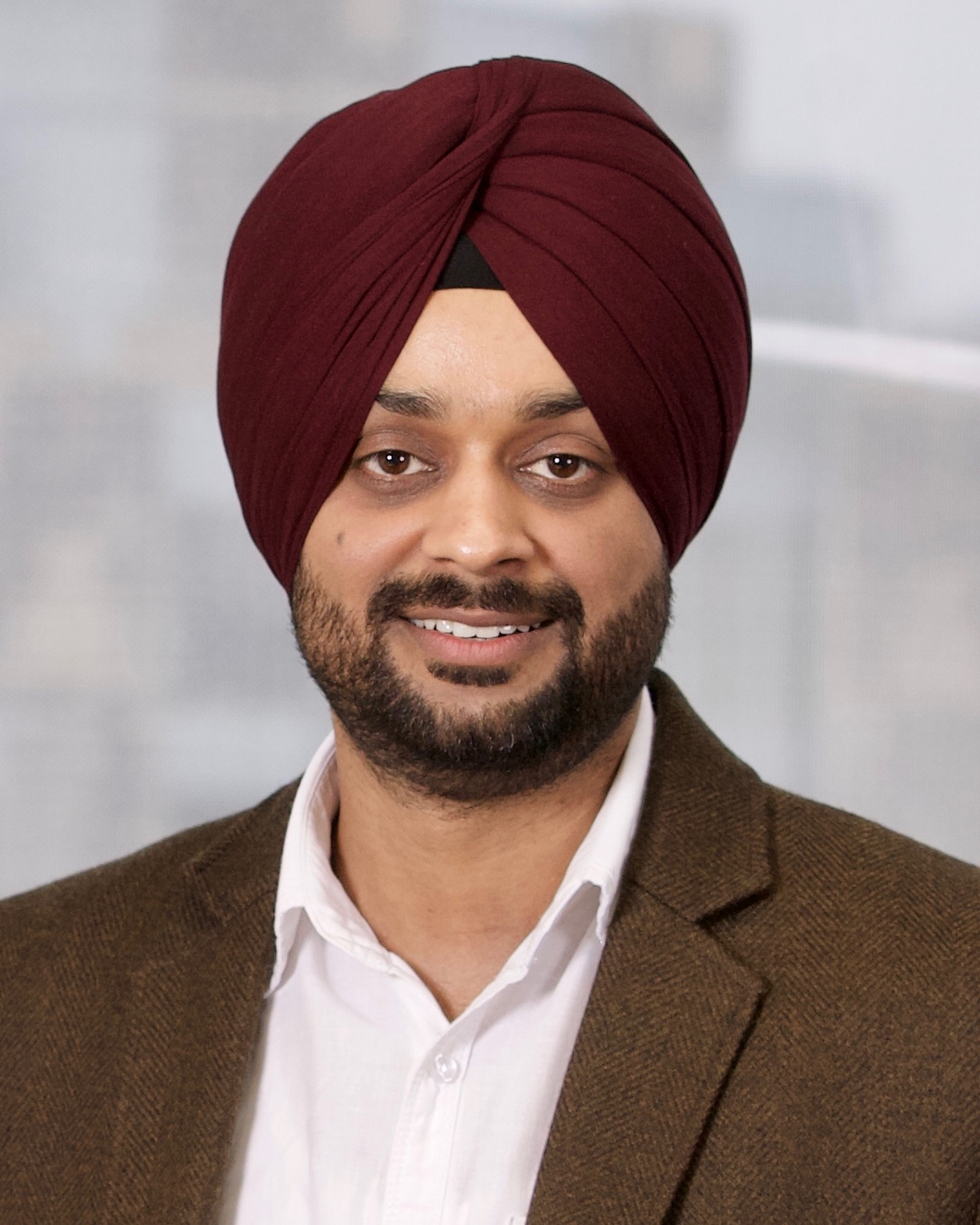}}]{Harpreet S. Dhillon}	
(S'11--M'13--SM'19) received the B.Tech. degree in electronics and communication engineering from IIT Guwahati in 2008, the M.S. degree in electrical engineering from Virginia Tech in 2010, and the Ph.D. degree in electrical engineering from the University of Texas at Austin in 2013. 

After serving as a Viterbi Postdoctoral Fellow at the University of Southern California for a year, he joined Virginia Tech in 2014, where he is currently an Associate Professor of electrical and computer engineering and the Elizabeth and James E. Turner Jr. '56 Faculty Fellow. His research interests include communication theory, wireless networks, stochastic geometry, and machine learning. He is a Clarivate Analytics Highly Cited Researcher and has coauthored five best paper award recipients including the 2014 IEEE Leonard G. Abraham Prize, the 2015 IEEE ComSoc Young Author Best Paper Award, and the 2016 IEEE Heinrich Hertz Award. He was named the 2017 Outstanding New Assistant Professor, the 2018 Steven O. Lane Junior Faculty Fellow, and the 2018 College of Engineering Faculty Fellow by Virginia Tech. His other academic honors include the 2008 Agilent Engineering and Technology Award, the UT Austin MCD Fellowship, and the 2013 UT Austin Wireless Networking and Communications Group leadership award. He currently serves as a Senior Editor for the {\sc IEEE Wireless Communications Letters} and an Editor for the {\sc IEEE Transactions on Wireless Communications} and the {\sc IEEE Transactions on Green Communications and Networking}.
\end{IEEEbiography}
\end{document}